\documentclass[%
 reprint,
nofootinbib,
 amsmath,amssymb,
 aps,
]{revtex4-1}

\usepackage{graphicx}
\usepackage{hyperref}


\usepackage{amsthm}
\usepackage{enumitem}
\usepackage{xcolor} 
\usepackage[framemethod=tikz]{mdframed}


\definecolor{warning_bgcol}{RGB}{252,248,229}
\definecolor{warning_textcol}{RGB}{111,89,54}
\definecolor{warning_linecol}{RGB}{248,235,207}
\definecolor{danger_bgcol}{RGB}{239,223,222}
\definecolor{danger_textcol}{RGB}{128,60,57}
\definecolor{danger_linecol}{RGB}{230,205,209}
\definecolor{success_bgcol}{RGB}{224,237,216}
\definecolor{success_textcol}{RGB}{68,104,60}
\definecolor{success_linecol}{RGB}{218,232,201}
\definecolor{info_bgcol}{RGB}{220,237,246}
\definecolor{info_textcol}{RGB}{58,100,126}
\definecolor{info_linecol}{RGB}{196,231,240}

\mdfdefinestyle{box_style}{
skipabove=\baselineskip,
skipbelow=\baselineskip,
innertopmargin=\baselineskip,
innerbottommargin=\baselineskip,
innerleftmargin=.4\baselineskip,
innerrightmargin=.4\baselineskip,
splittopskip =1.5\baselineskip,
splitbottomskip =\baselineskip,
roundcorner=.4\baselineskip
}
\mdfdefinestyle{warning_style}{
style=box_style,
backgroundcolor=warning_bgcol,
linecolor=warning_linecol,
fontcolor=warning_textcol,
}
\mdfdefinestyle{success_style}{
style=box_style,
backgroundcolor=success_bgcol,
linecolor=success_linecol,
fontcolor=success_textcol,
}
\mdfdefinestyle{danger_style}{
style=box_style,
backgroundcolor=danger_bgcol,
linecolor=danger_linecol,
fontcolor=danger_textcol,
}
\mdfdefinestyle{info_style}{
style=box_style,
backgroundcolor=info_bgcol,
linecolor=info_linecol,
fontcolor=info_textcol,
}

\newmdenv[style=warning_style]{warning}
\newmdenv[style=info_style]{info}
\newmdenv[style=danger_style]{danger}
\newmdenv[style=success_style]{success}


%
%


\theoremstyle{remark}
\newtheoremstyle{mythmstyle}%
{0pt}
{0pt}
{}
{}
{\bf}
{.}
{.5em}
{}

\theoremstyle{mythmstyle}
\newtheorem{thm}{Theorem}
\newtheorem{lem}[thm]{Lemma}

\surroundwithmdframed[style=info_style]{thm}
\surroundwithmdframed[style=info_style]{lem}
\surroundwithmdframed[style=info_style]{cor}
\surroundwithmdframed[style=info_style]{prop}
\surroundwithmdframed[style=warning_style]{defn}

\newcommand{\skproof}{Sketch of proof}
\makeatletter
\renewenvironment{proof}[1][\proofname]{\par
\pushQED{\qed}%
\normalfont \topsep6\p@\@plus6\p@\relax
\trivlist
\item\relax
{\bf
#1\@addpunct{.}}\hspace\labelsep\ignorespaces
}{%
\popQED\endtrivlist\@endpefalse
}
\makeatother

\begin{document}

\title{2D quantum computation with 3D topological codes}
\author{H\'ector Bomb\'in}
\affiliation{PsiQuantum, Palo Alto}
 
\begin{abstract}
I present a fault-tolerant quantum computing method for 2D architectures that is particularly appealing for photonic qubits. It relies on a crossover of techniques from topological stabilizer codes and measurement based quantum computation. In particular, it is based on 3D color codes and their transversal operations.
\end{abstract}

\maketitle

\section{Introduction}

It is well appreciated that a programmable quantum computer can only be constructed by using methods from the theory of quantum fault tolerance to deal with the noise that arises at all stages of the computation%
\footnote{
That is, unless we can find components that function reliably enough in the absence of any active error correction~\cite{brown:2016:quantum}.
}
~\cite{lidar:2013:quantum}. 
Topological stabilizer codes are the most promising route to achieving full fault-tolerant operation in the near term~\cite{campbell:2017:roads}. A major drawback of these codes, however, is that in two-dimensional architectures the native fault-tolerant gates are not a universal gate set for quantum computing. There are ways to get around this, the most prominent being magic state distillation (MSD)~\cite{bravyi:2005:universal}. MSD relies on building a much larger quantum computer and then using teleportation of states produced in the extra "magic state factories" to do the gates required for full universality. Despite considerable effort and progress in optimizing MSD techniques, the overheads are still extremely high, and the resources in terms of numbers of qubits and gates used for the MSD actually dominate the resource costs of doing the computation~\cite{campbell:2017:roads}.

Alternatively, it is known that by moving to three-dimensional architectures MSD can be avoided, and a universal gate set can be performed natively~\cite{bombin:2016:dimensional}. However, there are considerable practical problems with building a 3D array of interacting qubits.

The main result of this paper is that a native universal set of gates \emph{is} possible in a scalable 2D physical architecture. The result is built around the unique perspective offered by measurement based quantum computing (MBQC)~\cite{raussendorf:2001:one}, and is both inspired by and well-tailored to photonic quantum computing~\cite{rudolph:2017:optimistic} because its simplest realization makes use of the natural ability to delay photons.

\subsection{A dimensional puzzle}

Topological error correction, originally introduced in the foundational work of Kitaev~\cite{kitaev:1997:quantum}, is likely to play an important role in the ongoing technological race to build a quantum computer. This is particularly true for topological methods lying at the crossover with stabilizer-based approaches, a sweet spot where the versatility and simplicity of stabilizer techniques~\cite{gottesman:1996:stabilizer} joins hands with the locality and scalability of topological error correction~\cite{dennis:2002:tqm}.

The qubits of a topological stabilizer code form a lattice such that (i) the quantum operations required for error correction are highly localized, whereas (ii) logical information takes the form of delocalized degrees of freedom dependent on the overall topology of the lattice~\cite{bombin:2013:topological}.
There exist two intrinsic methods to compute with these topological degrees of freedom. The first one, code deformation, relies on modifying the topology of the system over time~\cite{dennis:2002:tqm, raussendorf:2007:deformation, bombin:2009:deformation, bombin:2010:twist, horsman:2012:surface, landahl:2014:quantum,yoder:2017:surface}. Unfortunately the resulting encoded operations are constrained to the Clifford group, and thus have to be supplemented with costly~\cite{campbell:2017:roads} magic state distillation~\cite{bravyi:2005:universal} to achieve universal computation.

The second option is using transversal gates%
\footnote{
Transversal gates are by no means unique to topological methods. However, in the topological context they naturally generalize to finite depth quantum circuits built out of gates involving a few neighboring qubits each~\cite{bravyi:2013:classification}. It is this generalized perspective that makes transversal gates `natural' for topological codes.
}.
Remarkably, the set of encoded gates achievable with transversal gates becomes less constrained as the number of spatial dimensions of the code grows~\cite{bombin:2013:self, bravyi:2013:classification}. Two-dimensional codes, the most interesting from a practical perspective, are also the most constrained: only Clifford operations are feasible. In three dimensions, by contrast, local operations, supplemented with global classical computation, are enough to achieve universal quantum computation. This is true in particular for color codes, a class of topological stabilizer codes with optimal transversality properties for every spatial dimensionality~\cite{bombin:2006:2dcc, bombin:2007:3dcc, bombin:2013:self, bombin:2015:gauge, bombin:2018:transversal}.

The three-dimensional scenario has further advantages. Not only it is possible to compute fault-tolerantly by purely topological means, but also all elementary encoded operations can be carried out by means of finite depth compositions of geometrically local gates, supplemented with global classical computation. This is made possible by a technique known as single-shot error correction~\cite{bombin:2015:single-shot, campbell:2018:theory}, in particular for a so-called `gauge' variant of three-dimensional color codes~\cite{bombin:2015:gauge, bombin:2016:dimensional}.

The state of affairs just presented is puzzling~\cite{campbell:2017:roads}: for three spatial dimensions a universal set of topological operations can be carried out in constant time,  and yet for two spatial dimensions universality cannot be achieved even if operations extend over time, despite the fact that
\begin{equation}
3+0 = 2+1.
\end{equation}

\subsection{Colorful quantum computation}
 
This paper introduces a purely topological approach to fault-tolerant quantum computation that is based on stabilizer codes, and yet is scalable for just two spatial dimensions. 
Unlike in constructions based on the toric code~\cite{raussendorf:2005:long, raussendorf:2006:ftoneway, raussendorf:2007:deformation, raussendorf:2007:fault}, logical gates
\begin{itemize}
\item
form a universal set, and 
\item
are based on transversal operations, rather than code deformation%
\footnote{As in~\cite{knill:1996:threshold, bombin:2007:3dcc}, the use of transversal measurements sidesteps the fact that transversal gates on a given error-correcting code are never universal~\cite{eastin:2009:restrictions}.
}.
\end{itemize}
In contrast with the non-scalable methods of~\cite{bravyi:2015:doubled, jochym:2015:stacked}, the new approach, colorful quantum computation,  involves no new error correcting codes. Conventional 3D color codes suffice%
\footnote{
Gauge color codes are an original motivation for the result but not part of it.
}.
It is presented below in three different but closely related forms
\begin{itemize}
\item
a 3D MBQC scheme (section~\ref{sec:3D}), 
\item
a hybrid scheme combining 2D computation and information storage over a third dimension (section~\ref{sec:hybrid}),
\item
a scalable 2D scheme reliant on a new form of decoding (sections~\ref{sec:2D} through~\ref{sec:naive}). 
\end{itemize}

The three-dimensional MBQC scheme is obtained by encoding each of the qubits of a regular MBQC scheme in a tetrahedral code, a class of 3D color codes. The construction is enabled by the following ingredients: 
\begin{itemize}
\item
Single-shot initialization in the $X$ basis: this is possible thanks to the fact that 3D color codes, from a condensed matter perspective, are partially \emph{self-correcting}~\cite{bombin:2015:single-shot}.
\item
Transversal Pauli and $X\pm Y$ logical measurements: these are possible thanks to the transversal T gate and the CSS~\cite{calderbank:1996:good,steane:1996:multiple} structure of tetrahedral color codes.
\item
Transversal CP gates that involve only the qubits at the two-dimensional contact region of two tetrahedral code lattices: like the closely related dimensional jumps~\cite{bombin:2016:dimensional}, these are possible thanks to the matryoshka-like nature of tetrahedral codes and their higher dimensional analogues~\cite{bombin:2018:transversal}.
\end{itemize}

A straightforward rotation of the $(3+0)$-dimensional MBQC scheme to make it $(2+1)$-dimensional is not compatible with the causal structure of single-qubit measurements. Two different methods overcome this difficulty:
\begin{itemize}
\item
A non-scalable hybrid scheme where computation is two-dimensional and a third spatial dimension is used for information storage. The purpose of storage is to delay some of the measurements. Since qubits are stored for a fixed amount of time and not accessed in between, a natural implementation are photonic qubits delayed on optical fiber.
\item
A scalable two-dimensional scheme that relies on `just-in-time' (JIT) decoding to satisfy the causal constraints. Unlike the conventional decoding used in the three-dimensional scheme, JIT decoding corrects the outcomes of single-qubit measurements as they become available.
\end{itemize}

\begin{danger}
\center {\bf Notation} is listed in appendix~\ref{sec:notation}.
\end{danger}

\section{Overview}\label{sec:overview}

This section discusses some key points driving the results below.

\subsection{Why tetrahedral codes?}

An operation on a many-body system is \emph{quantum local} when it involves a local quantum circuit (\emph{i.e.} of finite depth and possibly composed of geometrically local gates, as in this context) assisted with non-local classical computation~\cite{bombin:2015:single-shot}. It is a natural many-body analogue of LOCC.

Tetrahedral codes are a class of 3D topological codes that encode a single logical qubit. The qubits of a tetrahedral code form a lattice with the overall topology of a tetrahedron, hence the name. They have an exceptional set of quantum-local gates~\cite{bombin:2018:transversal}.

\begin{success}
The following operations are quantum local in tetrahedral codes:
\begin{itemize}
\item
preparations in the $X$ basis, 
\item
controlled phase (CP) gates, 
\item
Pauli and $X\pm Y$ measurements.
\end{itemize}
Locality is geometric in a 3D setting if CP gates only involve logical qubits residing on adjacent tetrahedra.
\end{success}
\noindent
These are the operations required in MBQC~\cite{raussendorf:2001:one}. This suggests performing MBQC with each of the qubits of the resource state encoded in a tetrahedral code. As long as the MBQC scheme only requires its qubits to be online for a bounded period of time (independent of the size of the computation), fault-tolerance can be achieved because~\cite{bombin:2018:transversal}
\begin{itemize}
\item
preparations give rise to errors with a local distribution of syndromes,
\item
the rest of operations are compatible with such errors, and
\item
measurements have built-in error correction.
\end{itemize}
Notice in particular that there is no need to perform fault-tolerant error correction rounds on the tetrahedral codes. At a deeper level, however, fault-tolerant error correction, intertwined with logical gates, is still happening. Its target are the logical qubits of the correlation space picture~\cite{gross:2007:novel}.

\subsection{Why just-in-time decoding?}

The MBQC schemes of interest here are 
\begin{itemize}
\item
tightly connected to the circuit model, and 
\item
based on a two- or three-dimensional graph state.
\end{itemize}
In particular, one of the dimensions of the lattice corresponds to time in the equivalent circuit model, and it is possible to order operations so that at any given time the qubits that are online form a one- or two-dimensional sublattice.

This sublattice becomes three-dimensional when the qubits of the graph state are encoded in tetrahedral codes. The question then is whether one of these three dimensions can be made time-like to recover a setting with just two spatial dimensions. As it turns out this is not possible under the conventional operation of trahedral codes, because causality is not preserved.

The origin of the problem is that the preparation of logical states is quantum-local, rather than local. It is, however, a local operation if prepared state are correct only up to some Pauli operator, a so called Pauli frame. The obstruction comes from non-Pauli logical measurements: in contrast with logical Pauli measurements, the Pauli frame cannot be processed after measurements happen. In particular, dividing the Pauli frame into an $X$ and a $Z$ component, the causal obstruction can be phrased as follows.
\begin{danger}
\center Logical $X\pm Y$ measurements require the $X$ component of the Pauli frame.
\end{danger}
\noindent
A way out of this obstruction is to compute the Pauli frame piecewise, as information becomes available. This can be done, and in a fault-tolerant manner. The key ingredient, JIT decoding, can be carried out by means of (repurposed) standard decoders. Moreover the discrepancies between the Pauli frames obtained via JIT and conventional decoding can be transformed into (mostly) erasure errors. This is a welcome feature because the JIT-decoded version is unavoidably more noisy, being limited by causality.

\subsection{Why photonic qubits?}

An alternative method to deal with the causal obstruction discussed above is delaying the single qubit measurements that compose each logical $X\pm Y$ measurement. Since the delay time is fixed beforehand for a given code size, photonic qubits on optical fiber are perfectly fit for the task. Moreover, photons move from end to end of the optical fiber as they are stored, preserving the locality of operations in a three-dimensional setting. Notice that the third dimension is only required for optical fiber, whereas the computational part of the scheme, including the measurements of delayed qubits, remain two-dimensional.

There is another way in which photonic qubits are a good fit. Even though MBQC and conventional (circuit based) quantum computation are equivalent, it is apparent from the discussion above that the MBQC picture is more natural for the schemes presented here, a feature that is shared with photonic quantum computation~\cite{rudolph:2017:optimistic}.
 
\section{3D scheme}\label{sec:3D}

This section introduces colorful quantum computation in its simplest incarnation: as a 3D MBQC scheme obtained by encoding the qubits of an ordinary MBQC scheme with tetrahedral codes. In contrast to the 3D scheme of~\cite{raussendorf:2007:deformation}, fault-tolerance is achieved here by purely topological means, thus eliminating the need for magic state distillation~\cite{bravyi:2005:universal}.

Sections~\ref{sec:hybrid} and~\ref{sec:2D} discuss two different methods to make one of the tree spatial dimensions of this scheme time-like.

\subsection{Tetrahedral codes}

A tetrahedral colex~\cite{bombin:2007:3dcc} is a lattice with the overall topology of a tetrahedron. Each of its facets and (3-)cells is labeled with one of four colors (red, green, blue and yellow), in such a way that each vertex belongs to exactly one cell or facet of any given color, see~\cite{bombin:2018:transversal}.

For each tetrahedral colex there is a tetrahedral code~\cite{bombin:2007:3dcc}: a stabilizer code~\cite{gottesman:1996:stabilizer} with
\begin{itemize}
\item
a qubit per vertex of the colex,
\item
a stabilizer generator in $P_X$ per cell (called cell operator),
\item
a stabilizer generator in $P_Z$ per face (called face operator).
\end{itemize}
Typically we are not interested on a single code, but rather on a family with a fixed local lattice structure. The code distance is proportional to the lattice size, and it is desirable for it to increase indefinitely within the family, a constant at a time. An example of such a family is given in~\cite{bombin:2015:gauge}.


\subsection{Encoding}

The aim is to encode the qubits of a MBQC scheme to make it fault-tolerant. We need to fix some terminology.
\begin{warning}
\begin{itemize}
\item
{\bf Logical state:} the resource state of the MBQC scheme.
\item
{\bf Encoded state:} the logical state encoded with tetrahedral codes.
\item
{\bf Resource state:} the precursor of the encoded state.
\end{itemize}
\end{warning}
\noindent
The logical and resource states are conveniently represented as graph states, and thus we talk about logical and resource graphs. Logical qubits are the qubits of the logical state. They should not be confused with the logical qubits of the equivalent circuit model, which do not enter the discussion.

\subsubsection{Resource state}

\begin{figure}
\centering
\includegraphics[width=.9\columnwidth]{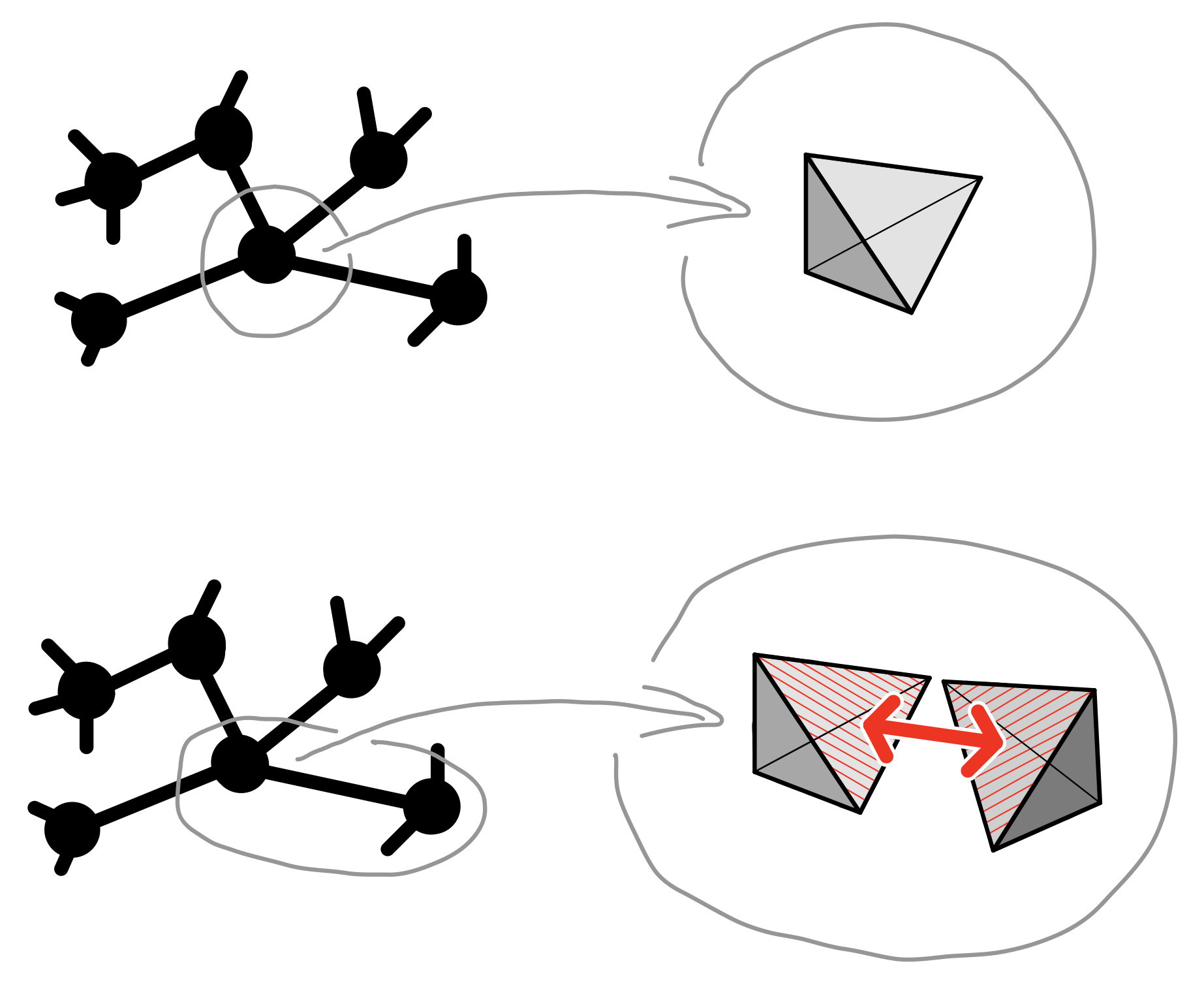}
\caption{(Top) Each vertex of the logical graph contributes a tetrahedral colex. (Bottom) Each edge contributes a pair of matched facets.}
\label{fig:encoding}
\end{figure}

Consider an arbitrary logical graph. The geometry of the resource graph is dictated by a collection of tetrahedral colexes with matched facets, see figure~\ref{fig:encoding}:
\begin{itemize}
\item
There is a tetrahedral colex per vertex of the logical graph.
\item
If two vertices are linked on the logical graph, the corresponding tetrahedral colexes have a \emph{matched} facet pair.
\end{itemize}
The matching establishes a one-to-one relation between the vertices of the two facets that induces a one-to-one relation between edges and faces%
\footnote{In particular, matched facets have the same geometry. A given facet can in principle be matched to an arbitrary number of other facets \emph{but} (i) fault tolerance requires the number to be bounded, and (ii) if tetrahedra do not overlap the number is at most one.}.

\begin{figure}
\includegraphics[width=\columnwidth]{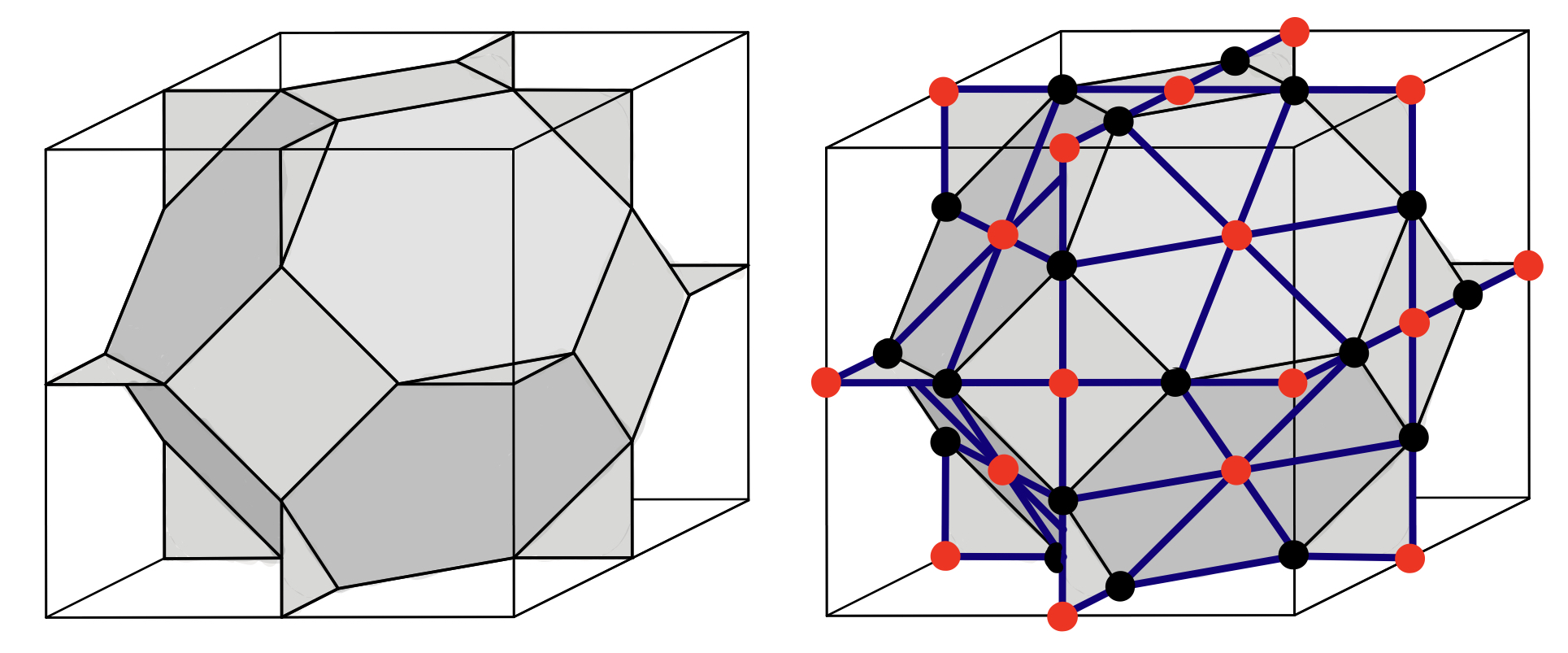}
\caption{The unit cell for the colex family in~\cite{bombin:2015:gauge} (left) and the corresponding resource graph unit cell (right). Code qubits are displayed in black, ancilla qubits in red, and inner edges in purple. Notice that, within the bulk, resource graph vertices are 4- and 6-valent.
}
\label{fig:unit}
\end{figure}

\begin{warning}
The resource state has two kinds of qubits,
\begin{itemize}
\item
{\bf code qubits}, one per vertex of the tetrahedral colexes, and
\item
{\bf ancilla qubits}, one per face of the tetrahedral colexes,
\end{itemize}
and two kind of edges in its graph,
\begin{itemize}
\item
{\bf inner edges}, connecting each ancilla qubit to each of the code qubits at the ancilla's face, and
\item
{\bf outer edges}, connecting matched code qubits.
\end{itemize}
\end{warning}
\noindent
Figure~\ref{fig:unit} illustrates the resource state in the bulk of a colex. 

\subsubsection{Encoded state}\label{sec:encoded}

Single-qubit measurements provide the link between the trivial entanglement%
\footnote{In condensed matter terms.} 
of the resource state and the global entanglement pattern~\cite{chen:2010:local} that chacterizes the encoded state.
\begin{success}
\center Ancilla qubits are always measured in the $X$ basis. 
\end{success}
\noindent
The result of measuring the ancillas is the encoded state \emph{up to a Pauli frame} that depends on the ancilla outcomes and is discussed below%
\footnote{The Pauli frame makes the preparation of the encoded state quantum-local, rather than local, sidestepping the impossibility of connecting distinct topologically ordered phases with local operations~\cite{chen:2010:local}. Notice that the entanglement pattern of the encoded state is trivial with respect to quantum-local operations, which are the condensed matter analogue of LOCC. This triviality under quantum-local operations is a distinguishing feature of abelian topological order that, as illustrated here, has important applications for topological fault tolerance. 
}. 
Reorganizing the operations makes this evident%
\footnote{See~\cite{bombin:2018:transversal} for the necessary background on tetrahedral codes.
}:
\begin{enumerate}
\item
Initialize logical qubits to $|+\rangle$:
\begin{itemize}
\item
Initialize code qubits to $|+\rangle$.
\item
Measure face operators, \emph{i.e.}
\begin{itemize}
\item
initialize ancilla qubits to $|+\rangle$, and
\item
apply a CP gate per inner edge.
\end{itemize}
\end{itemize}
\item
Apply a logical CP gate per logical edge: 
\begin{itemize}
\item
Apply a CP gate per outer edge.
\end{itemize}
\end{enumerate}

\subsubsection{Encoded measurements}

MBQC on the encoded state proceeds normally. Each logical measurement consists of single-qubit measurements on the code qubits followed by postprocessing:
\begin{success}
\begin{itemize}
\item
If the logical measurement is in a Pauli basis (X, Y or Z), all the code qubits of the tetrahedron are measured in that same basis.
\item
If the logical measurement is in either of the $X\pm Y$ bases, each code qubit is measured in either of those bases.
\end{itemize}
\end{success}
\noindent
Specifically, in the second case the basis choice depends on \cite{bombin:2015:gauge}:
\begin{itemize}
\item
the logical basis: $X+Y$ or $X-Y$,
\item
the position of the qubit in the tetrahedron, and
\item
the outcomes of the ancilla qubits of the tetrahedron.
\end{itemize}

\subsection{Pauli frame}

The encoded graph state is subject to a Pauli frame that is dictated by the ancilla measurement outcomes. Let these outcomes be represented by the set $\phi$ of ancilla qubits (or equivalently 3-colex faces) with negative outcomes. We regard $\phi$ as a flux configuration, see~\cite{bombin:2018:transversal}. The flux configuration $\phi$ is random, but not completely: in the absence of errors it is an $X$-error syndrome (of the tetrahedral codes)%
\footnote{A flux configuration is an error syndrome iff it satisfies a Gauss law, see~\cite{bombin:2018:transversal}.}.

Upon completion of step 1 of section~\ref{sec:encoded}, the result is a collection of encoded $|+\rangle$ states up to a Pauli frame $F_X\in P_X$ with syndrome $\phi$. Any such Pauli frame can be used, the choice is immaterial. Upon completion of step 2 the Pauli frame has propagated across the CP gates, so that the final Pauli frame is
\begin{equation}\label{eq:frame}
F = F_XF_Z, \qquad F_Z\in P_Z,
\end{equation}
where $F_Z$ has support on those qubits that are matched to an odd number of qubits where $F_X$ has support.

\subsubsection{Classical information flow}\label{sec:flow}

We denote by
\begin{equation}
f = f_Xf_Z , \qquad f_X\in P_X,\,\, f_Z\in P_Z,
\end{equation}
the Pauli frame on a given tetrahedron, in contrast with the Pauli frame~\eqref{eq:frame} for the whole system. At each tetrahedron, the flow of classical information in connection with the Pauli frame proceeds as follows:

\begin{itemize}
\item
The ancilla outcomes are processed to obtain $f_X$.
\item
$f_Z$ is obtained from the neighboring tetrahedra.
\item
If a logical Pauli measurement is performed, the code qubit measurement outcomes are processed together with $f$ to produce a logical outcome ($X$ measurements require $f_Z$, $Z$ measurements require $f_X$ and $Y$ measurements require both).
\item
If a logical $X\pm Y$ measurement is performed, $f_X$ is used to choose the measurement basis for each code qubit, and the outcomes are processed together with $f_Z$ to produce a logical outcome.
\end{itemize}
The following causal constraints emanate from this information flow:
\begin{danger}
\begin{itemize}
\item
Within a tetrahedron measured in the $X\pm Y$ basis, all the ancilla qubits have to be measured before the code qubits are measured. 
\item
The following qubits have to be measured before the logical outcome at a given tetrahedron can be computed:
	\begin{itemize}
	\item
	all its code qubits,
	\item
	all its ancilla qubits (except for logical X measurements),
	\item
	all the ancilla qubits of its (matched) neighbors (except for logical $Z$ measurements).
	\end{itemize}
\end{itemize}
\end{danger}

\subsection{Fault tolerance}\label{sec:ft}

For simplicity, we model noise in the system as follows.
\begin{itemize}
\item
The (otherwise ideal) resource state is subject to a local distribution of Pauli errors, see~\eqref{eq:local} below, 
\item
measurements are ideal, but the above errors might depend on the measurement basis, and
\item
classical computation is flawless.
\end{itemize}
Such an error model is meaningful only if 
\begin{itemize} 
\item
in the logical MBQC scheme, qubits are online for a bounded period of time, and
\item
the valence of the vertices of the resource graph is bounded.
\end{itemize}
\begin{warning}
A distribution of error operators, each with support on a set of qubits $W$, is {\bf local} with error rate $0\leq p\leq 1$ if for any set of qubits $Q$
\begin{equation}\label{eq:local}
\text{prob}(Q\subseteq W)\leq p^{|Q|}.
\end{equation}
\end{warning}

\noindent
The preparation of the encoded graph state 
is the crux of the scheme's fault-tolerance. This is due to its quantum-local nature: it only involves local quantum gates, but the Pauli frame is obtained via global classical computation. There is no reason for the resulting residual noise to follow a local distribution: It is necessary to check that it takes a form that can be handled downstream, when code qubit outcomes are processed.

Stochastic Pauli errors amount to stochastic bit-flip errors on the classical ancilla outcomes. Instead of the correct syndrome $\phi$ the noisy outcome is 
\begin{equation}
\tilde\phi = \phi+\omega,
\end{equation}
where $\omega$ is the set of bit-flip locations, which follow a local distribution. Since $\tilde\phi$ is not, in general, a syndrome, it has to be corrected to produce a syndrome $\bar \phi$. The syndrome $\bar\phi$ determines the Pauli frame $F_{\bar \phi}$. The effective error in the Pauli frame is
\begin{equation}\label{eq:Fw}
F_{\bar\omega} = F_\phi F_{\bar \phi},\qquad \bar\omega := \phi+\bar\phi.
\end{equation}
where $F_{\bar\omega}$ is \emph{any} Pauli frame compatible with the syndrome $\bar\omega$, and $F_\phi$ is the \emph{only} Pauli frame compatible with the syndrome $\phi$ and the above equation. Recall that a Pauli frame $F$ is defined from its $X$ component $F_X$, which is in turn only fixed up to an arbitrary logical operator. This freedom allows adjusting $F_\phi$ to the most desirable value of $F_{\bar\omega}$. 

To recap, $\omega$ follows a local distribution with error rate $p$ and the $X$ component of $F_{\bar\omega}$ can be chosen to be correctable, for any given set of correctable errors. Then if 
the error rate $p$ is below a threshold $p_0$, and 
the corrected syndrome $\bar\phi$ is (efficiently) computed as in~\cite{bombin:2015:single-shot},  
then the syndrome of $F_{\bar\omega}$ is \emph{confined}, and a fault-tolerant regime exists, see~\cite{bombin:2015:single-shot, bombin:2018:transversal}.

\section{Hybrid scheme}\label{sec:hybrid}

The second incarnation of colorful quantum computation is a hybrid scheme: computations are carried out in a two-dimensional setting and quantum information storage requires a third dimension. Optical fiber is particularly well suited to attain this extra dimension. Since stored information is not actively corrected, the fault tolerance of this scheme is not scalable%
\footnote{
This is not to say that active error correction is not possible. It just spoils the simplicity of the setting.
}.

A scalable and storage-free two-dimensional approach is discussed in section~\ref{sec:2D}. Very loosely speaking, there is a `continuum' of schemes between these two, using increasingly longer storage times.

\subsection{Delayed measurements}

\begin{figure}
\centering
\includegraphics[width=.8\columnwidth]{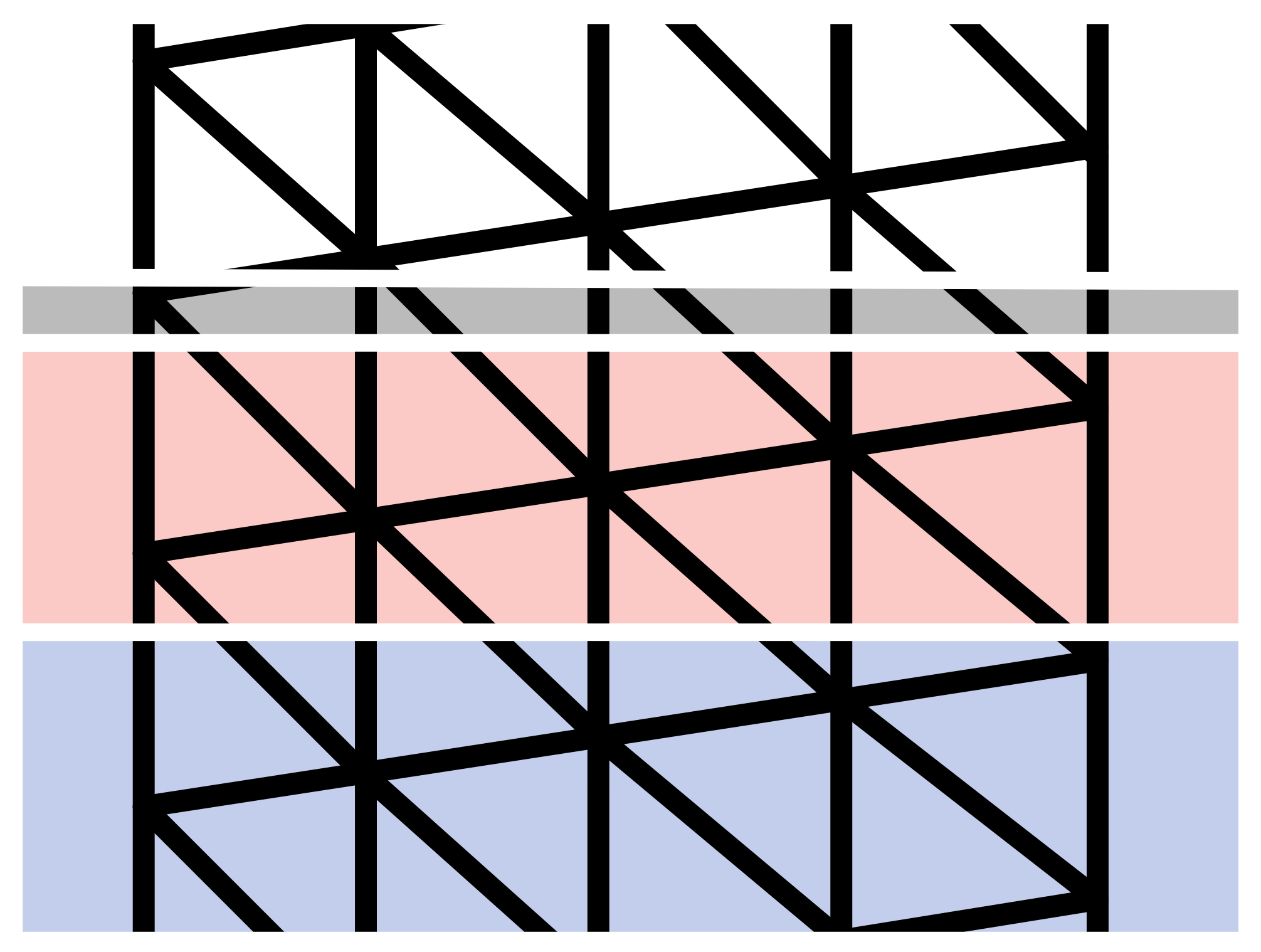}
\caption{
A 2D cut of the 3D MBQC scheme with time as a dimension. Black triangles represent tetrahedra. The colored areas represent regions of the resource state at a point in time. The white one is yet to be created. The grey one is being created.
The red one has been created and its ancillas measured.
The blue one has been created and all its qubits measured.
}
\label{fig:spacetime_delay}
\end{figure}

As discussed in section~\ref{sec:overview}, time cannot be naively exchanged with one of the dimensions of the 3D MBQC scheme of section~\ref{sec:3D}. This is due to causal obstructions. The obstacle lies in the flow of classical information, in particular within logical $X\pm Y$ measurements. The Pauli frame $f_X$ on the tetrahedron is computed using all the ancilla outcomes, and thus, as noted in section~\ref{sec:flow}, code qubits can be measured only after all the ancilla qubits have been measured.

\begin{figure}
\centering
\includegraphics[width=.7\columnwidth]{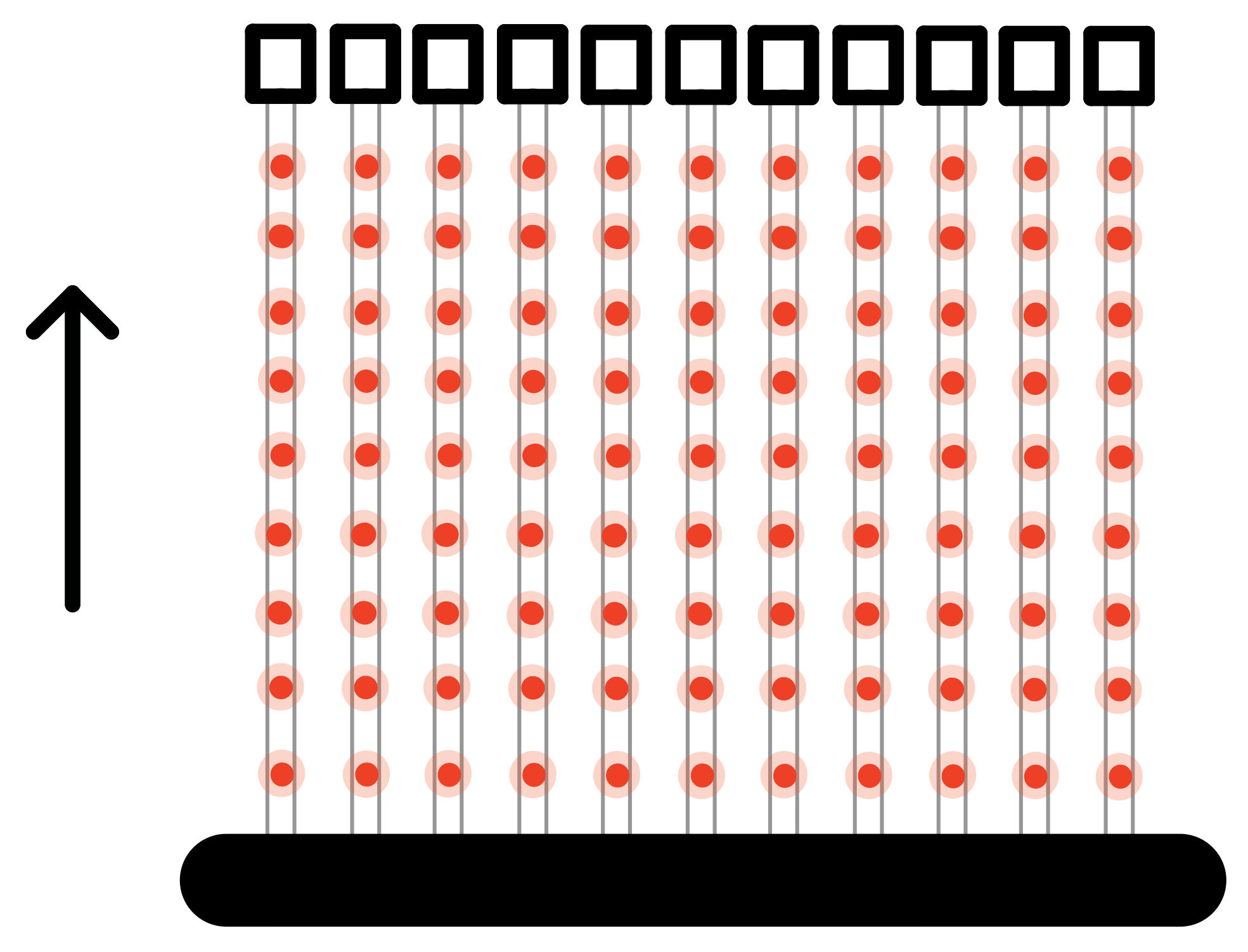}
\caption{The hybrid scheme requires a 2D computational substrate (lower black region) suplemented with information storage to delay some of the single-qubit measurements (boxes on the top). A natural implementation of the delay/storage is optical fiber (vertical lines). The red dots represent photonic qubits at some instant in time.}
\label{fig:delay}
\end{figure}

The spacetime geometry of this constraint is depicted in figure~\ref{fig:delay}. If the code has distance $L$, the information required to choose the measurement basis for a given code qubit is in general available only after a time proportional to $L$. Code qubits have to `sit arround' for that time before being measured. If the number of qubits required for the 2D computation is $N_2$, then the number of online qubits at any given times is
\begin{equation}
N_3\propto LN_2.
\end{equation}
The number of memory qubits required is thus much larger than the number of computational qubits, making the scheme's appeal strongly implementation dependent. In this regard, an all-important aspect is that the quantum memory used in the scheme is of a rather specific kind.

\begin{success}
\center Qubits are stored for a fixed period of time.
\end{success}

\noindent There exists one kind of quantum memory that perfectly matches this scenario: photonic qubits on optical fiber. Figure~\ref{fig:delay} schematically depicts the resulting architecture.
A key feature of optical fiber is that qubits are on the move as they are being stored, which allows to preserve locality in 3D. If the qubits used for storage had a fixed position in time, locality could not be preserved (unless information can be shifted from qubit to qubit locally).

This is not a scalable approach because larger codes requires larger delays. \emph{E.g.} in the case of optical fiber the maximum storage time is likely dictated by optical loss. Once this maximum is reached, the quality of the delay method has to be improved for larger codes to be useful.

\subsection{Concatenation}

In some scenarios the scheme presented here will be concatenated with another scheme capable of \emph{e.g.} dealing with higher levels of noise. In that case the qubits in storage would likely be encoded. A point to note is that although these qubits are to be measured in either of the $X\pm Y$ bases, any other unitarily equivalent basis pair will do, such as X and Z, which are transversal for any CSS code. \emph{E.g.} an (encoded) code qubit could first undergo the sequence of gates  $P\circ H\circ T$, then enter the delay lines, and finally be measured in either the $X$ or $Z$ basis.

\begin{figure}
\centering
\includegraphics[width=.82\columnwidth]{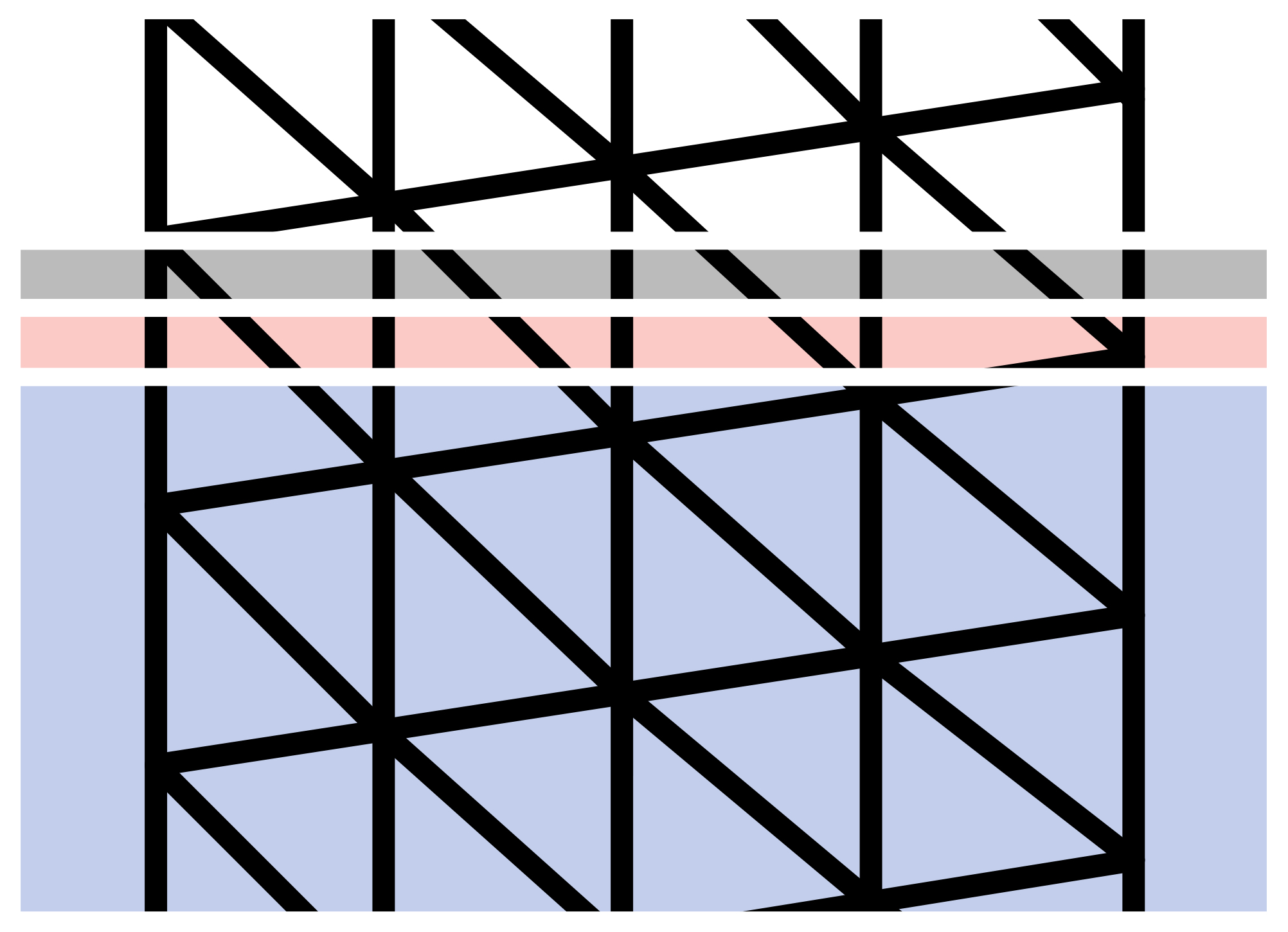}
\caption{This is a variation of figure~\ref{fig:spacetime_delay} in which, at any given time, online qubits are limited to some two-dimensional sublattice of the resource graph.}
\label{fig:spacetime_2D}
\end{figure}

\section{2D scheme}\label{sec:2D}

The third and last incarnation of colorful quantum computation is a purely two-dimensional and scalable scheme. We proceed in two steps, first considering the noiseless case and then adding fault tolerance into the mixture, with just-in-time decoding as a key ingredient. 

Throughout this section two conditions, involving $X$-error syndromes and time, emerge as necessary for the scheme to function. These conditions are the subject of section~\ref{sec:spacetime}, which discusses how they constrain the spacetime geometry of tetrahedra. Section~\ref{sec:twister} describes a compatible architecture.

Just-in-time decoders are only formulated here. Section~\ref{sec:naive} introduces a method to repurpose a conventional decoder for just-in-time decoding.

\subsection{Setting}\label{sec:2D_setting}

In the 3D MBQC scheme, for each tetrahedral code first all the ancilla measurement outcomes $\phi$ are obtained, and subsequently $f_X$, the $X$ component of the Pauli frame, is computed for the whole tetrahedron. We would like to exchange one of the spatial dimensions of the MBQC scheme with time, so that at any given time the qubits that are online form a two-dimensional sublattice, see figure~\ref{fig:spacetime_2D}. To this end the Pauli frame has to be computed as information becomes available.  In particular, the following setting yields a two-dimensional regime%
\footnote{This construction should not be taken too literally. In some systems the circuit model is better suited. The graph state picture is just a convenient theoretical device.
}.
\begin{success}
\begin{itemize}
\item
Each tetrahedron is divided in layers with bounded width in the time direction. The bound is independent of the size of the computation.
\item
The Pauli frame $f_X$ on the code qubits of a given layer is chosen using the ancilla outcomes of that layer and all the preceding ones.
\end{itemize}
\end{success}
\noindent
The following notation is all for a single tetrahedron. On the $i$-th layer, 
\begin{itemize}
\item
$\Lambda_i$ is the set of ancillas/faces, and 
\item
$M_i$ is the set of code qubits.
\end{itemize}
Hence at the $i$-th step the accessible ancilla outcomes correspond to the subset of faces
\begin{equation}
\Phi_i := \bigsqcup_{j=1}^i \Lambda_j,
\end{equation}
and after the $i$-th step the Pauli frame $f_X$ is fixed for the subset of code qubits
\begin{equation}
R_i := \bigsqcup_{j=1}^i M_i.
\end{equation}
The set of all faces is denoted $\Phi$ and the complementary set to $\Phi_i$ is
\begin{equation}
\overline {\Phi_i} := \Phi-\Phi_i.
\end{equation}

\subsection{Causality}\label{sec:causality}

Even if fault tolerance aspects and the time required for classical computation are ignored, a fundamental issue is whether the outlined scenario is compatible with causal constraints at all. Appendix \ref{sec:piecewise} provides the following sufficient condition for causality not to be an obstacle. Let $S_Z$ denote the subgroup of the tetrahedral code stabilizer generated by face operators.
\begin{warning}
{\bf Causality condition.} The eigenvalues of $S_Z\|_{R_i}$ are known at the $i$-th step.
\end{warning} 
As long as the condition holds for every step, it is possible to choose the Pauli frame $f_X$ piecewise. The (rather mild) implications of the causality condition for the spacetime geometry are the subject of section~\ref{sec:spacetime_causality}.

\subsubsection{Logical causality}

Even in the absence of fault tolerance, causality plays a key role in MBQC. In particular, it is not possible to translate directly a given (non-trivial) quantum circuit to a fixed measurement pattern on a graph state. In the original formulation of MBQC it was emphasized that (i) it is possible to fix the measurement basis for all the Pauli measurements, and (ii) on top of those Pauli measurements it is enough to perform $X\pm Y$ measurements. The locations of the $X\pm Y$ measurements is fixed, and only the sign depends on other measurement outcomes, giving rise to causal constraints \cite{raussendorf:2001:one}.

The separation between Pauli and non-Pauli measurements is, however, artificial. It is also possible to fix the measurement basis for all non-Pauli measurements and to adjust instead the basis for some Pauli measurements%
\footnote{From the perspective of the quantum circuit, Pauli and non-Pauli measurements correspond respectively to Clifford and non-Clifford operations. In particular the non-Clifford operations are on the Clifford hierarchy and the techniques of gate teleportation apply~\cite{gottesman:1999:demonstrating}.
}.
This can actually be advantageous in fault-tolerant scenarios, because the position of the variable Pauli measurements can be adjusted (delayed) to allow for decoders to process the non-Pauli measurement. This is not possible in the standard formulation of MBQC because the sign of a $X\pm Y$ measurement on a given qubit depends on the outcomes of its neighbors%
\footnote{
Similar considerations apply in other fault-tolerant scenarios with Pauli frames involved.
}.

Here we adopt the second approach, with fixed non-Pauli measurements, to focus on the problems created by fault-tolerance, rather than logical causality.

\subsection{Just-in-time decoding}\label{sec:JIT}

We are ready to add noise to the picture. In contrast to section~\ref{sec:ft}, here the corrected syndrome $\hat\phi$ has to be generated on the fly, with just a partial knowledge of the noisy ancilla outcomes $\tilde \phi$ and with a limited time to perform the computation. This is just-in-time decoding.

\subsubsection{The challenge}

The goal of JIT decoding is to map, in real time and as information becomes available, the noisy ancilla outcomes $\tilde \phi$ to a corrected syndrome $\hat \phi$. Each step fixes a partial value for the syndrome $\hat \phi$. The input available for this task at the $i$-th step is
\begin{equation}
\tilde \phi\cap \Phi_i.
\end{equation}
For simplicity we assume that this information is used to fix%
\footnote{
This is by no means the only possibility. The set of faces for which $\hat\phi$ is fixed at the $i$-th step could be any subset of $\Phi_i$ compatible with the causality condition, in the sense of~\eqref{eq:causality}.
}
\begin{equation}
\hat \phi\cap \Lambda_i.
\end{equation}
The partial knowledge of the syndrome $\hat\phi$ is then used to choose the Pauli frame $f_X$ in the subset of qubits 
\begin{equation}
R_i-R_{i-1},
\end{equation}
following the prescription of appendix~\ref{sec:piecewise}. The consistency of the procedure is guaranteed if the causality condition is satisfied, \emph{i.e.} 
\begin{equation}\label{eq:causality}
S_Z\|_{R_i}\subseteq S_{\Phi_i},
\end{equation}
where $S_{\phi_i}$ denotes the subgroup of $S_Z$ generated by the faces in $\Phi_i$.

In the conventional decoding of $\tilde \phi$ the relevant classical error correcting code is the set of $X$-error syndromes $\mathcal C$. By contrast, when the known ancilla outcomes are those in the subset $\Phi_i$ the natural code to consider is
\begin{equation}\label{eq:Ci}
\mathcal C_i:=\{
\phi\subseteq \Phi_i
\,|\,
\exists \phi'\subseteq\overline{\Phi_i}\,:\,\phi+\phi'\in\mathcal C
\}.
\end{equation}
The $i$-th step of JIT decoding is not only limited by the lack of knowledge of the future ancilla outcomes, but also by the previous commitment to a final value of $\hat\phi$ in the subset $\Phi_{i-1}$. A straightforward method to overcome this difficulty is the subject of section~\ref{sec:naive}, but it is likely that better ones exist.

\subsubsection{The penalty}\label{sec:delta}

Dealing with the noisy flux configuration $\tilde \phi$ on a layer by layer basis makes correcting errors harder. With full access to $\tilde \phi$ a conventional decoder produces a better estimate $\bar \phi$%
\footnote{
Both $\hat \phi$ and $\bar \phi$ are, of course, dependent on whatever algorithms are used.
}.
Fortunately it is possible to attenuate the damaged caused by using $\hat \phi$, rather than $\bar \phi$, to choose the Pauli frame $f_X$.

\begin{success}
\center Conventional decoding enhances the logical outcome processing.
\end{success}

\noindent In particular, since $\bar \phi$ is the best estimate for the syndrome, it is natural to consider the \emph{differential syndrome}
\begin{equation}
\delta:= \hat \phi + \bar \phi.
\end{equation}
Were $\bar \phi$ correct, using the wrong syndrome $\hat \phi$ would introduce some effective noise. The code qubit measurements can be regarded as a transversal unitary followed by single qubit $X$ mesurements. In this picture, the effective noise amounts to the following errors right before the $X$ measurements%
\footnote{
For further details, including an analysis of the overall effective noise for noisy $\bar\phi$, see~\cite{bombin:2018:transversal}.
}:
\begin{itemize}
\item
a known $Z$ Pauli error that can trivially be compensated for, and 
\item
a random element of a certain group $H(\delta)\subseteq P_Z$ with generators localized in the vicinity of the faces in $\delta$. Such random noise is very similar to qubit erasure, and can be handled analogously.
\end{itemize}
Notice that in the adjacent tetrahedra the Pauli frame should be dictated by $\bar\phi$, rather than $\hat\phi$%
\footnote{
The $X$ component of the Pauli frame for $\bar\phi$ should be chosen so that the $X$ component of the effective noise contributed by $\delta$ is a correctable error, in the sense of~\cite{bombin:2018:transversal}. This is analogous to the discussion in section~\ref{sec:ft}.
},%
to avoid propagating JIT decoding errors between tetrahedra.

\subsection{Closure}

This section discusses a geometrical counterpart to the causality condition, which is of a topological nature. The new condition emerges naturally upon examination of the JIT decoder presented in section~\ref{sec:naive}, where it is required to achieve fault tolerance. However, it is presented here since, unless it is satisfied, it seems unlikely that a JIT decoder can be successful.

Given a familty of encodings for the fault-tolerant scheme under discussion, the following condition needs to be satisfied for some $k_\text{close}$, every tetrahedral colex in the family, and each layer $i$.
\begin{warning}
{\bf Closure condition.} For every $\phi\in\mathcal C_i$ there exists $\phi'\subseteq\overline{\Phi_i}$ such that
\begin{equation}
|\phi'|\leq k_{close} |\phi|, \qquad \phi+\phi'\in \mathcal C.
\end{equation}
\end{warning}
\noindent
Intuitively, the closure condition states that committing to some value of $\hat\phi\cap\phi_i$ should not have disproportionate side effects. If not satisfied, small errors in early stages could get amplified and spoil fault-tolerance.
Section~\ref{sec:spacetime_closure} explores the implications of the closure condition for the spacetime geometry.

\section{Spacetime geometry}\label{sec:spacetime}

This section discusses how to ensure that the causality and closure conditions encountered in section~\ref{sec:2D} are satisfied%
\footnote{
These conditions do not apply to tetrahedra for which the logical measurement is Pauli.
}.
The closure condition imposes significant restrictions to the spacetime geometry of tetrahedral colexes. Section~\ref{sec:twister} introduces a symmetrical architecture in which all tetrahedra satisfy them.

\subsection{Setting}

There are many ways in which the layers of section~\ref{sec:2D_setting} could be defined. For simplicity, we define layers in terms of sets of (3-)cells%
\footnote{This is likely an overkill. A circuit model version should strive for thinner layers.
}.
Namely, for each $i=1,\dots, n$ there is a set of cells $C_i$ with
\begin{equation}
C_1\subset C_2\subset \dots \subset C_n,
\end{equation}
where $C_n$ is the set of all cells. The choice of $C_i$ dictates the content of each layer, and thus the causal structure. In the notation of section~\ref{sec:2D_setting}:
\begin{warning}
\begin{itemize}
\item
The code qubits in $R_i$ are those at a vertex of any cell in $C_i$.
\item
The ancilla qubits in  $\Phi_i$ are those at a face of any cell in $C_i$.
\end{itemize}
\end{warning}

\subsection{Causality condition}\label{sec:spacetime_causality}

\begin{figure}
\centering
\includegraphics[width=.9\columnwidth]{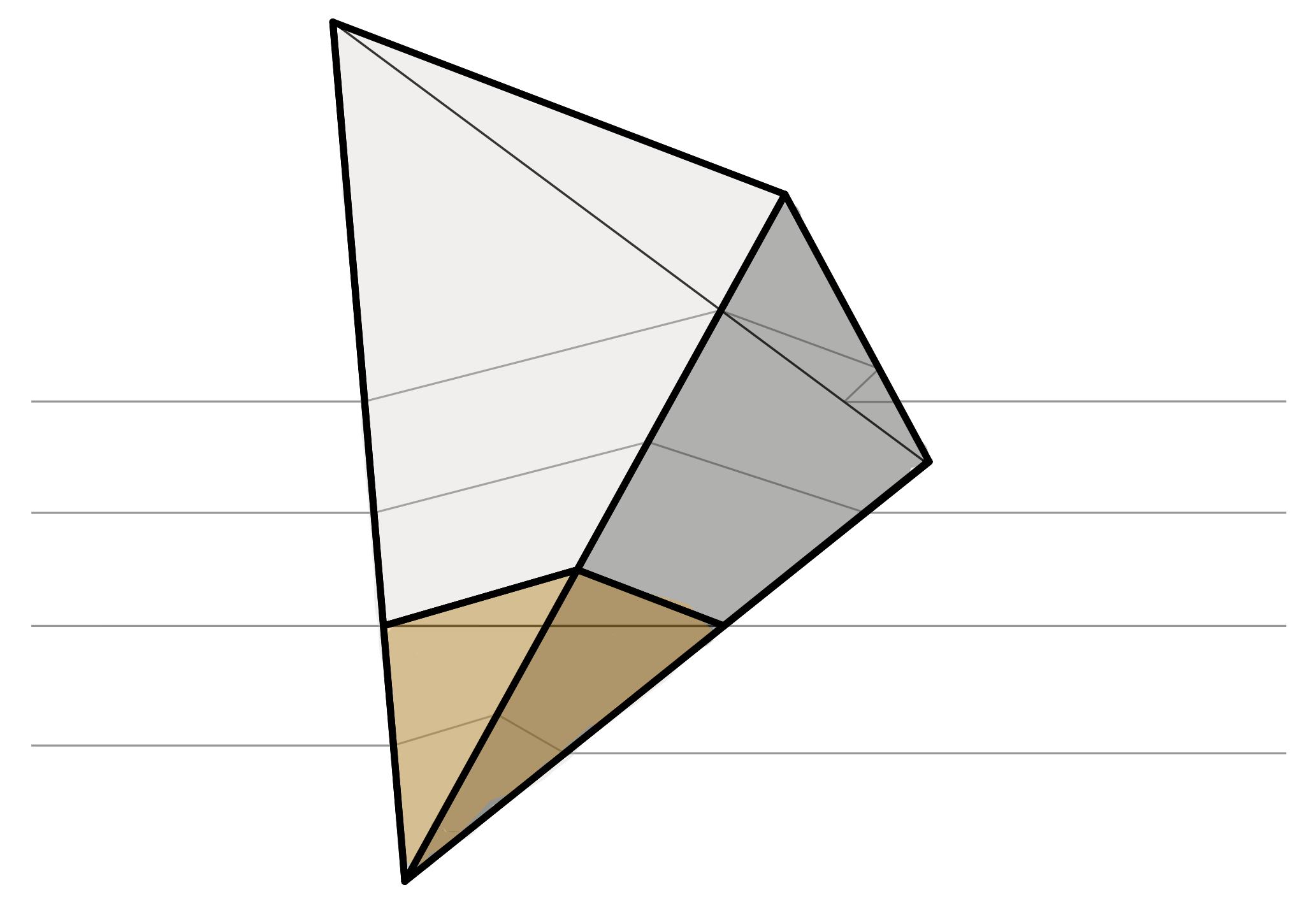}
\caption{Time runs upwards. Horizontal lines are a guide to the eye. The highlighted region of the tetrahedron corresponds to the set of cells $C_i$. The topological constraints imposed by the causality condition are easily satisfied irrespective of the orientation of the tetrahedron.}
\label{fig:spacetime_causality}
\end{figure}

The proof of the following result is in appendix~\ref{sec:ball}.
\begin{lem}\label{lem:causality}
The causality condition is satisfied if 
\begin{itemize}
\item
the submanifold formed by the cells in $C_i$ is a topological ball, and 
\item
its intersection with each cell outside $C_i$, and with each facet of the tetrahedral colex, is empty or a topological disc.
\end{itemize}
\end{lem}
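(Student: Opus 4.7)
The plan is to translate the causality condition into a statement about $\mathbb{Z}_2$ face-chains on the tetrahedral colex and then resolve it via a topological trading argument. Every $\sigma \in S_Z$ has the form $\sigma = \prod_{f \in \phi} Z_f$ for a set of faces $\phi$, with two such representatives identified when they differ by a cell relation, i.e.\ a relation among face operators induced by a 3-cell. Under this identification, $S_{\Phi_i}$ is the subgroup of face-chains admitting a representative supported entirely in $\Phi_i$, while $\sigma\|_{R_i}$ keeps only the vertex contributions on $R_i$. The causality condition thus reads: the vertex pattern any cycle $\phi$ induces on $R_i$ must be reproducible by a cycle supported in $\Phi_i$.

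First I would decompose $\phi = \phi_{\mathrm{in}} \sqcup \phi_{\mathrm{out}}$ with $\phi_{\mathrm{in}} \subseteq \Phi_i$. Since $\phi_{\mathrm{in}}$ already sits in $S_{\Phi_i}$, the task reduces to producing a $\phi'_{\mathrm{out}} \subseteq \Phi_i$ whose action on $R_i$ matches that of $\phi_{\mathrm{out}}$. Working cell by cell, for each $c \notin C_i$ contributing faces to $\phi_{\mathrm{out}}$ the hypothesis gives a disc $D_c := c \cap C_i$ sitting inside both $\partial c$ and $\partial C_i$. Every vertex of $c$ that lies in $R_i$ is contained in $D_c$, so the portion of $\phi_{\mathrm{out}}$ coming from $c$ acts on $R_i$ only through how it closes up against $D_c$. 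The cell relations of $c$ then let one trade these outside faces for faces in $D_c \subseteq \Phi_i$, at the cost of operators whose support lies strictly outside $R_i$.

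The topological hypotheses enter to glue these local trades into a globally coherent face-chain. The ball condition on $C_i$ means $\partial C_i$ is a topological 2-sphere (or 2-disc, when the tetrahedron's facets are involved, controlled by the second hypothesis), which forces the ``seams'' where adjacent discs $D_c$ and $D_{c'}$ share a 1-dimensional piece of $\partial C_i$ to close up globally rather than leave orphan boundary terms. The disc condition on intersections with the facets plays the analogous role on the parts of $\partial C_i$ lying on the boundary of the tetrahedral colex, where interactions with facet operators would otherwise obstruct the trading. Combining the local trades then produces $\phi'_{\mathrm{out}} \subseteq \Phi_i$ with the required action on $R_i$.

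The main obstacle is making this gluing rigorous. I expect it to reduce to showing that the relative 2-cycle class of $\phi$ in the pair $(\text{colex},\, \overline{C_i})$ is trivial, which is precisely what the ball-plus-disc hypotheses buy via a Mayer--Vietoris / relative-homology argument. The delicate bookkeeping lives at the 1-dimensional interface on $\partial C_i$ and, in particular, where that interface meets the tetrahedron's facets --- a consistent seam there is exactly what the ``intersection with each facet is empty or a disc'' clause guarantees.
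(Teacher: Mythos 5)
Your overall plan---recast the causality condition as a statement about $\mathbf Z_2$ face-chains and settle it by a relative-homology argument for the pair formed by the colex and the subcomplex on $C_i$---is a legitimate alternative in spirit, but as written it contains a genuine gap and starts from a slightly wrong reduction. On the reduction: the causality condition is $S_Z\|_{R_i}\subseteq S_{\Phi_i}$, where $S_Z\|_{R_i}$ is the subgroup of elements of $S_Z$ \emph{entirely supported} on $R_i$, not the restriction $S_Z|_{R_i}$. Your phrasing (``the vertex pattern \emph{any} cycle $\phi$ induces on $R_i$ must be reproducible by a cycle supported in $\Phi_i$'') asks for the stronger claim $S_Z|_{R_i}\subseteq S_{\Phi_i}$, which fails: a face $f\in\overline{\Phi_i}$ meeting $R_i$ in a proper subset of its vertices restricts to a partial face operator, which need not lie in $S_{\Phi_i}$. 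The argument is salvageable because once $\sigma$ is assumed supported in $R_i$, matching its action on $R_i$ by something supported in $R_i$ gives exact equality---but that restriction of the quantifier has to be made explicit, since your cell-by-cell bookkeeping silently relies on it.

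More seriously, the core trading step is asserted rather than proved. The relations among the face operators of a single cell $c\notin C_i$ are very rigid (essentially, the product over each colour class of faces of $c$ equals the full cell operator), and they do not allow an arbitrary subset of $c$'s outer faces to be pushed onto the disc $D_c$; the trade is unavoidably global over $\overline{C_i}$. Moreover the relevant ``homology'' is that of the coloured chain complex of the colex---faces carry colour pairs, and the facet operators enter through the second hypothesis of the lemma---so a bare Mayer--Vietoris argument on the underlying ball does not directly yield the needed generation statement. You flag the gluing as the main obstacle but do not close it. For contrast, the paper sidesteps all chain-level bookkeeping: it reduces the lemma to the group identity $S_Z\|_{K'}=S'_Z$ for the subcolex $K'$ induced by $C_i$ (lemma~\ref{lem:ball_of}), obtained from the commutant-duality identity $\mathcal Z(A)|_r=\mathcal Z_r(A\|_r)$ together with the characterization $\mathcal Z_X(S_Z)\propto S_X R_X$ for ball colexes, the latter proved by gluing the ball colex to a mirror copy of itself to form a sphere colex with no logical qubits. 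The hypotheses of the lemma serve exactly to make $K'$ a ball colex with an injective facet correspondence; pursuing your route would in effect require reproving that characterization by hand at the chain level.
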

\noindent
The causality condition holds for reasonable geometries. \emph{E.g.} if the facets of the tetrahedra are aproximately flat, see figure~\ref{fig:spacetime_causality}.

\subsection{Closure condition}\label{sec:spacetime_closure}

This condition poses more of a challenge: some care is needed in choosing the spacetime geometry of each tetrahedron.

\subsubsection{Dual graph distances}

\begin{figure}[t]
\centering
\includegraphics[width=.8\columnwidth]{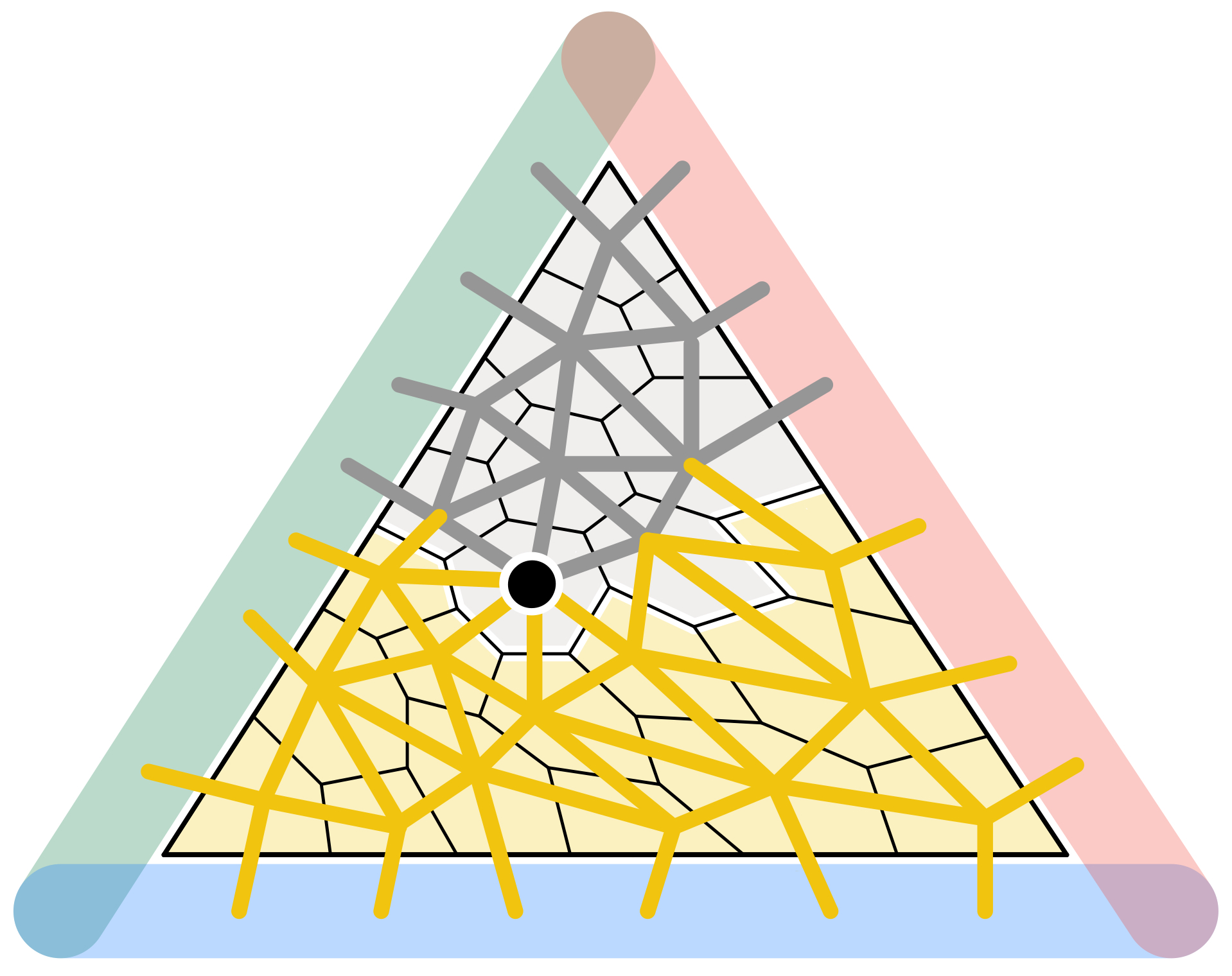}
\caption{
A 2D analogue of the dual graph $\Gamma$. The orange faces of the 2-colex play the role of the set of cells $C_i$ in the 3-colex, and the triangle edges the role of tetrahedron facets. The dual graph $\Gamma_i$ appears in orange, and $\overline{\Gamma_i}$ in grey.  The highlighted vertex $v$ has $d_\text{blue}(v)=3$, $\overline{d_\text{blue}}(v)=\infty$ and $\overline d(v)=5$.
}
\label{fig:gamma}
\end{figure}

The \emph{dual graph $\Gamma$} describes error syndromes in tetrahedral codes~\cite{bombin:2018:transversal}: 
its vertices are dual to the cells of the 3-colex, and its edges are dual to the faces of the 3-colex and represent flux elements. 
Edges that are dual to a face belonging to a facet of the colex have a missing vertex, and we consider them to be \emph{connected to the facet}. 

Let $\Gamma_i$ (respectively $\overline{\Gamma_i}$) denote the subraph of $\Gamma$ with edges dual to the faces in $\Phi_i$ (respectively $\overline{\Phi_i}$). 
For each $i$, we introduce the following distance functions for vertices in $\Gamma_i$%
\footnote{
If no path connects two vertices their distance is infinite.
},
see figure~\ref{fig:gamma}.
\begin{warning}
Given any vertices $v, v'$ of $\Gamma_i$ and a color $\kappa$,
\begin{itemize}
\item
$d(v,v')$ is the distance between $v$ and $v'$,
\item
$d_\kappa(v)$ is the shortest distance from $v$ to any $\kappa$-colored facet, and
\item
$d(v)$ is the sum of the two smallest values of $d_\kappa(v)$ among the 4 values for varying color $\kappa$.
\end{itemize}
Barred versions $\overline d(v,v')$, $\overline d_\kappa(v)$ and $\overline d(v)$ are defined analogously for $\overline{\Gamma_i}$. 
\end{warning}

\subsubsection{Forbidden geometries}

\begin{figure}[t]
\centering
\includegraphics[width=\columnwidth]{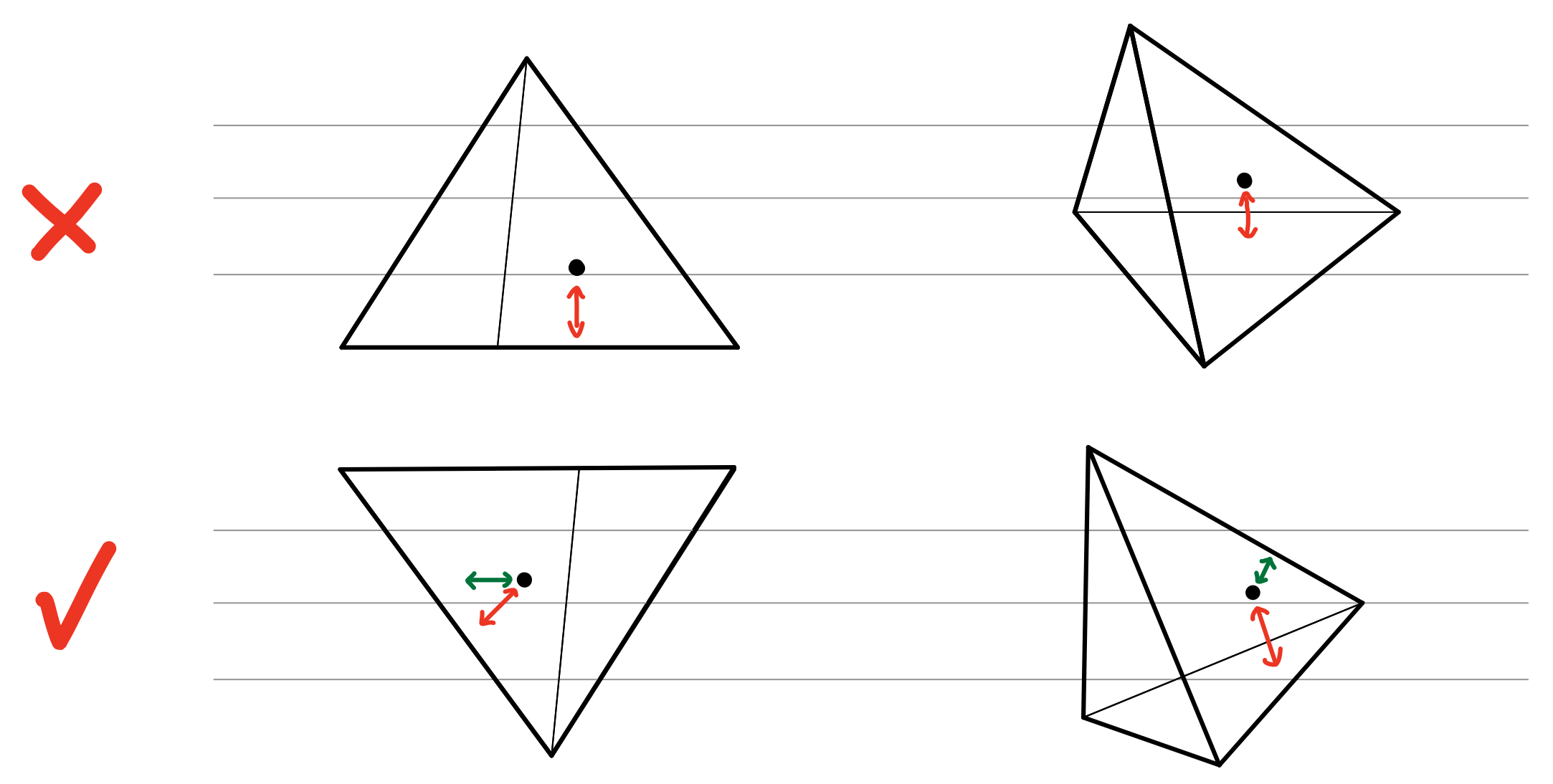}
\caption{
Examples of spacetime geometries and the obstructions created by the closure condition.
Time runs upwards. Horizontal lines are a reminder of the layered structure.
\emph{(Top)} In both examples there are vertices $v$ such that for some $\kappa$ the distance $d_\kappa(v)$ (the red arrow) is very small while $\overline{d_\kappa}(v)$ and $\overline d(v)$ are proportional to the size of the tetrahedron. 
The existence of such vertices is due to the fact that at some point in time a whole facet, or a last remaining edge of it, disappears suddenly. 
\emph{(Bottom)} These geometries are compatible with the closure condition (assuming that the facet gradients are bounded from below across the family of colexes of interest). In the first example a small value of $\overline{d_\kappa}(v)$ saves the day, whereas in the second case $\overline d(v)$ makes the difference for some vertices, as the one depicted. 
}
\label{fig:forbidden}
\end{figure}

\begin{lem}\label{lem:closure}
If the cells of the tetrahedral colex have at most $k_\text{face}$ faces and
\begin{itemize}
\item
for every cell $c\not\in C_i$ the set of points of $c$ that also belong to some element of $C_i$ is simply connected, and
\item
for any color $\kappa$ and any vertices $v,v'$ of $\Gamma_i$
\begin{equation}
\overline d(v,v')\leq k \,d(v,v'),
\end{equation}
\begin{equation}
\min\left(\overline{d_\kappa}(v), \overline d(v)\right)\leq k\, d_\kappa(v),
\end{equation}
\end{itemize}
then the closure condition is satisfied for
\begin{equation}
k_\text{close}=4k(k_\text{face}-1).
\end{equation}
\end{lem}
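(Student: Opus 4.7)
The task is to produce, for each $\phi\in\mathcal C_i$, a completion $\phi'\subseteq\overline{\Phi_i}$ of size at most $k_\text{close}|\phi|$ so that $\phi+\phi'$ lies in $\mathcal C$. My plan is to reformulate the closure problem as a routing problem in the dual graph: the Gauss-law defect produced by $\phi$ must be discharged through $\overline{\Gamma_i}$, and the two distance hypotheses of the lemma are exactly what is needed to bound the length of the required routing. First I would analyse the defect by splitting $\phi$ according to colour pairs: for each unordered pair $\{\kappa,\lambda\}$, the subgraph $\phi^{\kappa\lambda}$ lives on the bipartite subgraph of $\Gamma_i$ spanned by $\kappa$- and $\lambda$-cells (with half-edges attached to $\kappa$- and $\lambda$-facets), and the Gauss-law defect in that colour pair is exactly the set of odd-degree vertices of $\phi^{\kappa\lambda}$. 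A standard Eulerian decomposition writes $\phi^{\kappa\lambda}$ as an edge-disjoint union of paths---each joining two defect vertices in $\Gamma_i$ or terminating at a facet of colour $\kappa$ or $\lambda$---together with cycles, which can be discarded. Summed over colour pairs, the surviving paths have total length at most $|\phi|$, and each path $P$ with endpoints $v,v'$ provides a distance bound $d(v,v')\le |P|$, or $d_\kappa(v)\le |P|$ when $v'$ is a facet of colour $\kappa$.

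Next I would perform the routing. For a path $P$ between two cells, the first distance hypothesis yields a companion $\overline{\Gamma_i}$-path of length at most $k|P|$; for a facet-terminating path, the second hypothesis gives either a $\overline{\Gamma_i}$-path to a $\kappa$-facet of length $\overline{d_\kappa}(v)\le k|P|$, or a pair of $\overline{\Gamma_i}$-paths to two facets of other colours of combined length $\overline d(v)\le k|P|$, which absorbs the charge using the tetrahedral-code colour relation that a $\kappa$-charge is equivalent to the combined charges of the other three colours. Taking $\phi'$ to be the $\mathbb F_2$-sum of all companion paths across all colour pairs, the total length is bounded by a small constant times $k|\phi|$, with the prefactor $4(k_\text{face}-1)$ accounting for the two endpoints of each edge, the two-facet option in the facet case, and the colour-pair multiplicity per cell incident to $\phi$ (bounded by $k_\text{face}-1$ since each cell has at most $k_\text{face}$ faces).

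Finally, I would check that $\phi+\phi'$ satisfies the Gauss law at every vertex of $\Gamma$: interior $\Gamma_i$ defects are cancelled by construction; interior $\overline{\Gamma_i}$ vertices remain balanced because each companion path merely transits through them; and facet endpoints are compatible with the boundary rules. The main obstacle I anticipate is not the length estimate itself but verifying that the companion paths can be chosen in $\overline{\Gamma_i}$ with the correct colour structure, i.e.\ that the raw graph distances in $\overline{\Gamma_i}$ are realised by flux lines of the appropriate colour pair rather than arbitrary paths. This is where the topological hypothesis that $c\cap C_i$ is simply connected for each $c\notin C_i$ must enter: it ensures that no interface cell obstructs the continuation of a coloured flux line across the $\Gamma_i/\overline{\Gamma_i}$ boundary, so that the distance bounds supplied by the lemma's hypotheses can genuinely be realised by admissible $\overline{\Phi_i}$-chains.
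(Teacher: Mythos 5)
Your overall plan---decompose the defect of $\phi$ into paths, route companion paths through $\overline{\Gamma_i}$ using the two distance hypotheses, and account for facet terminations via the colour relations of the flux group---matches the skeleton of the paper's argument (which reduces the lemma to a statement about $\mathcal O_i\supseteq\mathcal C_i$ and then builds $\phi'$ from a tree decomposition of $\phi$). But there is a genuine gap at the step you yourself flag and then wave away. A shortest path in $\overline{\Gamma_i}$ realising $\overline d(v,v')$ or $\overline{d_\kappa}(v)$ is just a sequence of dual edges; it is \emph{not} a flux configuration whose syndrome is supported only at its endpoints. At each intermediate cell the two incident edges of the path generically carry different colour-pair fluxes, so the Gauss law fails there. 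Restricting the routing to the bipartite $\kappa\lambda$-subgraph would fix the colour structure but destroys the length bound: the hypotheses of the lemma bound distances in the \emph{plain} graph $\overline{\Gamma_i}$, and colour-pair-restricted distances can be much larger or infinite. Your proposed fix---that simple-connectedness of $c\cap C_i$ prevents ``interface cells from obstructing the continuation of a coloured flux line''---is not what that hypothesis does, and it does not address the problem, which occurs at every interior cell of the companion path, not just at the interface.

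The paper closes this gap with a dedicated construction: each companion path is fattened into a \emph{triangle strip} (lemma~\ref{lem:strip2}), using the fact that consecutive path edges meeting at a cell $c$ can be joined by a fan of at most $k_\text{face}-1$ triangles on the surface of $c$; the connectivity of that fan is exactly what the simple-connectedness hypothesis guarantees. A separate inductive argument (lemma~\ref{lem:strip1}) then shows that \emph{any} prescribed monopole configuration supported on the vertices of a strip is realised by some flux configuration inside the strip. This is where the factor $k_\text{face}-1$ actually comes from (the strip has at most $2(k_\text{face}-1)$ edges per original path edge), with the remaining factor of $2$ coming from walking a spanning tree of $\phi$ and $k$ from the distance hypotheses, giving $4k(k_\text{face}-1)$. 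Your attribution of the prefactor to ``endpoints, the two-facet option, and colour-pair multiplicity'' is therefore not the correct provenance, and your facet bookkeeping should use the relation $M=M_\kappa M_{\kappa'}$ for two distinct colours (which is why $\overline d(v)$ involves the two nearest facets of different colours), rather than the equivalence of a $\kappa$-charge with the combined charges of the other three colours. Without the strip construction, or an equivalent mechanism for converting graph paths into admissible flux chains, the proof does not go through.
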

\noindent
The proof is in appendix~\ref{sec:flux}.
Neither the first condition nor the first inequality are a big source of difficulties. Assuming that the lattice is regular enough, the first inequality holds for some $k$ (independent of the code size) as long as the $\Phi_i$-to-$\overline{\Phi_i}$ interfaces are \emph{simply connected and \emph{approximately} flat and convex}. The second inequality, however, forbids certain overall geometries. This is illustrated in figure~\ref{fig:forbidden}.

\section{Twister architecture}\label{sec:twister}

\begin{figure}
\centering
\includegraphics[width=\columnwidth]{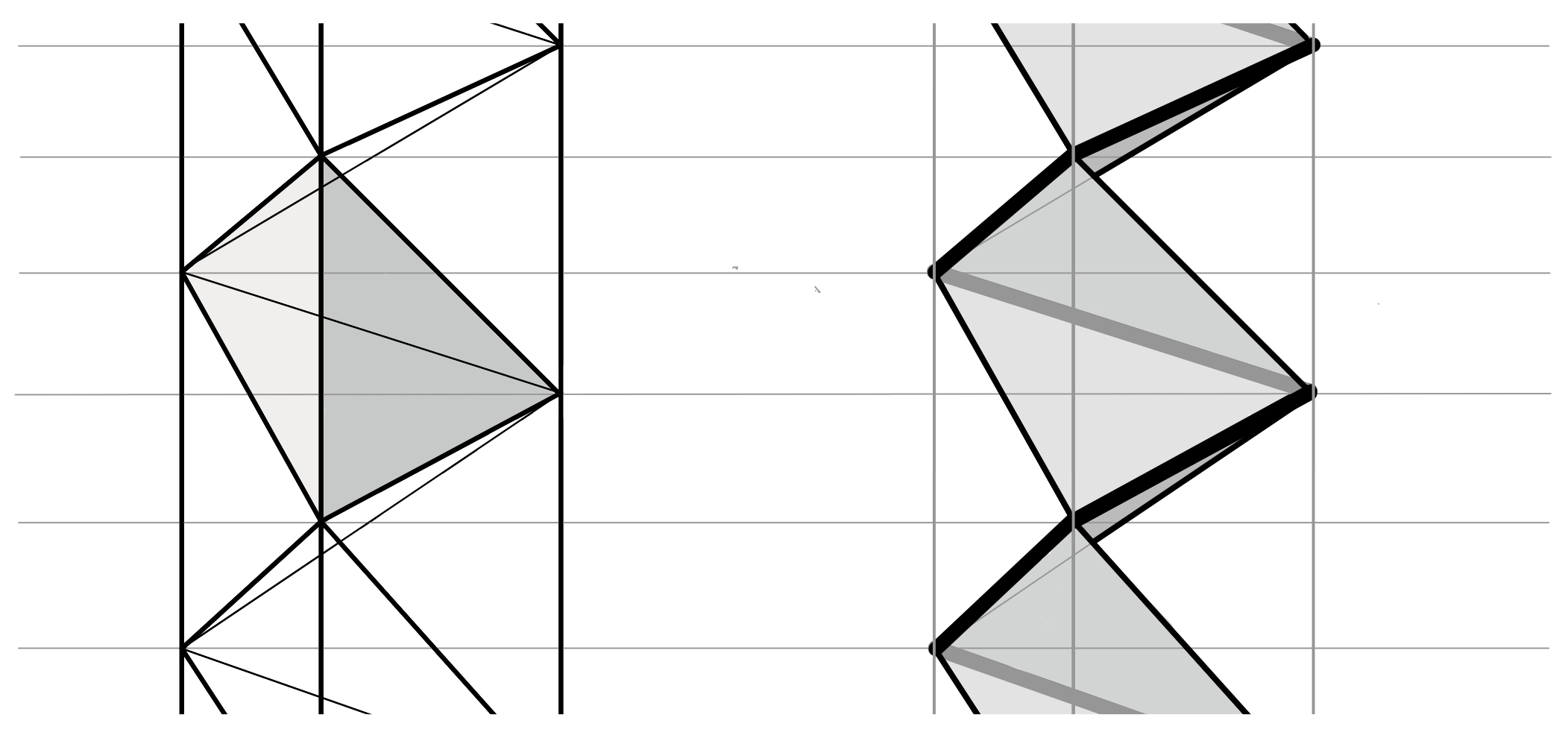}
\caption{
Lateral projection of the `twister' arrangement of tetrahedra on a prism. The tetrahedra are related by a discrete helicoidal symmetry. The horizontal lines are a guide to the eye. 
(Left) A single tetrahedron is highlighted.
(Right) The inner facets, those separating the tetrahedra, are shaded, and the edges along which these facets meet are highlighted.
}
\label{fig:linear}
\end{figure}

This section introduces a symmetrically arrangement of tetrahedra compatible with the geometrical conditions of section~\ref{sec:spacetime}.

\subsection{Linear logical graph}

Linear graph states enable MBQC for a single computational logical qubit. As shown in figure~\ref{fig:linear}, the corresponding tetrahedra can be arranged on a `twister' geometry: filling up a prism according to a discrete helicoidal symmetry. For the hybrid and two-dimensional schemes the time direction is the prism's axis, \emph{i.e.} the direction along which the logical computation proceeds. Notice that the spacetime geometry of each tetrahedron corresponds to one of the cases in figure~\ref{fig:forbidden}.

\subsection{Multi-qubit computation}

\begin{figure}[t]
\centering
\includegraphics[width=.8\columnwidth]{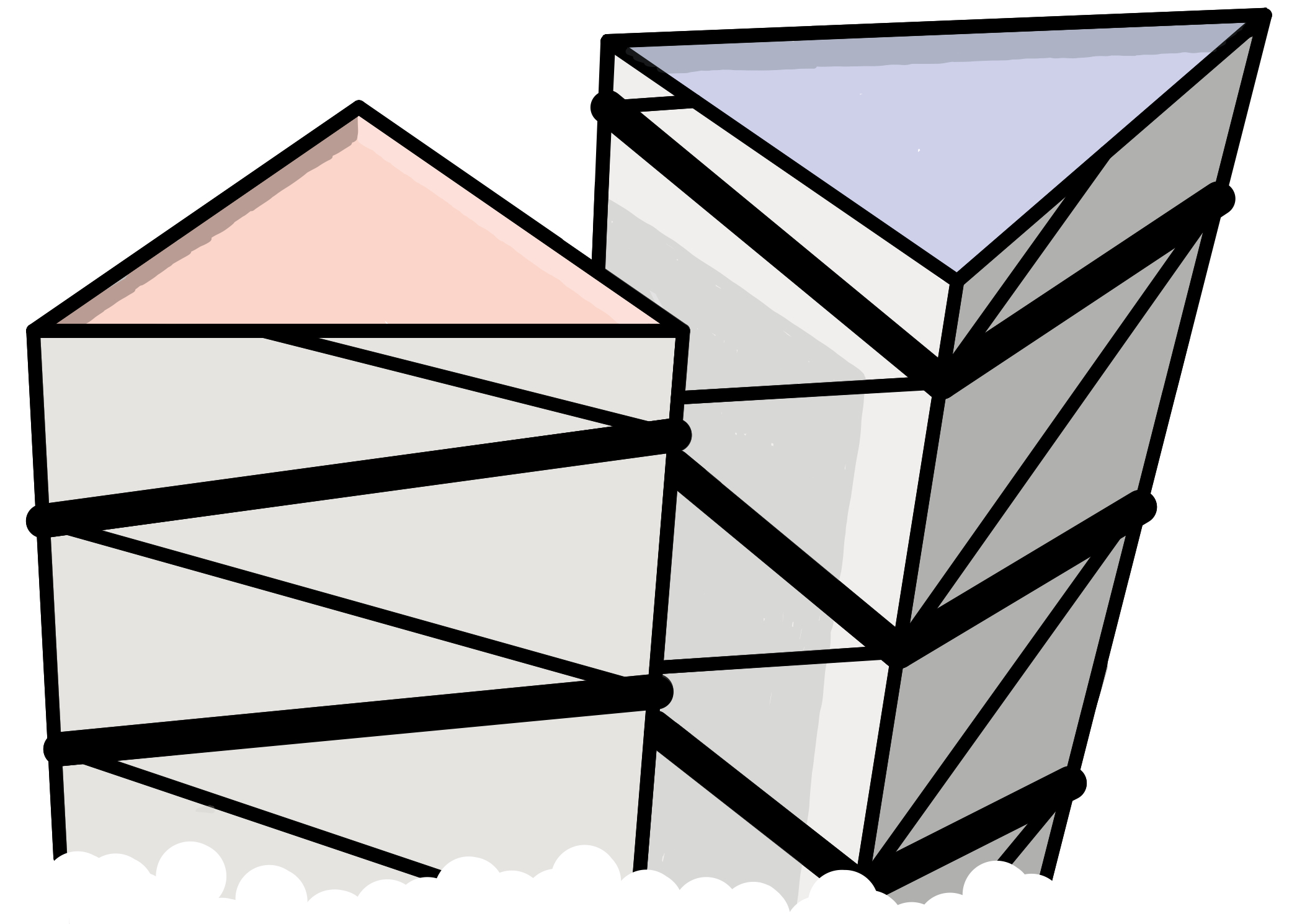}
\caption{Two `twister' configurations of tetrahedra with opposite chirality. Each contributes a logical computational qubit. When placed side by side the facets can be matched, enabling 2-qubit logical gates. Some edges are highlighted to make the symmetry more apparent.}
\label{fig:twisters}
\end{figure}

As suggested in figure~\ref{fig:twisters}, two `twister' arrangements of tetrahedra can be placed side by side as long as they have opposite chirality. They can be arranged in a row or fill the whole space, see figure~\ref{fig:pattern}. In the former case the resulting logical graph state is planar, see figure~\ref{fig:planar}. From the circuit model perspective this enables computation on a logical one-dimensional array. In the latter case the resulting graph is three-dimensional, which corresponds to computation on a two-dimensional array of logical qubits. In both cases the topology of the logical graph can be modified by removing some edges, \emph{i.e.} by not matching the corresponding facets in the construction of the resource state.

\begin{figure}[t]
\centering
\includegraphics[width=\columnwidth]{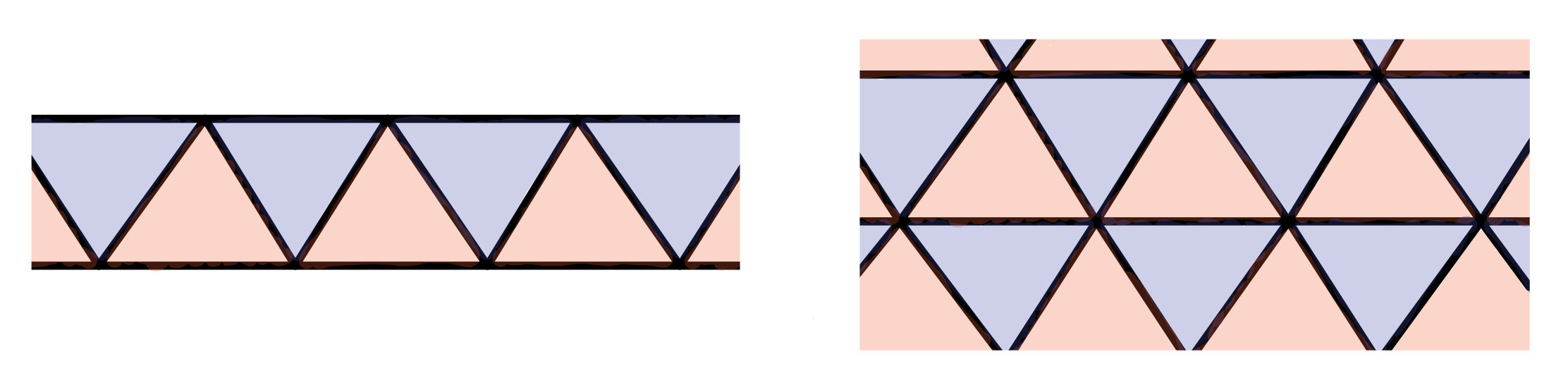}
\caption{Possible arrangements of `twisters'. Each twister contributes a logical computational qubit, and thus the pattern also reflects the topology of the logical circuit. (Left) A single row. (Right) A space-filling pattern. }
\label{fig:pattern}
\end{figure}

\begin{figure}[!t]
\centering
\includegraphics[width=.6\columnwidth]{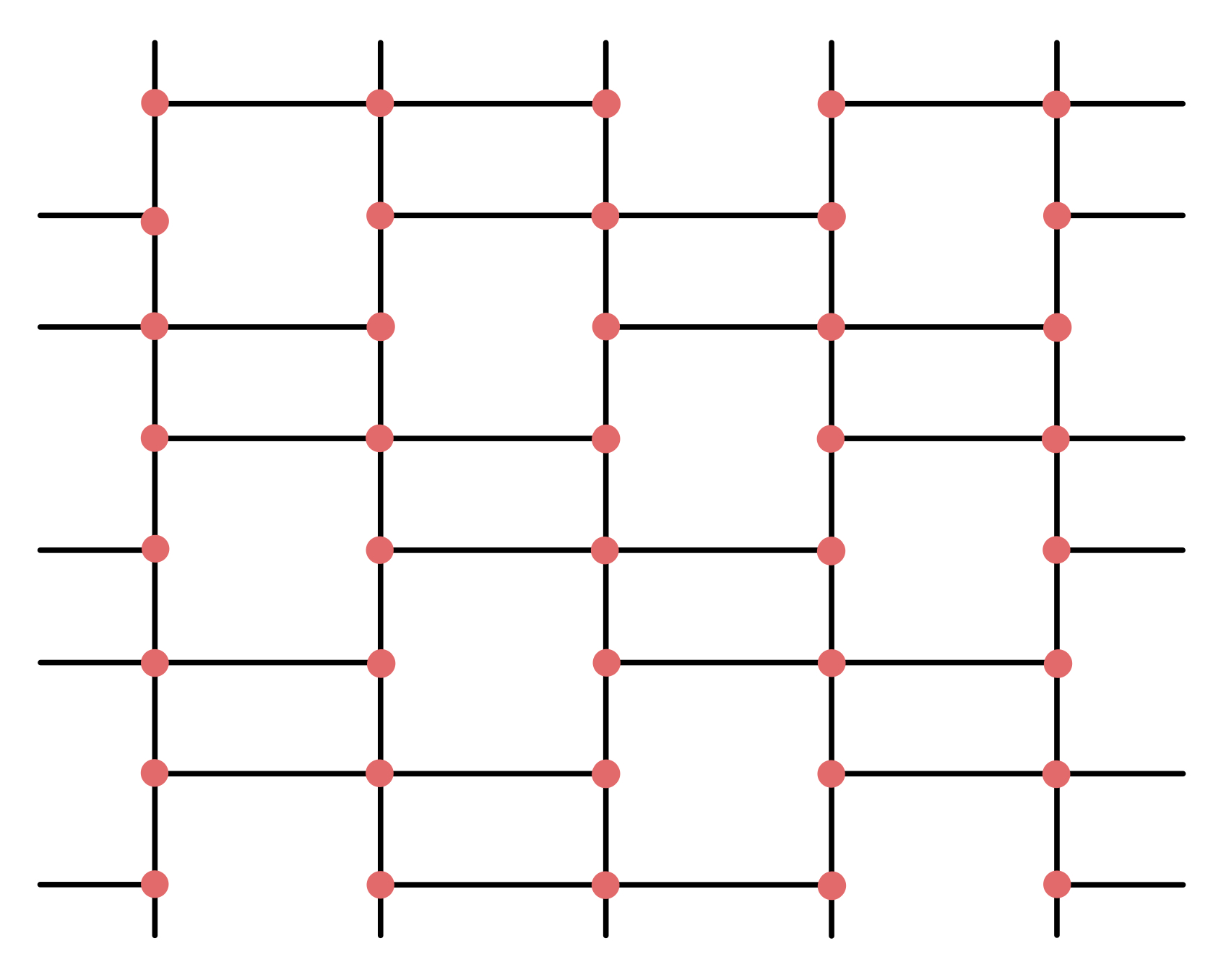}
\caption{The logical graph resulting of the arrangement of figure~\ref{fig:pattern} (left).}
\label{fig:planar}
\end{figure}

The three dimensional logical graph resulting from the space-filling pattern of figure~\ref{fig:pattern} (right) is dual to a 3-colex%
\footnote{
In particular, the 3-colex with the unit cell shown in figure~\ref{fig:unit}, which is discussed in \cite{bombin:2015:gauge}.
}.
Thus, the twister architecture enables to have a single symmetry govern both the overall geometry of the tetrahedral colexes and the geometry of the underlying 3-colex lattice.

\section{Naive JIT decoder}\label{sec:naive}

This section introduces a JIT decoder that, at each step and using conventional decoders,
\begin{itemize}
\item
estimates a portion of the $X$-error syndrome with the available information, and
\item
compensates for the difference between the current estimate and the one made in the previous step.
\end{itemize}
This `naive' approach is fault tolerant.

\subsection{Simplification}

We adopt the following simplifying assumption for decoding processes.

\begin{danger}
\center Classical computation and communication are instantaneous.
\end{danger}

\noindent In the limit of large systems, this is unphysical. The time required for decoding grows with the code size, whereas the available time is by assumption independent of the code size. Similarly, as the code size grows information becomes farther spread in space, but information can only travel at a finite speed.

We only consider decoding tasks for which there exist efficient algorithms%
\footnote{This avoids any possible logical pitfalls, since there is little point to have a quantum computer if classical computation is instantaneous.}. 
Moreover, it is likely that \emph{these tasks can also be successfully accomplished under tight physical assumptions}. Indeed, both renormalization group~\cite{duclos:2010:fast, sarvepalli:2012:efficient, duclos:2013:fault} and cellular automaton~\cite{herold:2015:cellular, herold:2017:cellular, dauphinais:2017:fault, kubica:2018:cellular} ideas have already been applied in very similar scenarios.

\subsection{Decoding problem}

Consider the decoding problem posed by the linear code $\mathcal C$ formed by $X$-error syndromes%
\footnote{
This problem first appeared in~\cite{bombin:2015:single-shot}: it is the first stage of single-shot error correction for 3D gauge color codes, where a noisy version of the so called `gauge syndrome' (either for $X$ or $Z$ errors) is processed to obtain a corrected gauge syndrome and, from it, a syndrome. It is also the first stage of single-shot error correction for $X$ errors in 3D color codes~\cite{bombin:2018:transversal}, which is the problem that the JIT decoder addresses.
}.
A decoder $D$ for this code is a map
\begin{equation}
D:\mathcal P(\Phi)\longrightarrow\mathcal C.
\end{equation}
where $\Phi$ is as in section~\ref{sec:2D_setting}. It satisfies for any $\phi\in\mathcal C$ and any $\omega\subseteq \Phi$,
\begin{equation}
D (\omega+\phi) =
D(\omega)+\phi.
\end{equation}
\emph{I.e.} the decoder estimates an error based solely on the sydrome of its input. The decoders considered below are assumed to satisfy analogous conditions.

\subsubsection{$\mathbf Z_2$ charge decoding.}

The decoding problem above is a $\bf Z_2^3$ analogue of the $\bf Z_2$ problem faced in the fault-tolerant correction of $X$ (or $Z$) errors in the 2D toric code (with an important difference to be noted below). In particular, it is possible to map the former problem to the later at the cost of a constant factor in the weight of the decoded errors~\cite{bombin:2015:single-shot}. Since the $\mathbf Z_2$ case is simpler and well known, we use it to illustrate the naive JIT decoder. 

The $\mathbf Z_2$ decoding problem is as follows~\cite{dennis:2002:tqm}, see also appendix~\ref{sec:z2_decoding}. There is a 3D lattice with set of edges $E$. The code is formed by closed sets ($\mathbf Z_2$ chains) of edges, and the inputs for decoding are arbitrary sets of edges $E$. In the case of fault-tolerant error correction in the toric code, only the homology class of the decoded set of edges is relevant, whereas for the JIT decoder it is the actual codeword that matters. The difference is, however, immaterial if the decoder approximates the most likely error (as in minimum weight matching~\cite{dennis:2002:tqm}), rather than the most likely equivalence class.

\begin{figure}
\centering
\includegraphics[width=.6\columnwidth]{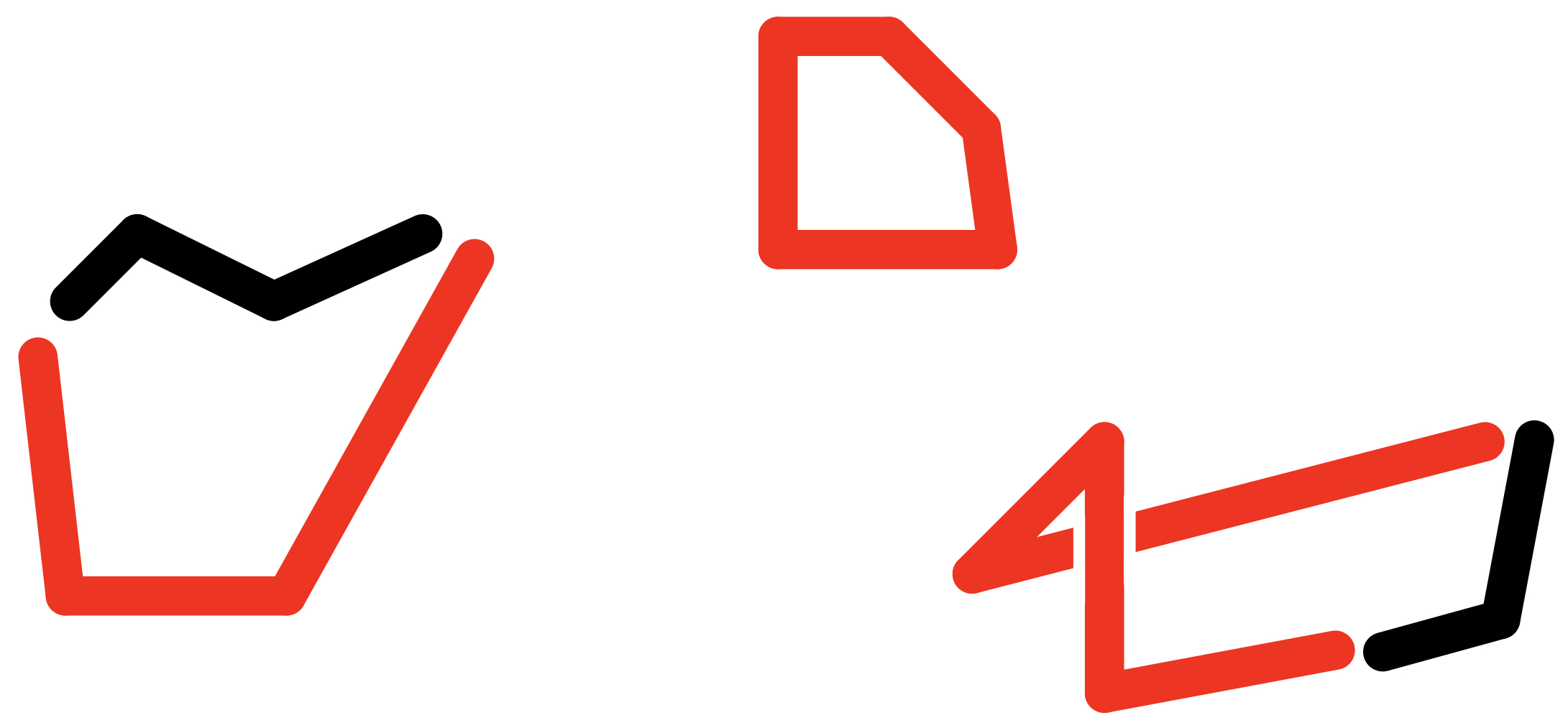}
\caption{
The goal of the $\mathbf Z_2$ charge decoding problem is to find a likely error (black set of edges) with the same endpoints as the actual error (red).
}
\label{fig:z2}
\end{figure}

The errors involved in the process are as follows. The input is some subset of edges $\tilde e\subseteq E$. Suppose that it takes the form $e+\omega$, where $e$ is a codeword and $\omega$ the error. The error syndrome is extracted from $\tilde e$: it is the set of endpoints of $\tilde e$, which coincide with the endpoints of $\omega$. Decoding amounts to choose some $\omega'$ with the same endpoints as $\omega$, see figure~\ref{fig:z2}:
\begin{equation}
e +\omega 
\quad\overset D\longmapsto \quad 
e + (\omega+\omega').
\end{equation}
The residual error is $\omega+\omega'$.

\subsection{Adapted decoders}

The naive JIT decoder requires two modified versions of the decoder $D$. Recall the layer structure introduced in section~\ref{sec:2D_setting}. For each layer $i$ we consider 
\begin{itemize}
\item
a decoder $D_i$ for errors in $\Phi_i$ with \emph{open} boundary conditions, used for the $i$-th \emph{estimating} step, and 
\item
a decoder $\overline {D_i}$, based on a decoder for errors in $\overline{\Phi_i}$ with \emph{closed} boundary conditions, and used for the $(i+1)$-th \emph{compensating} step. 
\end{itemize}
`Boundary' here refers to the interface between $\Phi_i$ and $\overline{\Phi_i}$, which in practice takes the form of a 2D sheet separating the first $i$ layers from the rest.

\begin{figure}
\centering
\includegraphics[width=\columnwidth]{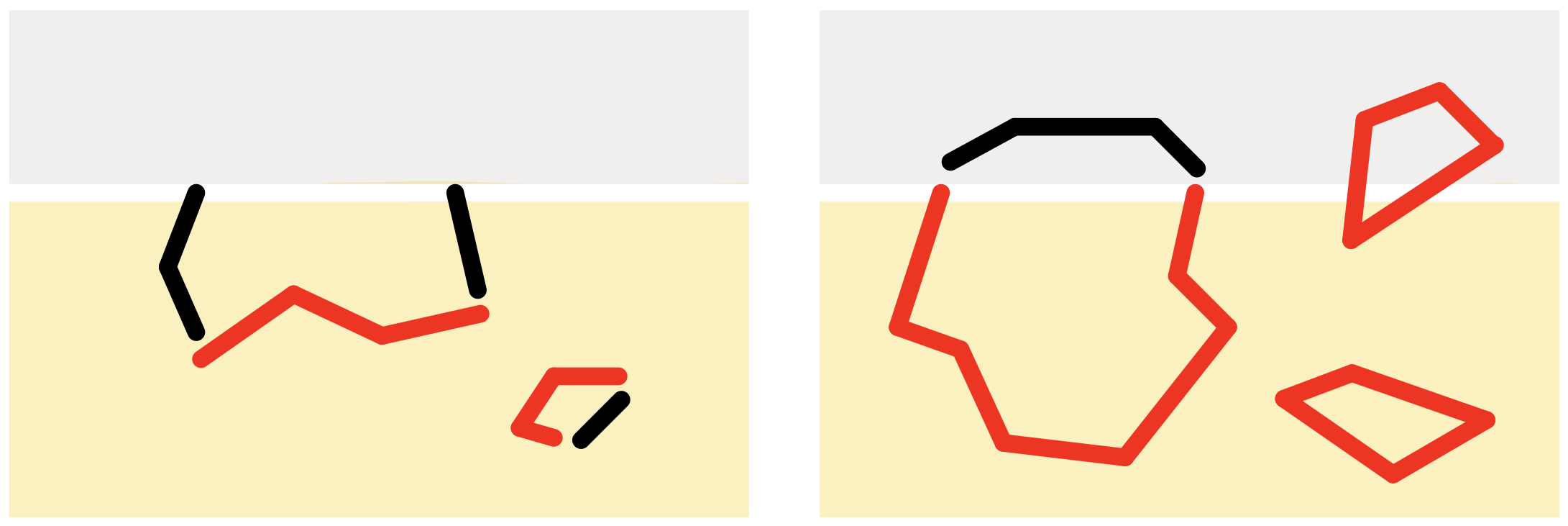}
\caption{
The workings of the decoders $D_i$ (left) and $\bar D_i$ (right). The orange and grey regions represent $\Phi_i$ and $\overline{\Phi_i}$ respectively.
The sets of edges $\omega$ and $\omega'$ are represented in red and black, respectively, see equations~\eqref{eq:open_error} and~\eqref{eq:closure_error}.
}
\label{fig:decoders}
\end{figure}

\begin{warning}
The {\bf open boundary conditions decoder} is a function
\begin{equation}
D_i:\mathcal P(\Phi_i)\longrightarrow\mathcal C_i.
\end{equation}
\end{warning}
\noindent 
For this decoder the original decoding problem is modified by imposing that for any $\omega\subseteq \Phi_i$ and $\gamma\subseteq \overline{\Phi_i}$, the error $\omega+\gamma$ is as likely as the error $\omega$. Thus it is meaningful to have subsets of $\Phi_i$ as both input and output: the $\overline{\Phi_i}$ component is completely random and irrelevant. It is in this sense that the boundary conditions are open, see appendix~\ref{sec:open}.
Decoding takes the form
\begin{equation}\label{eq:open_error}
\phi\cap\Phi_i + \omega
\quad\overset{D_i}\longmapsto \quad 
\phi\cap\Phi_i+\omega+\omega',
\end{equation}
where the input is composed by some $\phi\in\mathcal C$ and an error $\omega\in\Phi_i$. In the $\mathbf Z_2$ picture, illustrated in figure~\ref{fig:decoders} (left), $\omega$ and $\omega'$ share endpoints \emph{except} possibly at the interface of $\Phi_i$ and $\overline{\Phi_i}$. .

\begin{warning}
The {\bf closed boundary conditions decoder} is a function
\begin{equation}
D_i':\mathcal P\left(\overline{\Phi_i}\right)\longrightarrow\mathcal C\cap \mathcal P\left(\overline{\Phi_i}\right).
\end{equation}
\end{warning}
\noindent
For this decoder the original decoding problem is constrained to the subset of edges $\overline{\Phi_i}$: these are the only noisy ones. Rather than using this decoder directly, it will be used in the `compensating' step, as indicated above. To this end, let us recast it as
\begin{equation}
D_i'(\phi)=: \phi + E_i(\phi),
\end{equation}
where $E_i$ is a function that \emph{only depends on the syndrome of its argument} and yields a Pauli operator with that syndrome, \emph{i.e.} $E_i(\phi)$ is the estimated error for $\phi$.

\begin{warning}
The {\bf closure decoder} is the function
\begin{align}
\overline {D_i}:\mathcal C_i' &\longrightarrow\mathcal C,
\\
 \phi &\longmapsto \phi + E_i(\phi),
\end{align}
with domain
\begin{equation}
\mathcal C_i' := \{
\phi+\omega
\,|\,
\phi\in \mathcal C,\,\, \omega\in \mathcal C_i
\}.
\end{equation}
\end{warning}
\noindent 
Notice that only the $\overline {\Phi_i}$ component of $\phi$ is modified, \emph{i.e.} for every $\phi\in\mathcal C_i'$
\begin{equation}
\phi+ \overline {D_i}(\phi)\,\in\overline{\Phi_i}.
\end{equation}
The $\Phi_i$ component of the input sets the boundary conditions to be satisfied by the $\overline{\Phi_i}$ component of the output, which `closes the open ends'.
Decoding takes the form
\begin{equation}\label{eq:closure_error}
\phi + \omega
\quad\overset{\overline{D_i}}\longmapsto \quad 
\phi+(\omega+\omega'),
\end{equation}
where $\phi\in \mathcal C$, $\omega\in\mathcal C_i$ and $\omega'\in\overline{\Phi_i}$. In the $\mathbf Z_2$ picture, illustrated in figure~\ref{fig:decoders} (right),  $\omega$ and $\omega'$ share endpoints.

Appendix~\ref{sec:open} provides a simple criteria guaranteeing that `open' boundary conditions are truly so. For reasonable partitions of the lattice into layers it should in general be straightforward to adapt any decoding algorithms $D$ in order to obtain algorithms $D_i$ and $\overline{D_i}$. Indeed, the situation is not conceptually different from the well known `smooth' and `rough' boundaries in toric codes.

\subsection{Computation}

At the $i$-th layer the naive JIT decoder produces a syndrome $\hat \phi$ of the form
\begin{equation}
\hat \phi \cap \Lambda_i := \gamma_i\cap \Lambda_i,
\end{equation}
where $\gamma_i$ is computed, as stated above, in two steps, estimation 
\begin{equation}\label{eq:estimate} 
\gamma_i' :=D_i(\tilde\phi\cap \Phi_i) 
\end{equation}
and compensation
\begin{equation}\label{eq:compensate}
\gamma_i := \gamma_i' + \overline{D_{i-1}}(\gamma_{i-1} + \gamma_i'\cap \Phi_{i-1}).
\end{equation}
with $\gamma_0 := \emptyset$%
\footnote{
A more natural expression would be
\begin{equation}\label{eq:compensate2}
\gamma_i := \gamma_i' + \overline{D_{i-1}}\left(\left(\gamma_{i-1} + \gamma_i'\right)\cap \Phi_{i-1}\right).
\end{equation}
We use~\eqref{eq:compensate} so that equation~\eqref{eq:JIT_epsilon2} holds, which is used in proving lemma~\ref{lem:naive}.
Both definitions are equivalent as long as $\gamma_i$ is always inside $\Phi_i$, but this needs not be the case in general.
}.
The above is well defined because $\gamma_i\in\mathcal C_i'$. It is not difficult to check that
\begin{equation}
\hat\phi\cap\Phi_i = \gamma_i\cap\Phi_i.
\end{equation}
It is worth pointing out that the same decoders can be used when information from farther layers is available, by settling on the values of $\hat \phi$ for a single layer but computing with all the available layers.

\subsection{Errors}\label{sec:errors}

\begin{figure}
\centering
\includegraphics[width=\columnwidth]{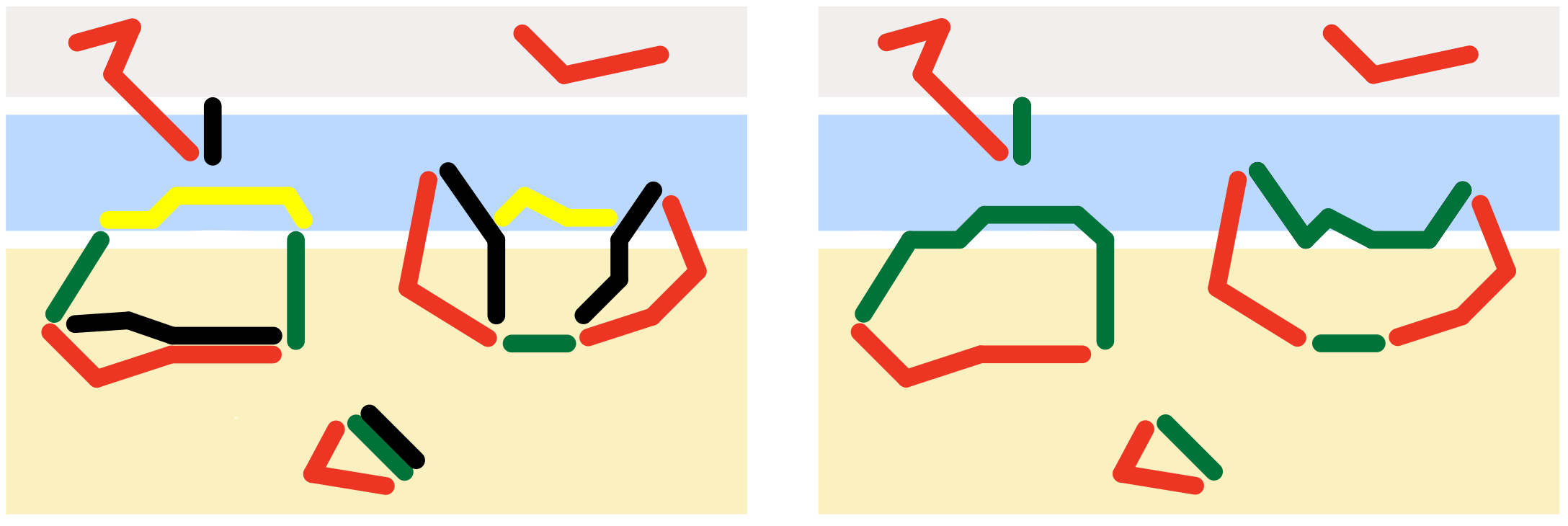}
\caption{
The workings of the naive JIT decoder. The orange, blue and grey region represent $\Phi_{i-1}$, $\Lambda_i$ and $\overline{\Phi_i}$, respectively.
The sets of edges $\omega$ is represented in red, $\omega_i'$ in black, $\epsilon_i$ in yellow, $\omega_{i-1}$ in green (left only) and $\omega_i$ in green (right only). See equations~(\ref{eq:JIT_omega}-\ref{eq:JIT_epsilon}).
}
\label{fig:JIT_decoder}
\end{figure}

Let us rephrase the computation in terms of several different kinds of `errors'. 

\begin{warning}
Let $\phi\in \mathcal C$ be the noiseless flux configuration (a syndrome) and $\tilde \phi\subseteq \Phi$ the noisy one. We define

\begin{itemize}
\item
the error $\omega$ in the measurements
\begin{equation}\label{eq:JIT_omega}
\omega := \phi + \tilde\phi,
\end{equation}
\item
the residual error $\hat\omega$  of JIT decoding
\begin{equation}
\hat\omega := \phi + \hat\phi,
\end{equation}
\item
the estimated error $\omega_i'$ at the $i$-th step
\begin{equation}
\omega_i' := \gamma'_i + \tilde\phi\cap\Phi_i,
\end{equation}
\item
the effective estimated error $\omega_i$ at the $i$-th step
\begin{equation}
\omega_i := \gamma_i + \tilde\phi\cap\Phi_i,
\end{equation}
\item
the compensating flux configuration $\epsilon_i$ at the $i$-th step
\begin{equation}\label{eq:JIT_epsilon}
\epsilon_i := E_i\left(\gamma_{i-1} + \gamma_i'\cap \Phi_{i-1}\right).
\end{equation}\end{itemize}
\end{warning}
\noindent
A crucial property is that, since $\gamma_i$ and $\gamma_i'$ have the same syndrome, $\epsilon_i$ only depends on the estimated errors of two consecutive guessing steps:
\begin{equation}\label{eq:JIT_epsilon2}
\epsilon_i =  E_i\left(\omega_{i-1}' + \omega_i'\cap \Phi_{i-1}\right).
\end{equation}
The corresponding $\mathbf Z_2$ picture is illustrated in figure~\ref{fig:JIT_decoder}: $\omega_i'$ (and thus $\omega_i$) has the same endpoints as $\omega\cap\Phi_i$, except at the interface separating $\Phi_i$ and $\overline{\Phi_i}$.
The expressions~\ref{eq:estimate} and~\ref{eq:compensate} can be restated in terms of these various errors,
\begin{align}
\omega_i' &= \omega\cap\Phi_i + D_i(\omega\cap\Phi_i),
\\
\omega_i &= \omega_{i-1} + \omega_i'\cap \Lambda_i + \epsilon_i,
\end{align}
and the residual error of JIT decoding is
\begin{equation}
\hat \omega = \omega_n = 
\sum_i \left(
\omega_i'\cap\Lambda_i + \epsilon_i
\right).
\label{eq:JIT_error}
\end{equation}

\subsection{Fault tolerance}\label{sec:ft_2D}

Under the same assumptions as in section~\ref{sec:ft}, the naive JIT decoder makes the 2D scheme fault tolerant. As discussed next, this is true in particular
\begin{itemize}
\item
for decoders with an efficient implementation, and
\item
in an unfavorable scenario that omits the strategy of section~\ref{sec:delta}.
\end{itemize}

Just as in the three-dimensional scenario, the preparation of the encoded graph state is the crux of the scheme's fault-tolerance. In this case there is an additional complication: the residual error of JIT decoding~\eqref{eq:JIT_error} does not seem amenable to the `confined syndromes' approach of~\cite{bombin:2015:single-shot, bombin:2018:transversal}. 

\begin{figure}
\centering
\includegraphics[width=.6\columnwidth]{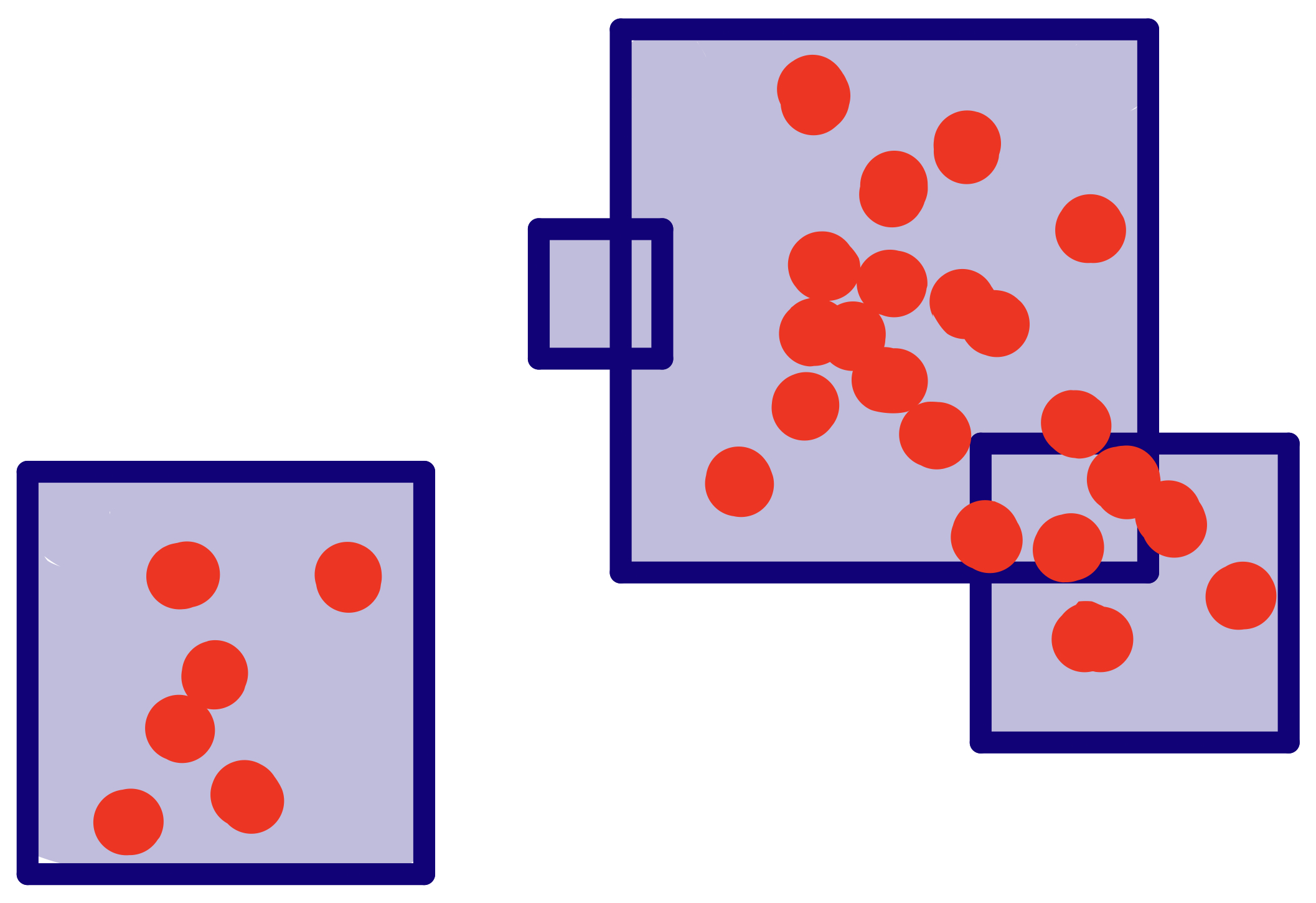}
\caption{
Errors (red dots) confined to the interior of a collection of balls.
}
\label{fig:ball-local}
\end{figure}

The workaround is to consider an alternative to local noise that is well-suited to topological codes%
\footnote{This is briefly motivated in appendix~\ref{sec:locality}.
}.
The key element are geometrical balls, defined by some metric that reflects the structure of errors and syndromes. It is assumed that that each error $e$ is mapped to a ball set $W$ such that the support of $e$ is contained within the balls in $W$, see figure~\ref{fig:ball-local}. Ball-local noise is defined in terms of the distribution of balls $W$.

\begin{warning}
A distribution of error operators, each with support constrained to the interior of some set of balls $W$, is {\bf ball-local} with rate $0\leq p\leq 1$ if for any set of balls $B$
\begin{equation}\label{eq:ball-local}
\text{prob}(B\subseteq W)\leq p^{r(B)},
\end{equation}
where $r(B)$ is the sum of the radii of the balls in $B$.
\end{warning}

\noindent
Notice that $\hat\omega$ is the JIT counterpart to the error $\bar\omega$ of section~\ref{sec:ft}.
By the same argument used for $\bar\omega$, the $X$ component of the effective Pauli frame error $F_{\hat\omega}$ is only fixed up to a logical error. The following result, technically stated in theorem~\ref{thm}, assumes that (see appendix~\ref{sec:naive} for details)
\begin{itemize}
\item
the closure condition holds, 
\item
the family of tetrahedral colexes has a uniform local structure (such as the one described in~\cite{bombin:2015:gauge}), and 
\item
the decoders $D_i$ and $D_i'$ satisfy certain technical conditions (satisfied for the decoder $D$ described in~\cite{bombin:2015:single-shot}). 
\end{itemize}

\begin{success}
If the error rate for ancillas outcomes is below a threshold $p_0$, the $X$ component of $F_{\hat\omega}$ follows a ball-local distribution with rate
\begin{equation}
\left( 
\frac p{p_0}
\right)^k,
\end{equation}
where both $p_0$ and $k$ depend on the local structure of the colexes and on the decoders $D_i$, $D_i'$.
\end{success}

\noindent
When the $X$ component of $F_{\hat\omega}$ follows a ball-local distribution, so does $F_{\hat\omega}$ too, because it is obtained from its $X$ component via (local) logical CP gates.
Observe that the code qubits measurements affected by $F_{\hat\omega}$ do not belong to logical qubits measured in the $Z$ basis.
Therefore, the logical outcome processing affected by $F_{\hat\omega}$ always involves a $\mathbf Z_2^3$ charge decoder%
\footnote{Logical measurements in the $X$, $Y$ or $X\pm Y$ bases are all equivalent to a transversal gate~\cite{bombin:2015:gauge} followed by a transversal $X$ measurement. The relevant syndrome for the decoding of $X$ measurements is that of cell operators ($X$ stabilizer generators). This syndrome can be understood in terms of $Z_2^3$ charges~\cite{bombin:2018:transversal}. Notice that, despite sharing the same nature, this decoding problem and the decoding problem of $X$-error syndromes are different.
}.
This decoding problem can be mapped to three connected copies of the $\mathbf Z_2$ charge problem~\cite{bombin:2015:single-shot, kubica:2015:unfolding}.
This mapping preserves the ball-local character of the error distribution%
\footnote{This comes at a cost: if the error rate is $p$ then via the mapping it becomes $p^{1/3}$: a single ball in the original geometry gives rise to three different balls on the mapped geometry.
}.
Appendix~\ref{sec:z2} shows that, for the $\mathbf Z_2$ charge problem, there exist efficient decoders with an error correction threshold for ball-local noise. Thus, fault-tolerance is indeed feasible.

\section{Discussion}

Two-dimensional colorful quantum computation exemplifies the importance of investigating fault-tolerance with a focus on processes, rather than error-correcting codes per se. Given the prevalence of stabilizer-based techniques in the field, graph states offer an excellent tool to adopt such a point of view. This is perhaps particularly apparent in the realm of topological methods, where the graph state picture reveals otherwise hidden symmetries and simplifies the description of protocols.

The analysis of two-dimensional colorful quantum computation above sticks to the graph state picture. This removes the need to understand the states of the system as the computation proceeds, or to construct explicit circuit models. However, these are aspects that deserve to be studied. More generally, it would be desirable to understand colorful quantum computation from a condensed matter / TQFT perspective.

As presented, the fault tolerance of two-dimensional colorful quantum computation relies on unphysical assumptions, because the time required for classical computation and communication in JIT decoding is completely ignored. Hopefully these assumptions can be lifted, if not theoretically, at least as part of a more practical exploration of JIT decoders.

The current knowledge of error thresholds and decoders for 3D color codes is limited~\cite{kubica:2018:three, brown:2016:fault}. It is unclear what the impact of JIT decoding might be on error thresholds, in particular when compared to the conventional three-dimensional scenario. A reason for optimism is that JIT decoding mostly contributes erasure-like errors, which have a much more benign impact than other forms of noise. Moreover, adding limited amounts of delay opens up the possibility of interpolating between the most extreme two-dimensional case (whatever it is) and the three-dimensional scenario.

\vspace{\baselineskip}
\noindent
{\bf Acknowledgements.}  The bulk of the resource state used in colorful quantum computation first appeared on discussions with Naomi Nickerson over alternatives transcending the foliated approach~\cite{bolt:2016:foliated} to MBQC with 2D color codes, somewhat along the lines of~\cite{nickerson:2018:measurement}. I  would  like  to  thank  the  whole  PsiQuantum  fault  tolerance  team, Christopher  Dawson,  Fernando  Pastawski,  Kiran  Mathew,  Naomi Nickerson,  Nicolas  Breuckmann, Andrew Doherty, Jordan Sullivan, and Mihir  Pant for  their  considerable  support and  encouragement. In particular I would like to thank Terry Rudolph,  Nicolas Breuckmann,  Naomi Nickerson,  Fernando Pastawski,  Mercedes Gimeno-Segovia,  Peter  Shadbolt  and  Daniel  Dries  for  many  useful  discussions and/or  very  generous  feedback  at  various  stages  of  this  manuscript.

\appendix

\section{Notation}\label{sec:notation}

\begin{warning}

\begin{itemize}
\item
The symmetric difference of sets is represented with $+$ (lower precedence than $\cup, \cap$).
\item
$\mathcal P(A)$ is the powerset of $A$.
\item
$P$ is \emph{the} Pauli group, and \emph{a} Pauli group is any of its subgroups. 
\item
$P_X$, $P_Z$ are the Pauli groups generated by $X$ and $Z$ operators respectively. 
\item
$a|_x\propto a$ if $a=b\otimes c$ is a Pauli operator on a system $x\otimes y$.
\item
$\partial a$ is the error syndrome of $a$.
\end{itemize}
 
Given Pauli groups $A, B$:
\begin{itemize}
\item 
$A|_x$ contains the elements of $A$ restricted to the subsystem $x$, i.e. it is the group $\{a|_x \,|\, a \in A\}$. Notice that $A|_x=\langle i \rangle\, A|_x$.
\item 
$A\|_x$ is the subgroup of elements of $A$ with support in the subsystem $x$, i.e. it is the set $\{a_x \,|\, a_x\otimes \mathbf 1_y \in A\}$.
\item
$\mathcal Z_A(B)$ is the subgroup of elements of $A$ that commute with the elements of $B$. We use shorthands such as $\mathcal Z_X(\cdot):= \mathcal Z_{P_X}(\cdot)$ or $\mathcal Z_r(\cdot):= \mathcal Z(\cdot)\|_r$.
\end{itemize}

\end{warning}

\section{Piecewise Pauli frame}\label{sec:piecewise}

The aim of this appendix is to verify that the causality condition of section~\ref{sec:causality} ensures that the Pauli frame can be obtained in a piecewise manner. We start with the following observation.

\begin{lem}

Given a Pauli group $A$ and a subset of qubits $r$
\begin{equation}\label{lem:dual}
\mathcal Z(A)|_r = \mathcal Z_{r}(A\|_r).
\end{equation}
\end{lem}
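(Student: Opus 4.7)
The plan is to prove the two inclusions separately. The forward inclusion $\mathcal Z(A)|_r \subseteq \mathcal Z_r(A\|_r)$ I expect to be essentially immediate: if $b\in\mathcal Z(A)$ then $b$ commutes with every $a\in A\|_r$; since such $a$ acts trivially outside $r$, the commutation of $b$ with $a$ is the same as the commutation of $b|_r$ with $a=a|_r$, so $b|_r\in\mathcal Z(A\|_r)$, and $b|_r$ is by construction supported on $r$, placing it in $\mathcal Z(A\|_r)\|_r=\mathcal Z_r(A\|_r)$.

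For the reverse inclusion $\mathcal Z_r(A\|_r)\subseteq\mathcal Z(A)|_r$, I plan to construct an explicit extension. Take $c$ supported on $r$ that commutes with all of $A\|_r$; the goal is to produce some $d\in P\|_{\bar r}$ with $c\otimes d\in\mathcal Z(A)$, so that $(c\otimes d)|_r=c$ exhibits $c$ as an element of $\mathcal Z(A)|_r$. Writing each $a\in A$ (up to phase) as $a_r\otimes a_{\bar r}$, the commutator of $c\otimes d$ with $a$ splits as the product of the commutators of $c$ with $a_r$ and of $d$ with $a_{\bar r}$; hence the required condition on $d$ is that its commutation with $a|_{\bar r}$ matches the pre-specified commutation of $c$ with $a|_r$, as a $\mathbb Z_2$-valued function of $a\in A$.

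The main step is then a linear-algebra argument. Define $\phi:A\to\mathbb Z_2$ by letting $\phi(a)$ record the commutator of $c$ with $a|_r$ (which is phase-invariant, so the ambiguity in $a|_r$ is harmless). This $\phi$ is a group homomorphism, and by the hypothesis $c\in\mathcal Z(A\|_r)$ it is trivial on the kernel of the restriction map $a\mapsto a|_{\bar r}$, which is exactly $A\|_r$. Therefore $\phi$ descends to a homomorphism $\bar\phi:A|_{\bar r}\to\mathbb Z_2$. The existence of the desired $d$ is then the statement that every such linear functional on a subgroup of the Pauli group mod phases is realized by commutation with some Pauli operator, which is a direct consequence of the non-degeneracy of the symplectic commutator form on $P/\langle i\rangle$. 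Setting $b:=c\otimes d$ closes the argument.

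The only real obstacle is the extension step in the previous paragraph; once the non-degeneracy of the symplectic form is invoked the construction of $d$ is automatic, so the whole proof should fit in a few lines.
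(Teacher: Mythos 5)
Your proof is correct, but it takes a different route from the paper's. The paper's argument is a three-line formal manipulation: it first records the ``easy'' dual identity $\mathcal Z(A)\|_r = \mathcal Z_r(A|_r)$ (an operator supported on $r$ commutes with $A$ iff it commutes with the restrictions $A|_r$), and then obtains the stated identity by sandwiching with double centralizers, $\mathcal Z(A)|_r = \mathcal Z_r(\mathcal Z_r(\mathcal Z(A)|_r)) = \mathcal Z_r(\mathcal Z(\mathcal Z(A))\|_r) = \mathcal Z_r(A\|_r)$, using $\mathcal Z(\mathcal Z(B))=\langle i\rangle B$ for Pauli groups. You instead prove the two inclusions directly, and for the nontrivial one you explicitly construct the extension $d$ on $\bar r$ by descending the commutation functional $\phi(a)=\mathrm{comm}(c,a|_r)$ to $A|_{\bar r}$ and invoking non-degeneracy of the symplectic form. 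These are essentially the same fact packaged differently: the double-centralizer theorem the paper black-boxes is itself proved by the functional-extension argument you spell out. The paper's version is slicker and reusable (the same duality trick reappears in its lemma on balls of colexes), while yours is more self-contained and makes visible exactly where the symplectic structure enters; the one small imprecision is that the kernel of $a\mapsto a|_{\bar r}$ is the set of $a\in A$ with $a|_{\bar r}\propto\mathbf 1$, whose restriction to $r$ is $\langle i\rangle A\|_r$ rather than $A\|_r$ itself, but since commutation is phase-blind this does not affect the argument.
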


\begin{proof} Observe that, dually,
\begin{equation}
\mathcal Z(A)\|_r = \mathcal Z_{r}(A)= \mathcal Z_{r}(A|_r),
\end{equation}
so that
\begin{equation}
\mathcal Z(A)|_r = \mathcal Z_r(\mathcal Z_r(\mathcal Z(A)|_r)) = \mathcal Z_r(\mathcal Z(\mathcal Z(A))\|_r)
=\mathcal Z_{r}(A\|_r).
\qedhere
\end{equation}
\end{proof}

Assuming that the causality condition is satisfied, at the $i$-th step the synfrome $\phi_i$ of $S_Z\|_{R_i}$ is accesible. In particular, $\phi_i$ is the restriction to $S_Z\|_{R_i}$ of some syndrome $\phi$ of $S_Z$ (independent of $i$). Lemma~\ref{lem:prescription} below guarantees the feasibility of the following straightforward prescription to choose the Pauli frame at the $i$-th step.
\begin{success}
Choose $q_i\in P_X$ with support in $R_i$, syndrome $\phi_i$ over $S_Z\|_{R_i}$ and such that
\begin{equation}
q_i|_{R_{i-1}}\propto q_{i-1}.
\end{equation}
\end{success}

\begin{lem}
\label{lem:prescription}
If $q_i\in P_X$ has support on $R_i$ and syndrome $\phi_i$ over $S_Z\|_{R_i}$, there exists $\bar q_i\in P_X$ with no support on $R_i$ and such that $q_i\bar q_i$ has syndrome $\phi$ over $S_Z$.
\end{lem}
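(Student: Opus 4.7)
The plan is to reduce the lemma to a duality statement in the spirit of~\eqref{lem:dual}. Let $\overline{R_i}$ denote the complement of $R_i$ in the set of qubits of the tetrahedron. We want to exhibit $\bar q_i\in P_X\|_{\overline{R_i}}$ such that $q_i\bar q_i$ has syndrome $\phi$ over $S_Z$. Writing $\partial q_i$ for the syndrome of $q_i$ over all of $S_Z$, this is equivalent to finding $\bar q_i\in P_X\|_{\overline{R_i}}$ with syndrome
\begin{equation}
\psi := \phi + \partial q_i
\end{equation}
over $S_Z$. So the task is twofold: first, show that $\psi$ is a \emph{legal} target syndrome for an element of $P_X\|_{\overline{R_i}}$; second, conclude that some such $\bar q_i$ realizing $\psi$ actually exists.

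First I would verify the legality condition. Any $\bar q_i\in P_X\|_{\overline{R_i}}$ commutes automatically with every element of $S_Z\|_{R_i}$, so the restriction $\psi|_{S_Z\|_{R_i}}$ must vanish. To check this, recall that by hypothesis $q_i$ has syndrome $\phi_i$ over $S_Z\|_{R_i}$, and that $\phi_i$ is in turn the restriction of $\phi$ to $S_Z\|_{R_i}$ (this is what section~\ref{sec:causality} guarantees under the causality condition). Thus
\begin{equation}
\psi|_{S_Z\|_{R_i}} = \phi|_{S_Z\|_{R_i}} + \partial q_i|_{S_Z\|_{R_i}} = \phi_i + \phi_i = 0,
\end{equation}
exactly as required.

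For the existence step I would argue by Pauli-group duality, using the same kind of non-degeneracy that underlies lemma~\eqref{lem:dual}. The syndrome map $\bar q\mapsto \partial\bar q$ sends $P_X\|_{\overline{R_i}}$ to the character group of $S_Z$, and its kernel is $\mathcal Z_{P_X\|_{\overline{R_i}}}(S_Z)$. By standard $\mathbb F_2$-linear duality for Pauli groups, the image is precisely the annihilator of $\mathcal Z_{S_Z}(P_X\|_{\overline{R_i}})$. A direct inspection shows that $\mathcal Z_{S_Z}(P_X\|_{\overline{R_i}})=S_Z\|_{R_i}$: a $Z$ stabilizer anticommutes with \emph{some} $X$ operator supported in $\overline{R_i}$ unless it has trivial restriction to $\overline{R_i}$, i.e.\ lives in $S_Z\|_{R_i}$. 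Combining the two facts, the image of the syndrome map is exactly the set of characters on $S_Z$ that vanish on $S_Z\|_{R_i}$, so $\psi$ lies in the image and a suitable $\bar q_i$ exists.

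The main (very mild) obstacle is packaging the second step cleanly: one has to invoke the right duality statement for finite abelian groups over $\mathbb F_2$, and identify the annihilator $\mathcal Z_{S_Z}(P_X\|_{\overline{R_i}})$ with $S_Z\|_{R_i}$. Once that identification is made, defining $\bar q_i$ as any preimage of $\psi$ and setting $q_i\bar q_i$ produces the desired operator, since by construction $\partial(q_i\bar q_i)=\partial q_i+\psi=\phi$ over $S_Z$.
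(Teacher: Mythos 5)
Your proposal is correct, and it reaches the conclusion by a route that is dual to, but not identical with, the paper's. The paper works on the operator side: it picks any global $q\in P_X$ with syndrome $\phi$, observes that $q|_{R_i}q_i$ commutes with $S_Z\|_{R_i}$, invokes lemma~\eqref{lem:dual} in the form $\mathcal Z_{R_i}(S_Z\|_{R_i})=\mathcal Z(S_Z)|_{R_i}$ to extend that restriction to some $o\in\mathcal Z_X(S_Z)$, and then sets $\bar q_i:=qq_io$, which by construction is trivial on $R_i$ while $q_i\bar q_i=qo$ retains syndrome $\phi$. You instead work on the syndrome side: you identify the target syndrome $\psi=\phi+\partial q_i$, check it annihilates $S_Z\|_{R_i}$ (using, as the paper does implicitly, that $\phi_i=\phi|_{S_Z\|_{R_i}}$), and then argue that the image of the syndrome map on $P_X\|_{\overline{R_i}}$ is exactly the annihilator of $\mathcal Z_{S_Z}(P_X\|_{\overline{R_i}})=S_Z\|_{R_i}$. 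Both arguments rest on the same nondegeneracy of the $\mathbb F_2$ commutation pairing, and your identification of the radical with $S_Z\|_{R_i}$ is correct (a $Z$-type stabilizer commutes with every $X$ operator on $\overline{R_i}$ iff it is supported on $R_i$). What the paper's version buys is constructiveness relative to an explicit extension $o$, plus reuse of the already-proved lemma~\eqref{lem:dual}; what yours buys is a cleaner statement of \emph{which} syndromes are reachable from $\overline{R_i}$, at the cost of invoking the annihilator-of-the-radical duality as a separate (standard but unproved here) fact. Either is acceptable; if you wanted to tighten yours, you could note that the duality you invoke is itself an immediate consequence of lemma~\eqref{lem:dual} applied to $A=S_Z$ and $r=\overline{R_i}$, which would fold your argument back into the paper's toolkit.
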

\begin{proof}
Choose any $q\in P_X$ with syndrome $\phi$ over $S_Z$. It satisfies
\begin{equation}
q|_{r_i} q_i\in \mathcal Z_{r_i}(S_Z\|_{r_i})=\mathcal Z(S_Z)|_{r_i},
\end{equation}
where the equality is by lemma~\ref{lem:dual}.
Then there exists some $o\in \mathcal Z_X(S_Z)$ with
\begin{equation}
q|_{r_i}q_i=o|_{r_i}.
\end{equation}
It suffices to take
\begin{equation}
\bar q_i:= q q_i o
\end{equation}
because
\begin{equation}
\bar q_i|_{r_i}\propto \mathbf 1,
\end{equation}
and $q_i\bar q_i=qo$ has syndrome $\phi$ over $S_Z$.
\end{proof}

\section{Ball colexes}\label{sec:ball}

The aim of this appendix is to prove lemma~\ref{lem:causality}. Throughout the section, $S_XS_Z$ is the 3D color code stabilizer group for a generic 3-colex $K$ (with no a priori constraints on its geometry or topology), and $R_X$ is the stabilizer group generated by its facet operators $X_r$. When there is another colex $K'$ or $\bar K$, we consider analogous primed or barred symbols (\emph {e.g.} $S_X'$). For the definitions of 3-colexes and 3D color codes, see~\cite{bombin:2018:transversal}.

\begin{warning}
A 3-colex is a {\bf ball} if
\begin{itemize}
\item
it is homeomorphic to a ball, and
\item
each of its facets is homeomorphic to a disc.
\end{itemize}
\end{warning}
\noindent
The following result is adapted from~\cite{bombin:2015:single-shot}. 
\begin{lem}\label{lem:ball}
For any ball 3-colex $K$
\begin{equation}
\mathcal Z_X (S_Z) \propto S_X\,R_X.
\end{equation}
\end{lem}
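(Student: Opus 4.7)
\medskip
\noindent\textbf{Proof plan.} The inclusion $S_X R_X \subseteq \mathcal Z_X(S_Z)$ is purely a local commutation check and I would dispatch it first. Every face is shared by exactly two cells (internal faces) or lies within exactly one facet (boundary faces). In either case a face operator $Z_f$ shares its full (even-sized) support with the cell operators or facet operator that meet it, so all generators of $S_X R_X$ commute with all generators of $S_Z$. Signs are unphysical since we work in the projective Pauli group, hence the $\propto$ in the statement.

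For the reverse inclusion $\mathcal Z_X(S_Z) \subseteq S_X R_X$ I would recast the question combinatorially. Identify an $X$-type Pauli operator with the set $E$ of vertices on which it acts. Commutation with every face operator is equivalent to $|E \cap f|$ being even for every face $f$ of $K$, so the task is to show that every such ``even'' vertex set differs from a union of cell-vertex sets (generators of $S_X$) and facet-vertex sets (generators of $R_X$) by an element of the centre. I would then work modulo $S_X$: by adding cell operators one at a time, it suffices to show that after peeling away interior cells the remaining $E$ can be brought to lie entirely on the boundary $\partial K$, and that any even set on $\partial K$ is generated by the four facet-vertex sets $X_r,X_g,X_b,X_y$.

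The reduction to the boundary is the topological core and is where the ``ball'' hypothesis enters. I would argue it as follows. Fix a cell $c\subset K^\circ$ and look at the $Z$-stabilizers supported on $c$: together with the cell operator $X_c$ they fully pin $E\cap c$ up to multiplication by $S_X\|_c$, because $c$ with its internal 2-complex is itself a topological ball with contractible boundary so its local cohomology is generated by the single cell operator. Shelling out interior cells one by one (possible because $K$ is a ball and each facet is a disc, so a shelling exists) reduces $E$ to a subset $E' \subseteq \partial K$. Now $E'$ must meet every boundary face in even parity, i.e.\ it is a $\mathbf Z_2$-cycle on the 2-sphere $\partial K$ consisting of four coloured triangulated discs glued along their boundaries. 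A direct classification of such cycles on the tetrahedral 2-sphere (using that each facet is a disc, so the only non-trivial cycle classes come from taking whole facets, and that the product $X_rX_gX_bX_y$ of all four equals a full boundary and is therefore in $S_X$ by contractibility of $K$) shows $E'$ is generated by the facet-vertex sets modulo $S_X$. Combining with the interior reduction gives $E \in S_X R_X$.

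The main obstacle, as with all such statements about color codes, is the topological step: making precise the claim that a ``$Z$-face-even'' vertex subset of a ball 3-colex is homologous (modulo cell operators) to a combination of its four facets. I would expect to handle this either by an explicit shelling argument as sketched, or more elegantly by invoking the duality/cohomology calculation already carried out in~\cite{bombin:2015:single-shot} for the analogous single-shot setting, to which the statement of the lemma explicitly attributes its provenance. Everything else is either an incidence count or a straightforward group-theoretic bookkeeping exercise.
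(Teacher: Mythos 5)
Your reduction to the boundary contains a false step, and it sits exactly where the topological content of the lemma lives. You claim that for an interior cell $c$, the face operators supported on $c$ ``together with the cell operator $X_c$ fully pin $E\cap c$ up to multiplication by $S_X\|_c$,'' justified by the contractibility of $c$. This is not true: the boundary of a 3-cell of a colex is a 2-sphere carrying a 2-colex, and the $X$-type operators on the vertices of $c$ that commute with every face operator $Z_f$, $f\subseteq c$, form the group generated by \emph{all} the face operators $X_f$ of that 2-colex (the 2D color code on a sphere has no logical qubits, so its $X$-centralizer of $S_Z$ is its full $X$-stabilizer, generated by plaquettes, not by the single global operator $X_c=\prod_f X_f^{1/2}$-style product). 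So $E\cap c$ is pinned only up to an arbitrary product of face $X$-operators of $c$, and these residues are not individually elements of $S_X R_X$; they only become so when the contributions from the two cells sharing each face conspire. Your shelling therefore cannot proceed cell by cell with only $\langle X_c\rangle$ as the ambiguity --- you would have to track face-supported residues propagating across the whole complex, which is precisely the global homological bookkeeping the lemma is about. The same over-counting issue reappears in your boundary classification step, where you assert that the only nontrivial even sets on $\partial K$ are whole facets. Your stated fallback --- ``invoking the duality/cohomology calculation already carried out in~\cite{bombin:2015:single-shot}'' --- is an appeal to the result rather than a proof of it.

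For contrast, the paper's proof avoids shelling entirely via a duplication trick: glue $K$ to a mirror copy $K'$ along its boundary (matching corners by edges, borders by faces, facets by cells) to obtain a closed 3-colex $\bar K$ with the topology of $S^3$. On a sphere the color code has no logical qubits, so $\mathcal Z_X(\bar S_Z)\propto \bar S_X$. A symmetric operator $pp'$ automatically commutes with the interface face operators, which gives the chain of equivalences $p\in\mathcal Z_X(S_Z)\iff pp'\in\mathcal Z_X(\bar S_Z)\iff pp'\in\langle i\rangle\bar S_X\iff p\in\bar S_X|_K$; restricting the interface cell operators back to $K$ produces exactly the facet operators $R_X$, yielding $\mathcal Z_X(S_Z)\propto S_X R_X$. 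If you want to keep a direct combinatorial argument, you would need to replace your ``local pinning'' claim with a genuine cleaning lemma that controls the face-operator residues globally --- at which point the doubling argument is both shorter and already does the work.
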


\begin{proof}[Sketch of proof] The key is the following duplication trick of~\cite{bombin:2015:single-shot}. Since $K$ is a ball it can be glued to a duplicate $K'$ of itself to produce a closed 3-colex $\bar  K$ with the topology of a 3-sphere. In particular, corresponding corners (vertices at which 3 facets meet) are joined with an edge, corresponding borders (connected components of a given intersection of two facets) are joined with a face and corresponding facets are joined with a cell.

Notice that
\begin{equation}
\bar S_Z = S_Z\cdot S'_Z \cdot \langle Z_f\rangle_{f\in F_I},
\end{equation}
where $F_I$ are the faces in the interface between the two copies (faces with vertices on both $K$ and $K'$). Since a 3D color code on a sphere has not logical qubits,
\begin{equation}
\bar S_X \propto \mathcal Z_X(\bar S_Z).
\end{equation}
Given an $X$ Pauli operator on the first copy $K$
\begin{equation}
p\in P_X|_K,
\end{equation}
let $p'$ be the analogous operator acting on the second copy $K'$. Interface faces are completely symmetrical up to the exchange of the copies, which implies that the product $pp'$ commutes with $Z_f$ for any interface face $f$. It follows that
\begin{align}
p\in \mathcal Z_X(S_Z)
&\iff pp'\in\mathcal Z_X(\bar S_Z) 
\nonumber\\
&\iff pp'\in \langle i \rangle \bar S_X \iff p\in \bar S_X|_ K.
\end{align}
Each cell $c$ of the interface conrresponds to a facet $r$ by construction, with
\begin{equation}
X_c|_K \propto X_r,
\end{equation}
which yields the desired result.
\end{proof}

Let $K$ be a 3-colex and $C$ its set of 3-cells. Any subset of 3-cells $C'\subseteq C$ induces a new 3-colex $K'$ in a natural way, \emph{i.e.} its cells are either elements of $C$ or subcells of those. Each facet $r$ of $K'$ corresponds to either a cell in $C'-C$ or a facet of $K$, the only one with faces in $r$. Lets call this cell/facet $w(r)$. Notice that the functions $w$ might not be injective.

\begin{warning}
Such a subcolex $K'$ is {\bf a ball of $K$} if
\begin{itemize}
\item
$K'$ is a ball, and
\item
the function $w$ is one-to-one.
\end{itemize}
\end{warning}
\noindent

\begin{lem}\label{lem:ball_of}

If the 3-colex $K$ is a ball and $K'$ a ball of $K$, then
\begin{equation}
S_Z\|_{ K'} = S'_Z.
\end{equation}
\end{lem}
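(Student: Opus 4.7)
The plan is to combine lemmas~\ref{lem:dual} and \ref{lem:ball} via a double-centralizer argument. The inclusion $S'_Z\subseteq S_Z\|_{K'}$ is immediate: the ``ball of $K$'' hypothesis on $K'$ (in particular the one-to-one $w$) implies that cells of $K'$ are cells of $K$ without truncation, so each face of $K'$ is a face of $K$, and every generator of $S'_Z$ is a face operator of $K$ supported on $V(K')$.

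For the reverse inclusion I would compute the $X$-centralizer of $S_Z\|_{K'}$ restricted to $K'$ in two different ways. By lemma~\ref{lem:dual} with $A=S_Z$ and $r=K'$, followed by intersection with $P_X(K')$,
\begin{equation}
\mathcal Z_X(S_Z\|_{K'})\|_{K'}=\mathcal Z(S_Z)|_{K'}\cap P_X(K').
\end{equation}
Using lemma~\ref{lem:ball} on the ball $K$, $\mathcal Z(S_Z)\propto S_X R_X P_Z$; a product $xz$ with $x\in S_XR_X$ and $z\in P_Z$ restricts to a pure $X$ operator on $K'$ precisely when $z|_{K'}=I$, which collapses the right-hand side to $(S_X R_X)|_{K'}$.

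The heart of the argument---and what I expect to be the main obstacle---is the geometric identification $(S_X R_X)|_{K'}=S'_X R'_X$. I would argue generator by generator: $X_c$ for $c\in C'$ restricts to its own cell operator in $K'$; $X_c$ for $c\in C-C'$ adjacent to $K'$ satisfies $V(c)\cap V(K')=V(w^{-1}(c))$ and hence restricts to the facet operator $X_{w^{-1}(c)}$ of $K'$; a non-adjacent $c\in C-C'$ contributes trivially. Facets of $K$ behave symmetrically: $X_r|_{K'}=X_{w^{-1}(r)}$ whenever $r$ carries a face inside $K'$, and trivially otherwise (a vertex of $r$ lies in $V(K')$ only if some face of $r$ is shared with a cell of $C'$). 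The injectivity of $w$ is exactly what rules out spurious identifications and guarantees each facet operator of $K'$ arises.

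Given this matching, lemma~\ref{lem:ball} applied to the ball $K'$ gives $\mathcal Z_X(S'_Z)\propto S'_X R'_X$, so the two $X$-centralizers coincide: $\mathcal Z_X(S_Z\|_{K'})\|_{K'}=\mathcal Z_X(S'_Z)\|_{K'}$. Taking $Z$-centralizers inside $P(K')$ and invoking the standard stabilizer identity $\mathcal Z_Z(\mathcal Z_X(G))=G$ for any $Z$-stabilizer $G$ (logical $X$'s and logical $Z$'s anticommute, so a $Z$ operator commuting with all of $\mathcal Z_X(G)$ cannot carry logical content), I conclude $S_Z\|_{K'}\subseteq S'_Z$, which combined with the easy direction yields the claimed equality.
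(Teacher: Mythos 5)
Your proposal is correct and follows essentially the same route as the paper's own sketch: both reduce the statement to the geometric identity $(S_XR_X)|_{K'}=S'_XR'_X$ and then close the argument with a double-centralizer computation built on lemmas~\ref{lem:dual} and~\ref{lem:ball} applied to both $K$ and $K'$. The only difference is cosmetic --- the paper computes $\mathcal Z_Z(S_XR_X)\|_{K'}$ directly whereas you first form the $X$-centralizer of $S_Z\|_{K'}$ and take one more $Z$-centralizer at the end --- and your generator-by-generator justification of the geometric identity fills in exactly what the paper leaves implicit.
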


\begin{proof}[Sketch of proof] 
Combining lemma~\ref{lem:ball} and the injectivity of the function $w$ we get
\begin{equation}
S'_XR'_X = (S_XR_X)|_{ K'}.
\end{equation}
Clearly $S_Z'\subseteq S_Z\|_{ K'}$, so it suffices to note that
\begin{align}
S_Z\|_{ K'}
&\propto \mathcal Z_Z(S_X R_X)\|_{ K'}
= \mathcal Z_{ K', Z}((S_X R_X)|_{ K'})
\nonumber\\
&= \mathcal Z_{ K', Z}(S'_X R_X')
\propto S_Z'.
\end{align}
\qedhere
\end{proof}

\begin{lem}[Restatement of lemma~\ref{lem:causality}]
The causality condition is satisfied if the subcolex with cells $C_i$ is a ball of the tetrahedral colex.
\end{lem}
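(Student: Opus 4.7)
The plan is to invoke lemma~\ref{lem:ball_of} almost directly, reducing the causality condition to a short bookkeeping argument. Recall that the causality condition has been recast in equation~\eqref{eq:causality} as $S_Z\|_{R_i}\subseteq S_{\Phi_i}$, so the goal is to rewrite $S_Z\|_{R_i}$ in a form that makes its generators visibly face operators supported in $\Phi_i$.

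First I would verify the two hypotheses of lemma~\ref{lem:ball_of}. The outer colex here is tetrahedral, and is manifestly a ball in the sense defined just above lemma~\ref{lem:ball}: a tetrahedron is homeomorphic to a $3$-ball and each of its four facets is homeomorphic to a disc. The second hypothesis --- that the sub-colex $K'$ induced by $C_i$ is a ball of $K$ --- is precisely the assumption of the present lemma. Lemma~\ref{lem:ball_of} then yields $S_Z\|_{R_i} = S'_Z$, where $S'_Z$ is the subgroup generated by the face operators of $K'$, and where I have used that the vertex set of $K'$ coincides with $R_i$ by the definition of $R_i$ in section~\ref{sec:spacetime}.

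It then remains to identify $S'_Z$ as a subgroup of $S_{\Phi_i}$. Every $2$-cell of $K'$ is a $2$-cell of $K$ lying on some cell of $C_i$ --- either an interior face shared by two cells of $C_i$, or a face bounding a single cell of $C_i$ on the exposed boundary --- and both kinds are in $\Phi_i$ by the definition of $\Phi_i$ in section~\ref{sec:spacetime}. Consequently each generator $Z_f$ of $S'_Z$ is already a generator of $S_{\Phi_i}$, giving $S'_Z\subseteq S_{\Phi_i}$, which is the causality condition.

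The main effort --- the duplication trick underlying lemma~\ref{lem:ball} and the injectivity argument in lemma~\ref{lem:ball_of} --- has been absorbed upstream. Once the notion of ``ball of $K$'' has been isolated, the statement above is essentially a definition chase, and the only substantive point to check is that no face of $K'$ is forced to lie outside $\Phi_i$, which is immediate from the way the sub-colex is built from $C_i$.
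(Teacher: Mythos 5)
Your proof is correct and follows the same route as the paper: both reduce the statement to lemma~\ref{lem:ball_of}, which gives $S_Z\|_{R_i}=S'_Z$, and then observe that the generators of $S'_Z$ are face operators of cells in $C_i$ and hence lie in $\Phi_i$. Your version merely spells out the hypothesis-checking and the identification $S'_Z\subseteq S_{\Phi_i}$ that the paper leaves implicit.
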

\begin{proof}
Let $K'$ be the subcolex with cells $C_i$. By assumption the eigenvalues of the generators of $S_Z'$, \emph{i.e.} the face operators of the cells in $K'$, are known at the $i$-th step. The result follows by lemma~\ref{lem:ball_of}.
\end{proof}

\section{Flux and distance}\label{sec:flux}

The purpose of this section is to prove lemma~\ref{lem:closure}. Throughout this section it is assumed that some 3-colex is given such that any of its cells has at most $k_\text{face}$ faces.
For the definitions of 3-colexes and 3D color codes, see~\cite{bombin:2018:transversal}. We adopt the flux configuration picture for sets of faces $\phi$ and identify faces with their dual edges~\cite{bombin:2018:transversal}.

\subsection{Dual graph}

We need an extension of the dual graph $\Gamma$ defined in~\cite{bombin:2018:transversal}.
\begin{warning}
The {\bf extended dual graph $\hat\Gamma$} has
\begin{itemize}
\item
inner vertices, one per cell of the colex,
\item
outer vertices, one per facet of the colex,
\item
an edge for each face $f$ of the colex, connecting the vertices dual to the cells, or cell and facet, that $f$ is part of, and
\item
an edge for each border between facets, connecting the corresponding outer vertices.
\end{itemize}
The vertices of $\hat \Gamma$ are colored as their dual cells and facets.
\end{warning}

\subsection{Monopole configurations}

Consider the flux group $M$ defined in~\cite{bombin:2018:transversal}. For each color $\kappa$, let $M_\kappa$ be the subgroup of $M$ generated by pairs $\kappa'\kappa''$ with $\kappa$, $\kappa'$ and $\kappa''$ all different. Let $V$ be the set of vertices of the extended dual graph $\hat\Gamma$. 
\begin{warning}
An {\bf extended monopole configuration} is a map
\begin{equation}
m:V\longrightarrow M,
\end{equation}
such that for any color $\kappa$ and any $\kappa$-vertex $v$
\begin{equation}
m(v)\in M_\kappa
\end{equation}
and
\begin{equation}
\sum_{v\in V} m(v) = 0.
\end{equation}
\end{warning}
\noindent
An \emph{extended flux configuration} $\phi$ is any subset of the edges of $\hat\Gamma$. The extended monopole configuration 
\begin{equation}
\hat\partial\phi
\end{equation}
maps a vertex $v$ to the sum of the flux carried by the edges of $\phi$ incident in $v$. The syndrome $\partial\phi$ of a (conventional) flux configuration $\phi$ is the restriction of $\hat\partial\phi$ to the set of inner vertices.

\subsection{Triangle strips}

\begin{figure}
\centering
\includegraphics[width=.6\columnwidth]{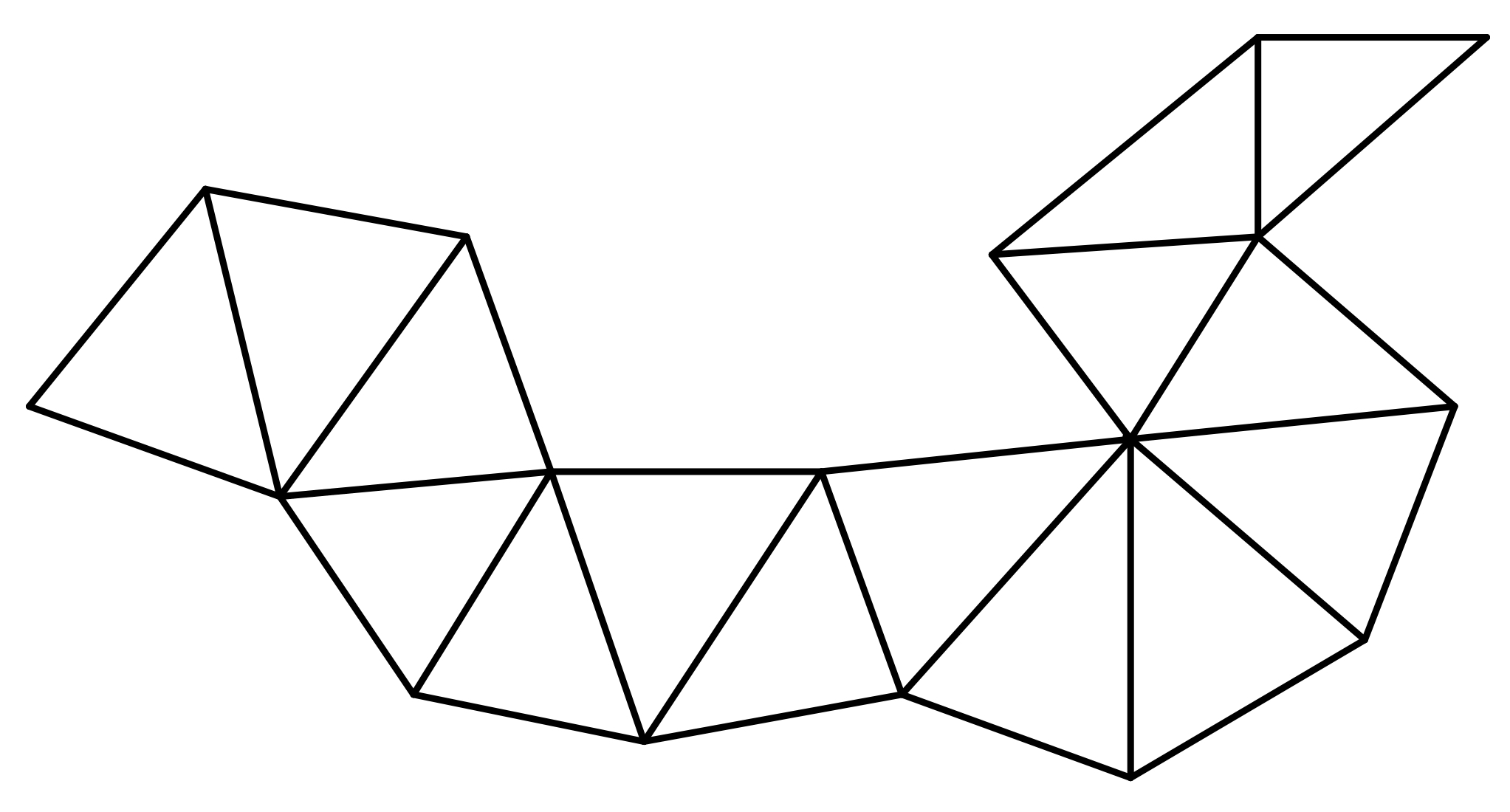}
\caption{A triangle strip.}
\label{fig:strip}
\end{figure}

We want to make use of strips of triangles such as the one in figure~\ref{fig:strip}.
\begin{warning}
A {\bf triangle strip} is a path in the graph that has
\begin{itemize}
\item
the edges of $\hat\Gamma$ as vertices, and \
\item
an edge for each pair of edges of $\hat\Gamma$ that are part of a common triangle.  
\end{itemize}
\end{warning}
We do not directly refer to the (abstract) edges and vertices of such a path, but talk instead only of (the original) triangles and edges. We assume in particular that:
\begin{itemize}
\item
a triangle strip contains at least one edge, and
\item
repeated triangles or edges are allowed.
\end{itemize}

\begin{lem}\label{lem:strip1}
Given 
\begin{itemize}
\item
a triangle strip with vertex set $V$ and edge set $E$, and 
\item
a extended monopole configuration $m$ with support contained in $V$,
\end{itemize}
there exists a extended flux configuration $\phi$ such that
\begin{equation}
\hat\partial \phi = m,\qquad \phi\subseteq E.
\end{equation}
\end{lem}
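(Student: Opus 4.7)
My plan is to proceed by induction on the number $n$ of triangles in the strip, peeling off the last triangle $T_n$ at each step. The extended monopole configuration $m$ has sum zero over all of $\hat\Gamma$, so since its support is contained in $V$ it already sums to zero over $V$; this is the invariant that will be maintained as the strip is shortened.

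For the base case $n=1$, the triangle has three vertices $v_1,v_2,v_3$ of three distinct colors $\kappa_1,\kappa_2,\kappa_3$ and three edges. Each edge $v_iv_j$ can only carry flux in $M_{\kappa_i}\cap M_{\kappa_j}$, which by inspection of the definitions of $M$ and $M_\kappa$ is the cyclic subgroup generated by the pair of the two remaining colors. Given any $m$ supported on these three vertices with $m(v_i)\in M_{\kappa_i}$ and $\sum_i m(v_i)=0$, I would write the realizability condition as a small linear system over the three edge fluxes and check by direct inspection that it admits a solution.

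For the inductive step, I peel off $T_n$ and look at a vertex $v\in T_n$ not lying in any earlier triangle. If no such $v$ exists, then the vertex set of the strip is already exhausted by $T_1,\dots,T_{n-1}$ and the induction hypothesis applies immediately. Otherwise the other two vertices $a,b$ of $T_n$ have colors $\kappa_a,\kappa_b$ distinct from each other and from $\kappa_v$. I cancel the monopole at $v$ by placing fluxes $f_{va},f_{vb}$ on the two edges of $T_n$ incident to $v$, chosen so that $f_{va}+f_{vb}=m(v)$; this is possible because $M_{\kappa_v}\cap M_{\kappa_a}$ and $M_{\kappa_v}\cap M_{\kappa_b}$ jointly generate $M_{\kappa_v}$. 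Absorbing these fluxes into $m$ yields a new monopole configuration $m'$ supported in the vertex set of $T_1,\dots,T_{n-1}$, still respecting the color constraints and still summing to zero. The inductive hypothesis produces a flux configuration $\phi'$ inside the remaining edges realizing $m'$, and adjoining $f_{va},f_{vb}$ gives the desired $\phi\subseteq E$.

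The main obstacle is the elementary but crucial color-compatibility check underlying the flux-splitting step: one must verify, using the explicit structure of $M$ and the subgroups $M_\kappa$ recalled in~\cite{bombin:2018:transversal}, that the two one-parameter edge-flux subgroups attached to $va$ and $vb$ indeed span $M_{\kappa_v}$. Once this is in hand, the induction machinery is routine; the same four-color combinatorics is what underlies the string-operator analysis of 3D color codes, so no surprise is expected here.
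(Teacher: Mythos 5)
Your proof is correct and follows essentially the same route as the paper's: induction on the number of triangles, peeling off the last triangle and cancelling the monopole at its new vertex using the two incident triangle edges, whose $\mathbf{Z}_2$ flux contributions span $M_{\kappa_v}$ (the paper phrases this as the four values of $m(v)$ corresponding to the four subsets of $\{e_1,e_2\}$). The only cosmetic difference is the base case: the paper bottoms out at a strip consisting of a single edge (zero triangles), which your induction starting at one triangle should also cover, since such strips do arise in the application (lemma~\ref{lem:strip2} with $|E_p|=1$).
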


\begin{proof}[Sketch of proof] 
Lets proceed by induction of the number of triangles. The base case is a single edge, where the result can be checked directly. For the inductive step, consider a strip with $n$ triangles that is composed of (i) a strip $s$ with $n-1$ triangles and (ii) an additional triangle $t$. Consider the vertex $v$ of $t$ that is not an endpoint of the last edge of $s$, and the edges $e_1, e_2$ of $t$ meeting at $v$. The monopole $m(v)$ can take one of four values, each corresponding to a subset $\phi_0\subseteq \{e_1,e_2\}$ via
\begin{equation}
(\partial \phi_0) (v) = m(v).
\end{equation}
Trivially
\begin{equation}
m':=m+\partial \phi_0
\end{equation}
is a extended monopole configuration with support in $s$. By induction there exists $\phi_1$ with its edges in $s$ and such that $\partial \phi_1 = m'$, and thus it suffices to take $\phi = \phi_0+\phi_1$.
\end{proof}

\begin{warning}
A subset of colex faces $F$ is {\bf simple} respect to a set of cells $C$ if, for every cell $c\in C$, the set of points of $c$ that also belong to some element of $F$ is simply connected.
\end{warning}
\noindent

\begin{figure}
\centering
\includegraphics[width=.5\columnwidth]{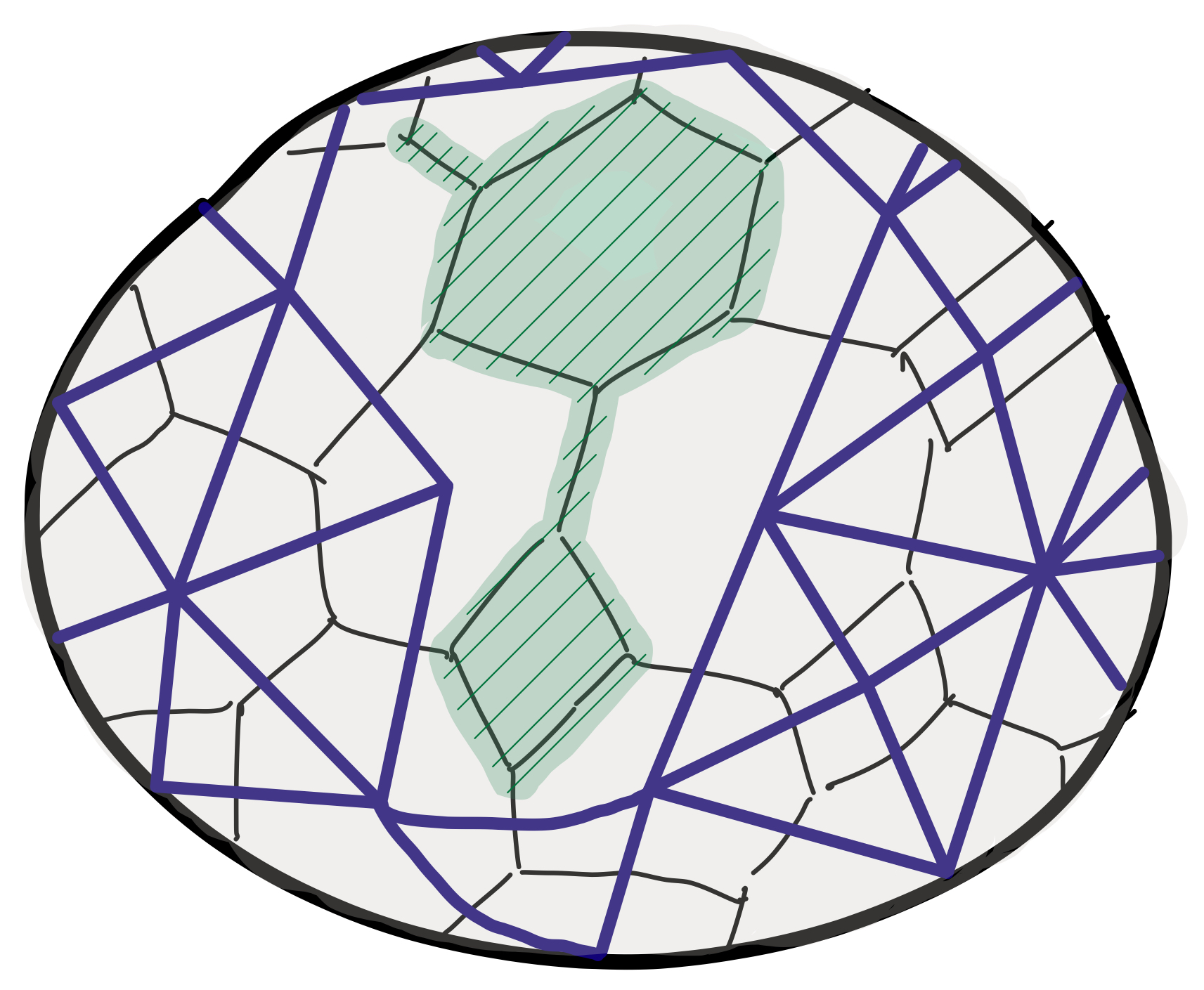}
\caption{
A cell $c$ of a 3-colex. The intersection of $c$ with some set of faces $F$ is indicated in green. The intersection is simply connected and thus compatible with $F$ being simple respect to a set of cells including $c$. The purple edges compose the graph $\gamma$ used in the proof of lemma~\ref{lem:strip2}.
}
\label{fig:simple}
\end{figure}

\begin{lem}\label{lem:strip2}
Let $C$ be a set of cells and $F$ a set of colex faces that is simple respect to $C$. Let $E_0$ be the set of edges of $\hat\Gamma$ that are not dual to faces of $F$. For any path $p$ in $\hat\Gamma$ with
\begin{itemize}
\item
all vertices dual to cells in $C$ except possibly the endpoints, and
\item
edge set $E_p$ such that 
\end{itemize}
\begin{equation}
\emptyset\neq E_p\subseteq E_0,
\end{equation}
there exists a triangle strip with edge set $E$ such that
\begin{equation}
E_p\subseteq E\subseteq E_0,\qquad |E|\leq 1+2(k_\text{face}-1)(|E_p|-1),
\end{equation}
\end{lem}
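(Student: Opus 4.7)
The plan is to induct on $n := |E_p|$, the number of edges of the path $p$. The base case $n=1$ is immediate: the single edge of $p$ is by itself a (trivial) triangle strip with $|E| = 1$, matching the bound $1 + 2(k_{\text{face}}-1)\cdot 0 = 1$.

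For the inductive step, write the path as $p = e_1, e_2, \ldots, e_n$ with interior vertices $v_1, \ldots, v_{n-1}$, each of which is by hypothesis dual to some cell in $C$. I will construct a local triangle strip at $v_1$ joining $e_1$ and $e_2$, and then concatenate it with the strip obtained by applying the induction hypothesis to the subpath $e_2, \ldots, e_n$. Let $c \in C$ be the cell dual to $v_1$; then $e_1, e_2$ are dual to two faces $f_1, f_2$ of $c$, both outside $F$ since $E_p \subseteq E_0$. The triangles of $\hat\Gamma$ incident at $v_1$ are in natural bijection with the 1-edges of $c$, each triangle linking the two $\hat\Gamma$-edges dual to the pair of faces of $c$ meeting at that 1-edge. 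Thus a local strip at $v_1$ from $e_1$ to $e_2$ is precisely a walk from $f_1$ to $f_2$ in the \emph{face-adjacency graph} of $c$ (vertices: faces of $c$; edges: pairs sharing a 1-edge), and the requirement $E \subseteq E_0$ restricts the walk to non-$F$ faces.

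The key topological input is the simplicity hypothesis. The set $F \cap c$ is a simply connected union of closed 2-cells of $\partial c \cong S^2$; since $f_1, f_2 \notin F$, its complement on $\partial c$ is nonempty and is therefore either all of $S^2$ or a topological disc. In either case the subgraph of the face-adjacency graph induced on the non-$F$ faces is connected, as one sees by perturbing a continuous arc in the complement to cross face boundaries transversally at interior points of 1-edges only, producing a combinatorial walk. Since $c$ has at most $k_{\text{face}}$ faces, the shortest such walk from $f_1$ to $f_2$ has length at most $k_{\text{face}}-1$, so the local strip uses at most $k_{\text{face}}$ edges of $\hat\Gamma$, all in $E_0$.

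Concatenating with the inductive strip (which shares the edge $e_2$) gives at most $k_{\text{face}} + \bigl(1 + 2(k_{\text{face}}-1)(n-2)\bigr) - 1 = k_{\text{face}} + 2(k_{\text{face}}-1)(n-2)$ edges, and the inequality $k_{\text{face}} \leq 1 + 2(k_{\text{face}}-1)$ (valid for $k_{\text{face}} \geq 1$) yields the desired bound $|E| \leq 1 + 2(k_{\text{face}}-1)(n-1)$. The main obstacle is the topological step that converts simple connectedness of $F \cap c$ on the 2-sphere $\partial c$ into graph-theoretic connectedness of the non-$F$ faces; I would justify it by a Jordan/Schönflies-type argument on $S^2$ together with the transversality reduction just indicated.
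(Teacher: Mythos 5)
Your proposal is essentially the paper's own argument: both reduce to the local problem at a single interior vertex (your induction is just the paper's ``handle $|E_p|=2$ and concatenate''), both pass to the face-adjacency (dual) graph of the boundary 2-colex of the cell $c$, both convert simplicity of $F\cap c$ on $\partial c\cong S^2$ into connectedness of the non-$F$ part, and both bound the local walk by $k_\text{face}-1$ steps. The one point to tighten is that each triangle of $\hat\Gamma$ at $v_1$ has a \emph{third} edge, dual to the face opposite $c$ at the corresponding 1-edge, so to keep $E\subseteq E_0$ you must restrict the walk to 1-edges not contained in any face of $F$ (not merely to non-$F$ faces of $c$, as your stated subgraph does) --- your transversality argument already produces such a walk, since an arc avoiding all points of $F\cap c$ cannot cross a 1-edge lying in an $F$-face, and with the third edges counted the local strip has $2k_\text{face}-1$ edges rather than $k_\text{face}$, which still yields exactly the claimed bound.
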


\begin{proof}[Sketch of proof] It suffices to consider the case with $|E_p|=2$, since the shorter case is trivial and for longer paths we can concatenate the strips obtained for each pair of contiguous edges. Then $E_p=\{e,e'\}$ with $e$ and $e'$ meeting at a vertex $v$ that is dual to a cell $c\in C$. The surface of $c$ is a 2-colex and we can consider its dual graph, which has a vertex per face of $c$, and an edge per edge of $c$. Let $\gamma$ be the subgraph formed by 
\begin{itemize}
\item
the vertices dual to faces not in $F$, and
\item
the edges dual to edges not part of faces in $F$. 
\end{itemize}
This construction is illustrated in figure~\ref{fig:simple}.
By assumption, $\gamma$ is connected. Moreover, 
\begin{itemize}
\item
the vertices in $\gamma$ are in one-to-one correspondence to the edges of $E_0$ with an endpoint at $v$, and 
\item
the edges of $\gamma$ are in one-to-one correspondence to the triangles of $\Gamma$ containing $v$ and with all their edges in $E_0$.
\end{itemize}
Thus there is a path in $\gamma$ of length at most $k_\text{face}-1$ (since it visits each face of $c$ at most once) that provides the desired triangle strip from $e$ to $e'$. The strip has at most $2k_\text{face}-3$ edges that are not in the path $p$.
\end{proof}

\subsection{Closure}

Here we make use of the various definitions of section~\ref{sec:spacetime}. Below we identify the graphs $\Gamma_i$ and $\overline{\Gamma_i}$ with their sets of edges for convenience.

\begin{warning}
$\mathcal O_i$ is the set of flux configurations $\phi\subseteq \Phi_i$ such that the support of $\partial\phi$ is contained in the vertex set of $\overline{\Gamma_i}$. 
\end{warning}
\begin{lem}\label{lem:closure2}
Given some $i$, if 
\begin{itemize}
\item
$\Phi_i$ is simple respect to the set of cells complementary to $C_i$, and
\item
for any vertices $v,v'$ of $\Gamma_i$ and any color $\kappa$
\begin{equation}
d(v,v')\leq k\,\overline d(v,v'), 
\end{equation}
\begin{equation}
\min\left(d_\kappa(v), d(v)\right)\leq k\, \overline {d_\kappa}(v),
\end{equation}
\end{itemize}
then for any $\phi\in\mathcal O_i$ there exists $\phi'\subseteq\overline{\Gamma_i}$ such that
\begin{equation}
\partial \phi = \partial \phi',\qquad |\phi'|\leq 4k(k_\text{face}-1)|\phi|.
\end{equation}
\end{lem}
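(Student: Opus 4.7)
The plan is to construct $\phi'$ as the sum of a bounded number of flux configurations on triangle strips in $\overline{\Gamma_i}$, each strip realizing one piece of the syndrome $\partial\phi$.

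First I will decompose $\phi$, viewed as a $\mathbf Z_2$-chain in $\Gamma_i$, into a family of strings that track the $\mathbf Z_2^3$ color structure of the flux group $M$. Each string connects two non-trivial endpoints of the extended boundary $\hat\partial\phi$: either a pair of inner vertices (necessarily in $\overline{\Gamma_i}$, since $\phi\in\mathcal O_i$ forbids syndrome contributions at cells surrounded entirely by faces in $\Phi_i$), or an inner vertex and an outer vertex corresponding to a facet of the colex. A color-sector accounting---using that each face of $\phi$ contributes a flux element of $M$ and each vertex charge lies in the color-restricted subgroup $M_\kappa$---will ensure that each edge of $\phi$ is traversed by at most two such strings, bounding the total length of the decomposition by $2|\phi|$.

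Next, for each string $\sigma$ I will produce a walk $\sigma'$ in $\overline{\Gamma_i}$ with matching endpoints, of length $|\sigma'|\leq k|\sigma|$. A string between two inner vertices uses the bound on $d(v,v')$, while a string terminating at a $\kappa$-colored facet uses $\min(d_\kappa(v),d(v))\leq k\overline{d_\kappa}(v)$, choosing the cheaper option between routing to a $\kappa$-facet in $\overline{\Gamma_i}$ and pairing the monopole off through two facets of the remaining colors.

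I will then apply lemma~\ref{lem:strip2} with $F=\Phi_i$ and $C$ equal to the set of cells complementary to $C_i$: this is precisely the combination for which the first hypothesis of the present lemma provides simplicity. Each walk $\sigma'$ of length $\ell$ inflates to a triangle strip with at most $2(k_\text{face}-1)\ell$ edges lying entirely in $\overline{\Phi_i}$, and lemma~\ref{lem:strip1} supplies a flux configuration on that strip whose extended boundary realises the desired monopole endpoints. Summing across all strips and combining the factors gives $|\phi'|\leq 2(k_\text{face}-1)\cdot k\cdot 2|\phi|=4k(k_\text{face}-1)|\phi|$.

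The main obstacle is the first step. One must make rigorous the claim that an arbitrary $\phi\in\mathcal O_i$ admits a decomposition into at most $2|\phi|$ worth of strings pairing monopole charges consistently with the $M_\kappa$-structure at vertices of each color. This is the color-code analogue of the classical $\mathbf Z_2$ string decomposition used implicitly in figure~\ref{fig:z2}; setting it up carefully---tracking how a single colex face contributes to several strings when its flux element decomposes across color sectors, and verifying that endpoints at cells surrounded only by $\Phi_i$-faces cannot appear---is where the proof must be most delicate.
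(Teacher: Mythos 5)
Your overall architecture matches the paper's: extract a one-dimensional skeleton from $\phi$ of total length at most $2|\phi|$, reroute it through $\overline{\Gamma_i}$ at cost $k$, fatten to triangle strips at cost $2(k_\text{face}-1)$ via lemma~\ref{lem:strip2}, and realize the required monopoles via lemma~\ref{lem:strip1}; the factors multiply to $4k(k_\text{face}-1)$ exactly as you compute, and you correctly identify why $\min(d_\kappa(v),\overline d(v))$ appears (the option of splitting an outer charge across two facets of different colors, which is what $M=M_\kappa M_{\kappa'}$ licenses).

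The genuine gap is your first step, and it is not merely a matter of care: the decomposition of $\phi$ into \emph{strings pairing monopole charges} is precisely what the paper's proof is structured to avoid. The charges of $\hat\partial\phi$ take values in the color-restricted subgroups $M_\kappa$ of the flux group, and each edge of the dual graph carries a \emph{fixed} flux label (the color pair of its dual face); a pairwise string decomposition would require, for each string, that the charges at its two endpoints cancel on their own, which is automatic for $\mathbf Z_2$ but has no evident analogue for $\mathbf Z_2^3$-valued charges constrained to different subgroups at differently colored vertices. Your bound ``each edge traversed by at most two strings'' is likewise unsupported. The paper sidesteps all of this: within each connected component of $\phi$ it takes a tree $t\subseteq\phi$ whose leaf set is the charged inner vertices together with \emph{at most two} outer vertices of different colors (chosen by a short case analysis on the colors present in $\hat\partial\phi$), decomposes $t$ into a chain of consecutive paths of total length at most $2|t|\le 2|\phi|$ (this is where the factor $2$ actually comes from), reroutes and fattens these into a \emph{single connected} triangle strip, and then applies lemma~\ref{lem:strip1} \emph{once} to the whole strip with the full monopole configuration. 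Since lemma~\ref{lem:strip1} realizes an arbitrary zero-sum extended monopole configuration supported on one connected strip, no pairing of charges is ever needed. To repair your argument you would either have to prove the $\mathbf Z_2^3$ string-pairing claim (with the stated length bound), or restructure step 1 along the paper's tree-and-single-strip lines.
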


\begin{proof}[Sketch of proof] Regard such $\phi$ as a subgraph of the extended dual graph $\hat\Gamma$. We assume that $\phi$ is connected (because if it has several connected components $\phi_j$, it suffices to add the corresponding $\phi_j'$) and that $\partial \phi\neq 0$ (because the case $\partial \phi=0$ is trivial, $\phi'=\emptyset$). 

Let $V_\phi$ be the support of $\hat\partial\phi$. Choose a set $V$ containing the inner vertices of $V_\phi$ together with:
\begin{itemize}
\item
two outer vertices of $V_\phi$ of different colors, if $V_\phi$ contains outer vertices of at least two colors,
\item
an outer vertex of $V_\phi$, if $V_\phi$ contains outer vertices of a single color,
\item
no other vertices, otherwise.
\end{itemize}
Choose some tree $t$ that is a subgraph of $\phi$ and has $V$ as its set of leafs (it exists: take any maximal tree of $\phi$ and remove any unwanted leafs repeatedly). By the construction suggested in the figure below (the red dots mark the outer vertices), there exists a collection of paths $p_j$, $j=1,\dots, n$ with edge sets $E_j$ such that 
\begin{itemize}
\item
every $p_j$ has vertices of $V$ as endpoints, 
\item
the last vertex of $p_j$ is the first vertex of $p_{j+1}$,
\item
only the first vertex of $p_1$ and the last of $p_n$ can be outer vertices, and
\item
identifying the tree $t$ with its set of edges:
\end{itemize}
\begin{equation}
t = \bigcup_j E_j,\qquad \sum_j|E_j|\leq 2|t|
\end{equation}
For each $p_j$ we choose some path $p_j'$ in $\hat \Gamma$ with edge set $E_j'\subseteq \overline{\Gamma_i}$ such that
\begin{equation}
|E_j'|\leq k|E_j|
\end{equation}
and, on a case by case basis:
\begin{itemize}
\item
If the endpoints of $p_j$ are both inner vertices, then $p_j'$ has the same endpoints as $p_j$.
\item
If the endpoints of $p_j$ are an inner vertex $v$ and a $\kappa$-colored outer vertex, $p_j'$ is either
\begin{itemize}
\item
a path connecting $v$ to a $\kappa$-colored outer vertex, or
\item
the composition of two paths, each connecting $v$ to an outer vertex, with the colors of these two outer vertices different.
\end{itemize}
\end{itemize}
Since $E_j\subseteq \Gamma_i$ and the inner vertices of $V$ are vertices of $\overline{\Gamma_i}$, such paths exist by assumption. Choose a triangle strip $s_j$ for each $p_j'$ according to lemma~\ref{lem:strip2}. The set of edges
\begin{equation}
E := \bigcup_j E_j'
\end{equation}
is clearly connected, and therefore there exists a triangle strip $s$ such that its triangle set is the union of the triangle sets of the strips $s_j$. Its set of edges $S$ satisfies
\begin{align}
|S|
&\leq 2 (k_\text{face}-1) |E|\leq 2 k (k_\text{face}-1)\sum_j |p_j|
\nonumber\\
&\leq 4k(k_\text{face}-1)|\phi|.
\end{align}
Noting that for any two different colors $\kappa,\kappa'$
\begin{equation}
M = M_\kappa M_{\kappa'},
\end{equation}
it is easy to check (case by case, according to the different possibilities considered above), that there exists a extended monopole configuration $m$ with no support outside the endpoints of the paths $p_j'$ and such that its restriction to the inner vertices is $\partial \phi$.
The result follows applying lemma~\ref{lem:strip1}.
\end{proof}

\subsection{Boundary conditions}\label{sec:open}

Clearly the set $\mathcal C_i$ defined in~\ref{eq:Ci} always satisfies 
\begin{equation}
\mathcal C_i\subseteq \mathcal O_i.
\end{equation}
Therefore, lemma~\ref{lem:closure} follows from lemma~\ref{lem:closure2}.
In fact, when lemma~\ref{lem:closure2} holds, $\mathcal C_i$ takes the form
\begin{equation}
\mathcal C_i = \mathcal O_i,
\end{equation}
which corresponds to truly `open' boundary conditions.

\section{Ball-local noise}\label{sec:ball-local}

This appendix discusses ball-local error distributions, introducing some results that are necessary for the analysis of JIT decoding in appendix~\ref{sec:naive_proof}.

\subsection{Locality and topological codes}\label{sec:locality}

The error correction threshold for a quantum error correcting code is often formulated in terms of local errors, see section~\ref{sec:ft}. Codes are often designed with the expectation that noise is indeed local or approximately so. The aim is for codes to have a high distance, defined as the smallest number of qubits supporting a non-trivial logical operator.

In the case of topological codes another kind of distance enters the picture: \emph{a distance defined by the geometry of the code}, which can be typically codified in a hypergraph: its edges represent qubits, and the support of any non-trivial logical operator has to connect some pair of vertices that are at least separated by some given distance $d$. That is, a non-trivial logical operator can never have support within a ball of radius strictly smaller than $d/2$. This suggests introducing ball-local distributions of errors, as defined in section~\ref{sec:ft_2D}. 

\subsection{From local to ball-local}\label{sec:local_to}

The purpose of this section is to establish a result that is used in section~\ref{sec:naive_result} to show that JIT error correction gives rise to ball-local noise.

Throughout this section it is assumed that some graph is given%
\footnote{
All results apply indistinctly to hypergraphs.
}. 
We say that $v$ is a vertex of an edge set $E$ if $v$ is the endpoint of any edge in $E$.

\begin{warning}
A {\bf ball} is a pair $(v,r)$ with $v$ a vertex, the center of the ball, and $r>0$ an integer, the radius of the ball. Such a ball is identified with the set of edges of the subgraph induced by the set of vertices that are at a distance at most $r$ from $v$.
The radius of a ball $b$ is $\text{r}(b)$ and the sum of the radii of a set of balls $B$ is $\text r(B)$.
\end{warning}
\noindent

\begin{lem}\label{lem:spherification}
Let $\alpha>0$, $c\geq 1$ and $0\leq p\leq 1$. If 
\begin{itemize}
\item
the number of connected subsets of $n$ edges and with a given vertex, is bounded by $\alpha^n$,
\item
the edge set $\omega$ is a random variable satisfying, for any edge set $E$
\begin{equation}
\text{prob}(E\subseteq \omega)\leq p^{|E|},
\end{equation}
\item
the finite set $K$ of connected sets of edges is a function of $\omega$, and for any $\kappa\in K$
\begin{equation}
|\kappa|\leq c |\kappa\cap\omega|,
\end{equation}\end{itemize}
then there exist, for each $\omega$, a ball set $W$, with
\begin{equation}
\bigcup K \subseteq \bigcup W,
\end{equation}
and such that given any ball set $B$
\begin{equation}
\text{prob}\left(B\subseteq W\right)\leq 
\left(\frac p{p_0}\right)^{\text{r}(B)/2c},
\qquad p_0:=(2\alpha)^{-c}.
\end{equation}
\end{lem}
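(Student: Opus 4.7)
The plan is to construct $W$ explicitly from $\omega$ and then bound $\mathrm{prob}(B\subseteq W)$ by a union bound over \emph{witnesses} for the balls in $B$.

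\textbf{Construction of $W$.} Given $\omega$, compute $K=K(\omega)$; for each $\kappa\in K$ pick any vertex $v_\kappa\in\kappa$. Because $\kappa$ is a connected set with $|\kappa|$ edges, every other vertex of $\kappa$ is within graph distance $|\kappa|$ of $v_\kappa$, so $\kappa$ sits inside $b_\kappa:=(v_\kappa,|\kappa|)$. Setting $W:=\{b_\kappa\mid\kappa\in K\}$ makes $\bigcup K\subseteq\bigcup W$ immediate.

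\textbf{Single-ball bound.} For a ball $b=(v,r)$, the event $b\in W$ demands a \emph{witness} $\kappa$: a connected edge set of size exactly $r$ containing $v$ such that $|\kappa\cap\omega|\ge r/c$ (this is what the hypothesis $|\kappa|\le c|\kappa\cap\omega|$ forces on any $\kappa\in K$). Union-bounding first over the $\le\alpha^r$ candidate $\kappa$ and then, for each, over the $\le\binom{r}{\lceil r/c\rceil}\le 2^r$ subsets $S\subseteq\kappa$ of size $\lceil r/c\rceil$ required to lie inside $\omega$, the local hypothesis on $\omega$ yields
\[
\mathrm{prob}(b\in W)\le(2\alpha)^r p^{r/c}=(p/p_0)^{r/c},
\]
using $p_0=(2\alpha)^{-c}$. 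For $p\le p_0$ this is in turn bounded by $(p/p_0)^{r/(2c)}$, matching the target exponent for a single ball.

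\textbf{Multi-ball bound.} The same union bound, iterated over $B=\{b_1,\dots,b_m\}$, gives
\[
\mathrm{prob}(B\subseteq W)\le\sum_{(\kappa_b,S_b)_{b\in B}} p^{|\bigcup_b S_b|},
\]
with combinatorial prefactor at most $(2\alpha)^{r(B)}$. The subtle point is that overlapping witnesses can make $|\bigcup_b S_b|$ strictly smaller than $\sum_b|S_b|=r(B)/c$, thereby inflating $p^{|\bigcup_b S_b|}$. This is where the factor of $2$ in the target exponent does the work: distinct balls in $B$ must come from distinct $\kappa_b\in K$ (the map $\kappa\mapsto b_\kappa$ is functional), so one can organize the sum by overlap pattern and invoke a peeling / BK-type disjoint-occurrence argument to show that on average each $b$ contributes at least $r_b/(2c)$ fresh edges to $\bigcup_b S_b$. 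Combining with the prefactor gives
\[
\mathrm{prob}(B\subseteq W)\le(2\alpha)^{r(B)}\,p^{r(B)/(2c)}=(p/p_0)^{r(B)/(2c)}.
\]

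\textbf{Main obstacle.} The hardest step is the overlap analysis in the multi-ball case: making rigorous that the overcount from shared edges across the $\kappa_b$ is exactly compensated by the constraint $\kappa_{b_1}\neq\kappa_{b_2}$ plus the geometric room provided by halving the exponent. The condition $p\le p_0$ is precisely what makes the associated geometric series over overlap configurations converge and produces the stated ball-local rate.
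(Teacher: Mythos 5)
There is a genuine gap, and it sits exactly where you flag the ``main obstacle'': the multi-ball overlap analysis. With your construction of $W$ (one ball $b_\kappa=(v_\kappa,|\kappa|)$ for \emph{every} $\kappa\in K$) the claimed bound is in fact false, so no BK-type or peeling argument can rescue it. Concretely: since $K$ is an arbitrary function of $\omega$, it may contain $m$ distinct connected sets $\kappa_1,\dots,\kappa_m$, each of size $r$, all containing the same connected cluster $S\subseteq\omega$ of $|S|=r/c$ edges and no other edge of $\omega$; each satisfies $|\kappa_j|\leq c|\kappa_j\cap\omega|$, the event occurs with probability of order $p^{r/c}$, yet $W$ then contains $m$ balls of radius $r$ and the target bound $(p/p_0)^{mr/2c}$ is far smaller than $p^{r/c}$ for large $m$. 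The point is that distinct balls in $B$ coming from distinct $\kappa_b$ does \emph{not} force their $\omega$-witnesses to be even approximately disjoint, so ``each $b$ contributes at least $r_b/(2c)$ fresh edges on average'' is not true; the halved exponent cannot absorb an unbounded multiplicity of balls sharing one witness cluster.

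The missing idea is a disjointification step, which is the content of the paper's auxiliary lemma~\ref{lem:S1}: one does \emph{not} put a ball around every element of $K$, but greedily selects a pairwise \emph{disjoint} subfamily $\{\kappa_i\}\subseteq K$ (largest first) and assigns to each a ball $b_i$ of radius $2|\kappa_i|$ centered at a vertex of $\kappa_i$. The doubled radius is what lets $b_i$ swallow every element of $K$ that meets $\kappa_i$ (such an element has cardinality at most $|\kappa_i|$ and touches $\kappa_i$), so $\bigcup K\subseteq\bigcup_i b_i$ still holds while the witnesses $\omega_i=\kappa_i\cap\omega$ are now genuinely disjoint. Then $B\subseteq W$ forces a single set $\omega'=\bigsqcup_i\omega_i\subseteq\omega$ with $|\omega'|\geq\sum_i|\kappa_i|/c=\text{r}(B)/2c$, the enumeration of candidates gives the factor $(2\alpha)^{\text{r}(B)/2}$, and the bound follows. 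So the factor $2$ in the exponent originates in the doubled ball radius of the covering construction, not in an averaging argument over overlap patterns; your single-ball computation is fine, but the construction of $W$ must be changed before the multi-ball step can go through.
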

\noindent
We need an auxiliary result.

\begin{lem}\label{lem:S1}
Given a finite set $K$ of connected sets of edges there exist 
\begin{itemize}
\item
disjoint sets $\kappa_i\in K$, and
\item
for each $i$, a ball $b_i$ with center a vertex of $\kappa_i$,
\end{itemize}
such that
\begin{equation}
\text{r}(b_i) = 2|\kappa_i|,
\qquad
\bigcup K \subseteq \bigcup_i b_i.
\end{equation}
\end{lem}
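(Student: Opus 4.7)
The plan is to construct the $\kappa_i$ and the balls $b_i$ greedily, always peeling off a largest remaining member of $K$ together with a ball big enough to absorb every set it meets.

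Concretely, I would initialize $K' := K$ and $i := 1$, and then iterate the following loop while $K'$ is nonempty: pick any $\kappa_i \in K'$ of maximum cardinality, choose an arbitrary vertex $v_i$ of $\kappa_i$, set $b_i := (v_i, 2|\kappa_i|)$, remove from $K'$ every set that shares at least one vertex with $\kappa_i$, and increment $i$. The loop terminates because $K$ is finite.

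By construction the $\kappa_i$ never share a vertex, hence are edge-disjoint, and each $b_i$ has the required radius $\text{r}(b_i) = 2|\kappa_i|$. The crux is to check that every $\kappa \in K$ removed at stage $i$ lies entirely in $b_i$, so that $\bigcup K \subseteq \bigcup_i b_i$. Let $v$ be a vertex shared by such a $\kappa$ and by $\kappa_i$; greediness forces $|\kappa| \leq |\kappa_i|$. Using the elementary fact that a connected set of $n$ edges spans at most $n+1$ vertices and therefore has diameter at most $n$, I obtain $d(v_i, v) \leq |\kappa_i|$ along edges of $\kappa_i$ and $d(v, u) \leq |\kappa|$ along edges of $\kappa$ for every vertex $u$ of $\kappa$. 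The triangle inequality then gives $d(v_i, u) \leq |\kappa_i| + |\kappa| \leq 2|\kappa_i|$, so both endpoints of every edge of $\kappa$ lie within radius $2|\kappa_i|$ of $v_i$ and hence $\kappa \subseteq b_i$ by the definition of the ball.

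The only delicate point is the diameter bound for connected edge sets, and that is standard. A side benefit of the vertex-based greedy rule is that the $\kappa_i$ come out vertex-disjoint, which is strictly stronger than the edge-disjointness demanded by the statement.
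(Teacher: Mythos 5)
Your greedy peeling argument is correct and is essentially the unrolled form of the paper's own proof, which proceeds by induction on $|K|$: both select a maximal-cardinality $\kappa$, center a ball of radius $2|\kappa|$ at one of its vertices, and use the triangle inequality through a shared vertex together with maximality to absorb every set that meets $\kappa$. The only cosmetic differences are that the paper removes the sets sharing an \emph{edge} (rather than a vertex) with the chosen $\kappa$, and that it explicitly disposes of the degenerate case $\kappa=\emptyset$, which your loop should likewise discard before asking for a vertex of $\kappa_i$.
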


\begin{proof}[Sketch of proof] We proceed by induction on $|K|$. The base case $K=\emptyset$ is trivial. For the inductive step $|K|=n$, assume that the statement holds for any lower cardinality. Choose $\kappa\in K$ with maximal cardinality. The case $\kappa=\emptyset$ is trivial, so we assume that $\kappa$ is not empty. Let 
\begin{equation}
K' := \{\kappa'\in K \,|\, \kappa\cap\kappa' = \emptyset\}.
\end{equation}
Apply the inductive assumption to $K'$ to obtain  $\kappa_i\in K'$ and balls $b_i$, $i=1,\dots, m-1$, with the said properties. Set $\kappa_m = \kappa$, and let $b_m$ be any ball with center a vertex of $\kappa_m$ and radius $2|\kappa_m|$. Clearly all the elements of $K-K'$ are subsets of $b_m$, and $\kappa_m$ overlaps with no $\kappa_i$, $i<m$.
\end{proof}

\begin{proof}[Sketch of proof of lemma~\ref{lem:spherification}] Given $\omega$, choose $W$ as per the prescription of lemma~\ref{lem:S1} applied to the corresponding set $K$. It suffices to show that, given some ball set $B$, the probability that $W$ is a superset of $B$ is bounded as indicated. Let $B= \{b_i\}$. By construction, such an event requires that there exists 

\begin{itemize}
\item
a disjoint collection of sets $\omega_i\subseteq \omega$,
\item
connected sets of edges $\kappa_i$, each with the center of $b_i$ a vertex and 
\end{itemize}
\begin{equation}
\omega_i = \kappa_i\cap\omega,
\qquad 
\text{r}(b_i)=2|\kappa_i|,
\qquad
c|\omega_i|\geq |\kappa_i|.
\end{equation}
Consider the subset of $\omega$
\begin{equation}
\omega' = \bigsqcup_i\omega_i,
\end{equation}
and notice that
\begin{equation}
|\omega'|=\sum_i |\omega_i|\geq \sum_i \frac{|\kappa_i|}{c} = \frac{\text{r}(B)}{2c}.
\end{equation}
For each ball $b_i$ there are at most $\alpha^{\text{r}(b_i)/2}$ different $\kappa_i$ (compatible with $b_i$), and at most $2^{|\kappa_i|}=2^{\text{r}(b_i)/2}$ possible subsets $\omega_i\subseteq\kappa_i$ for each such $\kappa_i$. This gives in total at most $(2\alpha)^{\text{r}(B)/2}$ possible $\omega'$ that can contribute to the event $B\subseteq W$. Each contributes a probability
\begin{equation}
\text{prob}(\omega'\subseteq \omega)\leq p^{|\omega'|}\leq p^{\text{r}(B)/2c}.
\qedhere
\end{equation}
\end{proof}

\subsection{Ball aggregation}

The following results quantifies how unlikely large connected clusters of balls are. The settings are as in section~\ref{sec:local_to}.

\begin{lem}\label{lem:aggregation}
Let $\alpha>0$, $c\geq 1$ and $0\leq p\leq 1$. If 
\begin{itemize}
\item
the number of self-avoiding walks (SAW) of length at most $n$ and starting point on any given vertex is bounded by $\alpha^n$,
\item
the ball set $W$ is a random variable satisfying, for any ball set $B$
\begin{equation}
\text{prob}(B\subseteq W)\leq p^{\text{r}(B)},
\end{equation}
\item
the edge sets $E_i$ are the connected components of $\bigcup W$.
\end{itemize}
Then there exist, for each $W$, balls $b_i$ such that
\begin{equation}\label{eq:Wibi}
E_i \subseteq b_i,
\end{equation}
and the set $A:=\{b_i\}$ is a random variable satisfying for any ball set $B$
\begin{equation}\label{eq:prob_A}
\text{prob}\left(B\subseteq A\right)\leq 
\left(\frac p{p_0}\right)^{\text{r}(B)/2},
\qquad p_0:=\left(\frac 2{3e\alpha}\right)^{2}.
\end{equation}
\end{lem}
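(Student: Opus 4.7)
The plan is to construct $A$ deterministically from $W$ as follows. Let $W_i\subseteq W$ be the sub-multiset of balls whose union is the connected component $E_i$, and set $b_i$ to be the ball of radius $2\,\text{r}(W_i)$ centered at the center of a canonically chosen element of $W_i$ (say, the first under a fixed ordering of the vertex set). Because the ball-intersection graph on $W_i$ is connected and any two adjacent balls in it have centers within graph distance $r_a+r_b$ of one another, walking through that graph gives $d(x,\text{center}(b_i))\le 2\,\text{r}(W_i)$ for every vertex $x\in E_i$, establishing~\eqref{eq:Wibi}.

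For the probability bound, I would observe that the event $B\subseteq A$ forces, for each $b\in B$, the existence of a subset $W_b\subseteq W$ whose union is connected, contains the center of $b$, and has total radius exactly $\text{r}(b)/2$. Distinct $b$'s correspond to distinct connected components of $\bigcup W$ (otherwise they would coincide as elements of $A$), so the $W_b$ are pairwise disjoint, and the disjoint union $W':=\bigsqcup_{b\in B}W_b\subseteq W$ satisfies $\text{r}(W')=\text{r}(B)/2$. The ball-local hypothesis then yields $\text{prob}(W'\subseteq W)\le p^{\text{r}(B)/2}$.

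It remains to count, for a fixed $B$, the number of admissible configurations $(W_b)_{b\in B}$. The count factorizes over $b\in B$, so the task reduces to bounding, for a ball $b$ of radius $2m$ centered at $v$, the number $N(v,m)$ of connected ball sets of total radius $m$ that contain $v$ as one of their ball centers. I would encode each such set by (i) a spanning tree of its ball-intersection graph rooted at the ball through $v$, (ii) a composition $r_1+\cdots+r_k=m$ of the radii, and (iii) a depth-first walk in the ambient graph that visits each ball center in DFS order of the tree. The SAW hypothesis controls the geometric placement by powers of $\alpha$, the Catalan count bounds the tree shape by $4^k$, and a Stirling-type estimate on the compositions produces the factor $e$; balancing these three contributions yields a clean bound $N(v,m)\le(3e\alpha/2)^{2m}=p_0^{-m}$. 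Multiplying across $b\in B$ and combining with the probability estimate gives
\[\text{prob}(B\subseteq A)\le \prod_{b\in B}p_0^{-\text{r}(b)/2}\cdot p^{\text{r}(B)/2}=(p/p_0)^{\text{r}(B)/2},\]
which is exactly~\eqref{eq:prob_A}.

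The main obstacle is the counting step: orchestrating tree shape, radii compositions and SAW-constrained placements so that they combine into a single exponential bound with base $(3e\alpha/2)^2$ requires careful bookkeeping, and it is here (rather than in the geometric construction of $A$ or the factorization over components) that the proof's real work lives. The factor $e$ appearing in $p_0$ is essentially the same constant that shows up in standard lattice-animal enumeration bounds.
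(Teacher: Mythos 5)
Your overall architecture matches the paper's: build $A$ deterministically from the component decomposition of $\bigcup W$, exhibit for each $b\in B$ a disjoint witness sub-collection of $W$ of total radius $\text{r}(b)/2$, and close with a union bound of the form $(\text{count})\times p^{\text{r}(B)/2}$. The gap is in the choice of witness, and it propagates fatally into the counting step. By setting $\text{r}(b_i)=2\,\text{r}(W_i)$ you commit to the witness being the \emph{entire} component $W_i$, so your union bound must run over all connected ball sets of total radius $m$ anchored at a fixed vertex. That quantity is not $O(C^m)$ under the stated hypotheses: take one ball $(v,m/2)$ together with $m/2$ unit balls whose centers may sit at any of up to $\sim\alpha^{m/2}$ vertices within distance $m/2+1$ of $v$ (attainable, e.g., on a regular tree); this already gives of order $\binom{\alpha^{m/2}}{m/2}$ configurations, super-exponential in $m$, so your claimed bound $N(v,m)\le(3e\alpha/2)^{2m}$ is false. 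The proposed encoding does not rescue it: a DFS traversal of the ball-intersection tree is not a self-avoiding walk, so the $\alpha^n$ hypothesis does not apply to it, and its length is $2\sum_a \deg_T(a)\,r_a$, which for a star-shaped tree is $\Theta(km)$ rather than $O(m)$.

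The paper's proof sidesteps exactly this point. It does not take the witness to be all of $W_i$: it picks two vertices $v,v'$ of $E_i$ realizing its diameter, a SAW $q$ from $v$ to $v'$, and a \emph{subset} $V\subseteq W_i$ whose centers all lie on $q$ with $|q|\le 2\,\text{r}(V)$, and then defines $b_i=(v,2\,\text{r}(V))$. Containment $E_i\subseteq b_i$ still holds because $\text{diam}(E_i)\le|q|\le 2\,\text{r}(V)$, while the witness is now a pair consisting of a SAW of length $\le \text{r}(b)$ and a placement of radii along it, counted by $\alpha^{l}\binom{r+l}{l}$ with $r=\text{r}(b)/2$, $l=\text{r}(b)$ --- which is where $(3e\alpha/2)^2$ actually comes from via $\binom{a}{b}<(ae/b)^b$. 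To salvage your construction you would have to shrink the witness to something traceable by a single short SAW and let the ball radius be dictated by that witness rather than by all of $W_i$; that is precisely the paper's move.
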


\begin{proof}[\skproof]
For each $W$ and each $i$, let $W_i\subseteq W$ be the subsets forming the unique partition of $W$ such that
\begin{equation}
\bigcup W_i = E_i.
\end{equation}
We choose $b_i$ as follows. Consider any two vertices $v, v'$ of $E_i$ with distance equal to the diameter of $E_i$. It is not difficult to check that there exists 
\begin{itemize}
\item
a SAW $q$ from $v$ to $v'$, and 
\item
a ball subset $V\subseteq W_i$ such that $q$ visits all the the centers of the balls in $V$ and has length
\begin{equation}\label{eq:length_q}
|q|\leq 2\text{r}(V).
\end{equation}
\end{itemize}
We set $b_i=(v, 2\text{r}(V))$. 

The relation~\eqref{eq:Wibi} is satisfied because by~\eqref{eq:length_q} the diameter of $E_i$ is bounded by $2\text{r}(V)$. As for~\eqref{eq:prob_A}, given any ball set $B$, the condition $B\subseteq A$ implies that for each ball $b\in B$ there exists such a SAW $q$ and ball set $V\subseteq W$, and these ball sets are mutually disjoint for different elements of $B$. By construction
\begin{equation}
\text{prob}\left(V\subseteq W\right)\leq p^{\text{r}(V)} = p^{\text{r}(b)/2}
\end{equation}
and thus
\begin{equation}
\text{prob}\left(B\subseteq A\right)\leq 
p^{\text{r}(B)/2}\prod_{b\in B} f\left(\frac{\text{r}(b)}2,\text{r}(b)\right)
\end{equation}
where $f(r,l)$ is the maximal number of pairs $(q, V)$ with $V$ a ball set with $\text{r}(V) = r$ and $q$ a SAW that visits the center of each ball in $V$, has length $|q|\leq l$ and has a fixed starting point (the argument of the maximization). For a fixed $q$, the number of possible sets $V$ is bounded by the number of configurations of $r$ identical particles on $l+1$ `states', which equals the number of binary strings with length $r+l$ and weight $l$. That is, we have
\begin{equation}\label{eq:f}
f(r, l)\leq 
\alpha^{l}
\binom {r+l}l
\end{equation}
and the result follows using the bound%
\footnote{
\url{https://en.wikipedia.org/wiki/Binomial_coefficient}
}
\begin{equation}
\binom a b < \left(\frac {ae} b\right)^b.
\qedhere
\end{equation}
\end{proof}

\subsection{Threshold for $\mathbf Z_2$ charge error correction}\label{sec:z2}

The purpose of this section is to show that a well-known class of efficient decoders exhibits an error correction threshold for ball-local noise.

\subsubsection{Decoding a $\mathbf Z_2$ charge}\label{sec:z2_decoding}

The decoding problem of a single $\mathbf Z_2$ charge~\cite{dennis:2002:tqm} is as follows:
\begin{itemize}
\item
a (syndrome) graph represents the code structure,
\item
an error is represented by a subset of edges $E$,
\item
there are \emph{inner} vertices that represent a check operator, and \emph{outer} vertices that do not,
\item
the syndrome $\partial E$ of an error is represented by a set of inner vertices, those that are the endpoint of an odd number of edges in the error set.
\end{itemize}
A decoder outputs a set of edges $E'$ compatible with the error syndrome, \emph{i.e.} such that $\partial E'= \partial E$. The original error $E$ and the decoded error $E'$ can be combined into a logical error $E+E'$ (an error with trivial syndrome). Decoding fails when $E+E'$ is non-trivial, \emph{i.e.} affects the encoded information. 
 
Minimum-weight decoders output a set $E'$ with minimum cardinality among those compatible with the syndrome. There exist efficient implementations~\cite{dennis:2002:tqm}.

\subsubsection{Threshold for ball-local noise}\label{sec:z2_threshold}

The next argument shows that minimum-weight decoding of a $\mathbf Z_2$ charge exhibits an error threshold for ball-local noise. Assume that a family of codes is labeled with an integer $d$ that can be arbitrarily large, so that:
\begin{itemize}
\item
the number of self-avoiding walks (SAW) of length at most $n$ and starting point on any given vertex is bounded by $\alpha^n$,
\item
the number of vertices is polynomial in $d$,
\item
every non-trivial logical error connects two \emph{outer} vertices%
\footnote{Restricting to outer vertices is enough for tetrahedral codes and simplifies the argument.}
with distance at least $d$.
\end{itemize}
We denote by $|p|$ the length of a path $p$. The argument is divided in steps.

\begin{figure}
\centering
\includegraphics[width=.7\columnwidth]{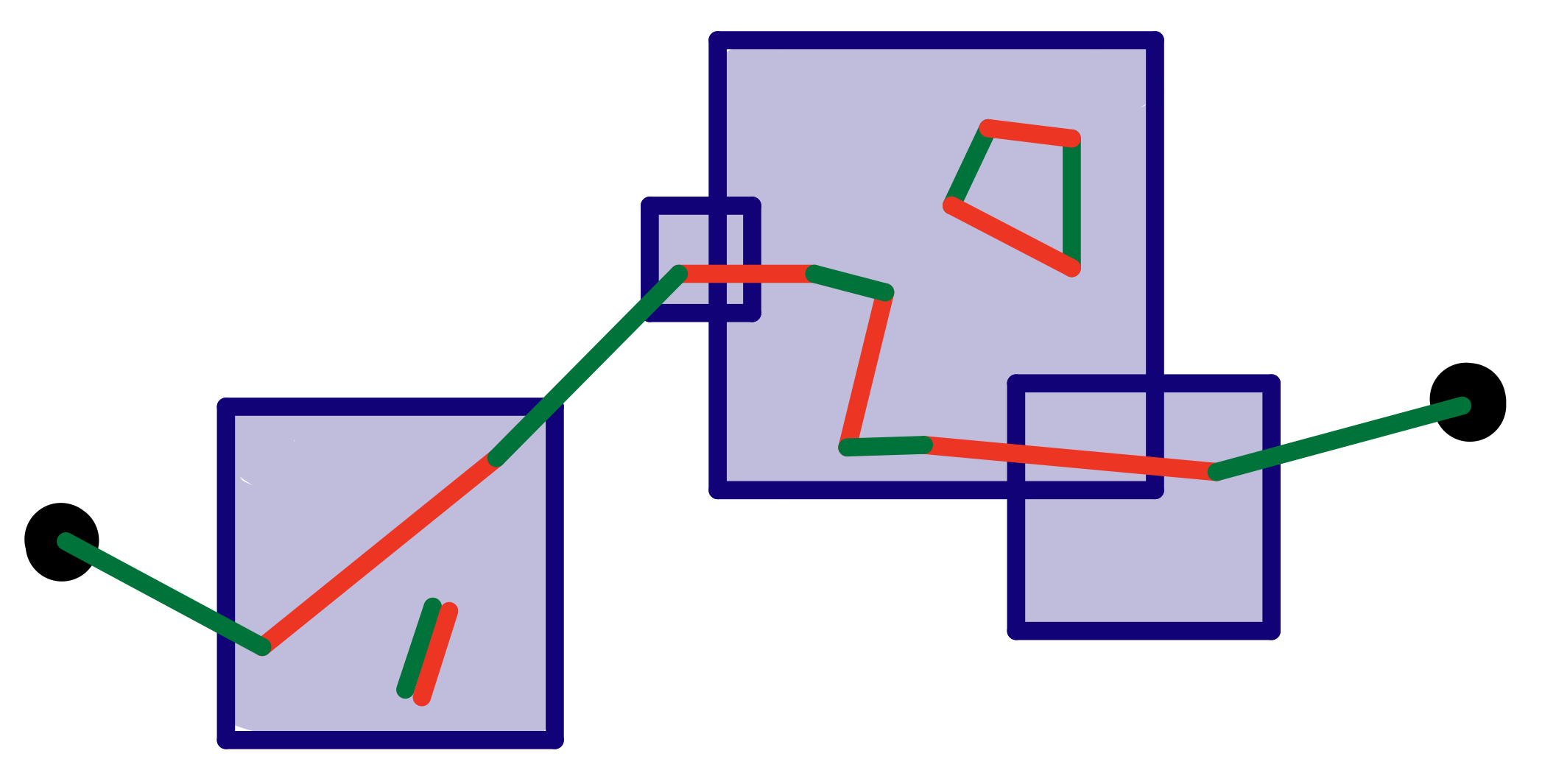}
\caption{
The original error $E$ (red) and the estimated error $E'$ (green) connect two outer vertices (black dots) in the event of a failure. The purple regions represent the balls in the set $W$.
}
\label{fig:bad_error}
\end{figure}

\noindent{\bf 0.} In the event of failure $E+E'$ connects two outer vertices $v$ and $v'$ with distance at least $d$. The error $E$ is a subset of $\bigcup W$, with $W$ as in $\eqref{eq:ball-local}$. This is illustrated in figure~\ref{fig:bad_error}.

\noindent{\bf 1.} It is not difficult to show that there exist ball sets $B_i$, $i=1,\dots, n$, and a SAW $w$ from $v$ to $v'$ obtained by concatenating together a sequence of paths
\begin{equation}\label{eq:walk}
(w_0, w_1', w_1, \cdots, w_n',w_n),
\end{equation}
such that
\begin{itemize}
\item
$B_i\cap B_j = \emptyset$ for $i\neq j$,
\item
$w_i$ has its edges in $E'$,
\item
$w_j'$ visits the center of each ball of $B_j$, and 
\begin{equation}
|w'_j|\leq 2r(B_j).
\end{equation}\end{itemize} 

\begin{figure}
\centering
\includegraphics[width=.7\columnwidth]{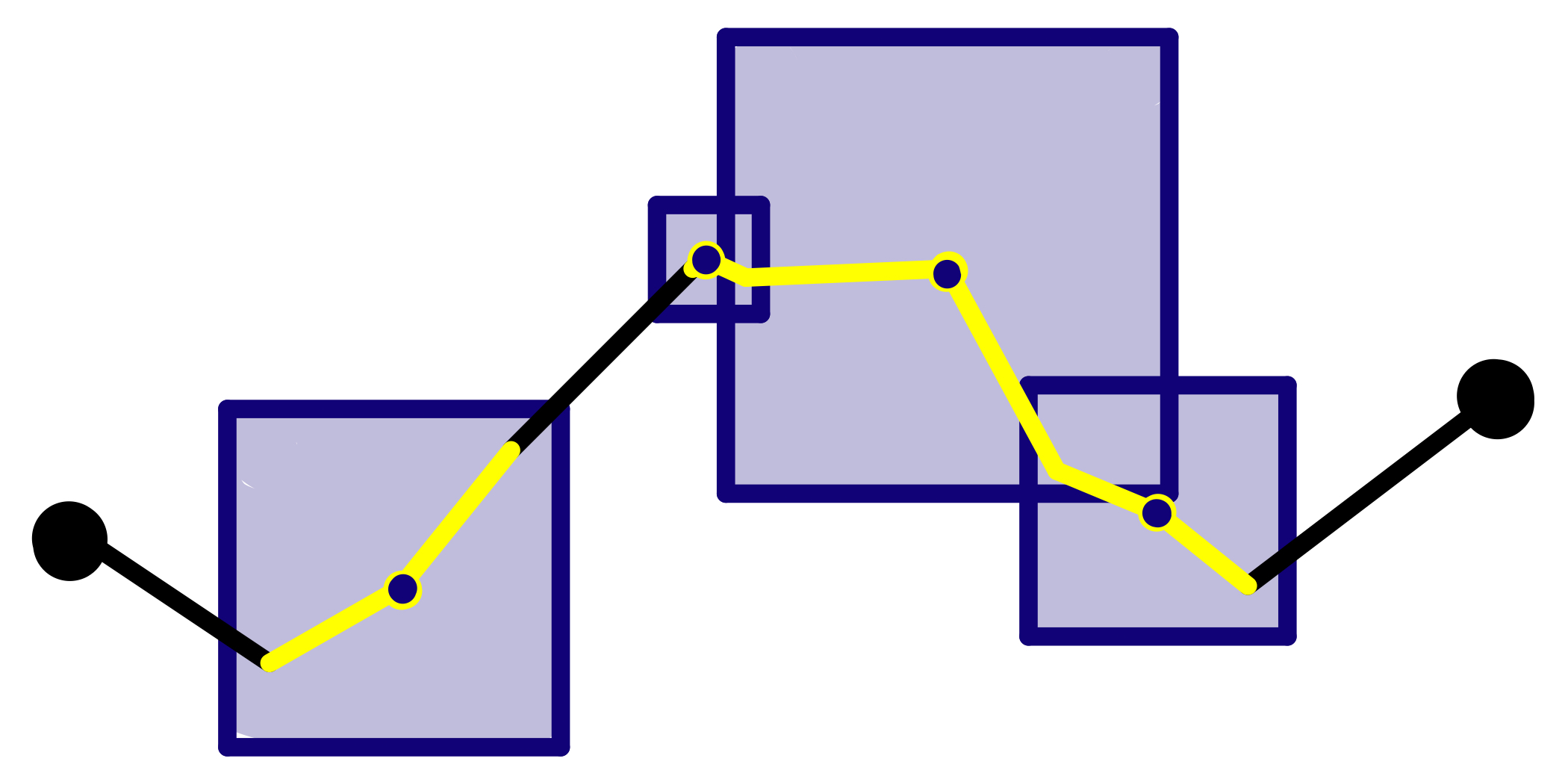}
\caption{
A path as in equation~\eqref{eq:walk}, with the paths $w_i$ in black and the paths $w_i'$ in yellow.
}
\label{fig:path}
\end{figure}

\noindent{\bf 2.} Let $L$ be the set of edges of $w$. Since the endpoints of $w$ are outer vertices
\begin{equation}
\partial (E'+L) = \partial E',
\end{equation}
so that by the minimality of $|E'|$, and noting that by construction $L-E'$ only contains edges of the paths $w_i'$,
\begin{equation}
\sum_{j=1}^n |w_j'|\geq|L-E'|\geq |L\cap E'| \geq \sum_{i=0}^n |w_i|,
\end{equation}
which yields
\begin{equation}
|w|\leq 4r(B),\qquad B:=\bigsqcup_j B_j.
\end{equation}
\noindent{\bf 3.} We have found that a failure event requires of the existence of a SAW $w$ and a set of balls $B\subseteq W$ such that $w$ visits the centers of all the balls in $B$ and
\begin{equation}
d\leq |w|\leq 4r(B).
\end{equation}
The probability of such an event for a fixed set of balls $B$ is bounded by $p^{r(B)}$, and thus the same is true if we fix both $w$ and $B$. The number of such pairs $w$ and $b$ given a fixed starting point for $w$ is bounded by $f(\text{r}(B), 4\text{r}(B))$ with $f$ as in~\eqref{eq:f}. With this observation, the rest of the argument is standard~\cite{dennis:2002:tqm}.

\section{The cost of naivety}\label{sec:naive_proof}

The aim of this appendix is to show that the residual noise $\hat \omega$ of the naive decoder, as described in equation~\ref{eq:JIT_error}, follows a ball-local distribution under reasonable conditions, given that the error rate of the noise afflicting ancilla qubits is below a threshold.

\subsection{Technical conditions on the decoders}\label{sec:minimization}

It is is unclear if minimum weight errors can be computed efficiently for tetrahedral codes.
However, as discussed in~\cite{bombin:2015:single-shot}, it is possible to relax the minimization condition in such a way that efficient decoders exist and certain proof techniques of the minimum weight case can be extended.

The relaxed conditions, adapted to the present scenario, are as follows. 
There exists $k_\text{min}$ such that for every code in the family of tetrahedral color codes of interest and every step $i$ of JIT error correction the open/closed boundary conditions decoders satisfy:

\begin{warning} 
{\bf Minimization conditions}
\begin{itemize}
\item
For any $\omega\subseteq \Phi_i$ and any connected component $\kappa$ of $D_i(\omega)\cup\omega$
\begin{equation}
|\kappa|\leq k_\text{min}|\kappa\cap \omega|.
\end{equation}
\item
For any $\omega\subseteq\overline {\Phi_i}$ and any connected component $\kappa$ of $D_i'(\omega)\cup\omega$
\begin{equation}
|\kappa|\leq k_\text{min}|\kappa\cap \omega|.
\end{equation}\end{itemize}
\end{warning}
\noindent
The notion of connectedness here is given by the dual graph $\Gamma$ of the 3-colex%
\footnote{
With no `outer' vertices, so that dual edges on the boundary have a single endpoint.
}.
The key relationship between connectedness and syndrome is that if $\gamma$ has trivial syndrome, then every connected component of $\gamma$ has trivial syndrome too.
Notice that for decoders that compute minimum weight errors 
\begin{equation}
k_\text{min} = 2.
\end{equation}

\subsection{Ball-locality of the syndrome}\label{sec:naive_result}

The key link between JIT error correction and ball-local noise is lemma~\ref{lem:naive} below. 
Together with lemma~\ref{lem:spherification}, it shows that indeed the residual noise of JIT decoding follows a ball-local distribution (under the given conditions). 
The notation here is as in section~\ref{sec:errors}.

\begin{lem}\label{lem:naive}
If the minimization and closure conditions hold, there exists connected flux configurations $\kappa_j$ such that
\begin{equation}
\hat\omega \subseteq \bigcup_j \kappa_j,
\qquad
|\kappa_j|\leq c |\kappa_j\cap\omega|,
\end{equation}
where
\begin{equation}\label{eq:c}
c=2k_\text{min}(k_\text{min}k_\text{close}+1)+1.
\end{equation}
\end{lem}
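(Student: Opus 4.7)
The plan is to construct the $\kappa_j$ as the connected components of a single auxiliary set $\Sigma$ that contains $\hat\omega$ together with the intermediate outputs of the open- and closed-boundary decoders, and then read off the bound $|\kappa_j|\le c|\kappa_j\cap\omega|$ by chaining two applications of the minimization condition and one application of the closure condition within each component.

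The first step is to rewrite every object in terms of $\omega$ rather than $\tilde\phi$. Because $\phi\cap\Phi_i\in\mathcal C_i$ and the decoder $D_i$ satisfies $D_i(\phi+\omega')=\phi+D_i(\omega')$ for codewords $\phi\in\mathcal C_i$, one obtains $\omega_i'=D_i(\omega\cap\Phi_i)+\omega\cap\Phi_i$, whence
\begin{equation}
\hat\omega = \omega + \sum_i\bigl(D_i(\omega\cap\Phi_i)\cap\Lambda_i + \epsilon_i\bigr),
\end{equation}
and substituting into the argument of the closure decoder gives
\begin{equation}
\alpha_i := \omega_{i-1}' + \omega_i'\cap\Phi_{i-1} = D_{i-1}(\omega\cap\Phi_{i-1}) + D_i(\omega\cap\Phi_i)\cap\Phi_{i-1}.
\end{equation}
Checking $\alpha_i\in\mathcal C_{i-1}$ is then a routine verification from the definition of $\mathcal C_i$. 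The closure condition supplies a surrogate $\beta_i\subseteq\overline{\Phi_{i-1}}$ with $\partial\beta_i=\partial\alpha_i$ and $|\beta_i|\le k_{\text{close}}|\alpha_i|$; because $E_{i-1}$ depends only on the syndrome of its argument, $\epsilon_i=E_{i-1}(\beta_i)$.

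Next, take $\Sigma := \omega \cup \bigcup_i\bigl(D_i(\omega\cap\Phi_i)\cup\beta_i\cup D_{i-1}'(\beta_i)\bigr)$ and let the $\kappa_j$ be its connected components; the identities above guarantee $\hat\omega\subseteq\Sigma$ and each $\kappa_j$ is connected by construction. Within a fixed component $\kappa$, the minimization condition on $D_i$ bounds each $D_i(\omega\cap\Phi_i)$-piece of $\kappa$ by $k_{\text{min}}$ times the corresponding $\omega$-piece, the closure condition bounds each $\beta_i$-piece by $k_{\text{close}}$ times the corresponding $\alpha_i$-piece (which itself sits inside the already-bounded $D$-pieces for layers $i-1$ and $i$), and the minimization condition on $D_{i-1}'$ bounds each $\epsilon_i$-piece by $k_{\text{min}}$ times the corresponding $\beta_i$-piece. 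Combining these inequalities, accounting for the direct $\omega$ term and the factor $2$ arising from the two adjacent layers that each feed into $\alpha_i$, yields precisely $c=2k_{\text{min}}(k_{\text{min}}k_{\text{close}}+1)+1$.

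The main obstacle is that the closure condition of section \ref{sec:spacetime_closure} provides only a global weight bound on $\beta_i$, whereas the componentwise tally above requires $|\beta_i\cap\kappa|$ to be controlled by $|\alpha_i\cap\kappa|$ for the \emph{same} component $\kappa$. Resolving this means applying the construction of appendix \ref{sec:flux} separately to each connected component of $\alpha_i$ and verifying that the resulting triangle-strip surrogates remain local to the portion of $\alpha_i$ they close; this, together with checking that each connected component of $\alpha_i$ inherits the $\mathcal C_{i-1}$-closure property, is the key technical point of the argument.
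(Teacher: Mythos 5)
Your overall route is the same as the paper's: rewrite $\omega_i'$ and $\hat\omega$ in terms of $\omega$ using the syndrome-equivariance of the decoders, observe that the argument of the compensating step (your $\alpha_i$, the paper's $\delta_i := \omega_{i-1}'+\omega_i'\cap\Phi_{i-1}$) lies in $\mathcal C_{i-1}$, apply the closure condition to obtain a surrogate in $\overline{\Phi_{i-1}}$ with the same syndrome, use $\epsilon_i = \mu_i + D_{i-1}'(\mu_i)$ to invoke the minimization condition on the closed-boundary decoder, and chain the bounds componentwise. You also correctly isolate the key technical point: the closure condition must be applied \emph{per connected component} of $\alpha_i$, with each surrogate piece connected to the component it closes. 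The paper handles exactly this by noting that every connected component of $\delta_i$ is itself in $\mathcal C_{i-1}$, applying closure to $\kappa_i^{(j)}\cap\delta_i$ for each component $\kappa_i^{(j)}$ of $\Omega_i:=\epsilon_i\cup\omega_{i-1}'\cup\omega_i'\cup\omega$, and stipulating that each connected component of the surrogate $\mu_i^{(j)}$ touches $\kappa_i^{(j)}\cap\delta_i$ (which is always arrangeable).

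The one place your construction genuinely fails is the choice of a single global set $\Sigma=\omega\cup\bigcup_i(\cdots)$ whose connected components serve as the $\kappa_j$. A component $\kappa$ of $\Sigma$ can span many layers, and then your tally accumulates a term $|\kappa\cap D_i(\omega\cap\Phi_i)|\leq k_{\text{min}}|\kappa\cap\omega\cap\Phi_i|$ for \emph{every} layer $i$; since the $\Phi_i$ are nested, all of these are charged against essentially the same set $\kappa\cap\omega$, so the resulting constant grows linearly with the number of layers the component crosses rather than being the fixed $c$ of equation~\eqref{eq:c}. Your "factor $2$ from the two adjacent layers" implicitly assumes each component only sees two layers, which a global connected component need not satisfy. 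The fix is what the paper does: form the connected components layer by layer, i.e.\ take $\bar\kappa_i^{(J)}$ to be the components of $\mu_i\cup\Omega_i$ for each fixed $i$ (so only $\omega_{i-1}'$, $\omega_i'$, $\epsilon_i$, $\mu_i$ and $\omega$ enter any one component), prove $|\bar\kappa_i^{(J)}|\leq c\,|\bar\kappa_i^{(J)}\cap\omega|$ there, and take the covering family to be the union over all $i$ and $J$. The lemma does not require the $\kappa_j$ to be disjoint or to be the components of a single set, so overlapping per-layer components are perfectly admissible, and $\hat\omega\subseteq\bigcup_i\bigcup_J\bar\kappa_i^{(J)}$ follows from equation~\eqref{eq:JIT_error}.
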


\begin{proof} Consider the flux configurations
\begin{equation}
\delta_i := \omega_{i-1}'+\omega'_i\cap\Phi_{i-1},
\end{equation}\begin{equation}
\Omega_i := \epsilon_i\cup\omega_{i-1}'\cup\omega_i'\cup\omega,
\end{equation}
and the partition into connected components
\begin{equation}
\Omega_i = \bigsqcup_j \kappa_{i}^{(j)}.\label{partition}
\end{equation}
Any connected component of $\delta_i$ is an element of $\mathcal C_{i-1}$ because $\delta_i\in \mathcal C_{i-1}$. Since $\delta_i\subseteq \Omega_i$ this implies
\begin{equation}
\kappa_i^{(j)}\cap \delta_i \in \mathcal C_{i-1}.
\end{equation}
Thus by the closure condition there exist
\begin{equation}
\mu_i^{(j)}\in \overline\Phi_{i-1}
\end{equation}
with
\begin{equation}
\mu_i^{(j)}+(\kappa_i^{(j)}\cap \delta_i) \in \mathcal C,
\qquad
|\mu_i^{(j)}|\leq k_\text{close} |\kappa_i^{(j)}\cap \delta_i|.
\label{bound_closure}
\end{equation}
\emph{We assume that each connected component of $\mu_i^{(j)}$ is connected to $(\kappa_i^{(j)}\cap \delta_i)$}, as such a choice is trivially always possible. We perform another partition in connected components, namely
\begin{equation}
\mu_i\cup\Omega_i = \bigsqcup_J \bar\kappa_{i}^{(J)},
\qquad
\mu_i :=\sum_j \mu_i^{(j)}.
\end{equation}
where the indices $J$ form a partition of the set of indices of the partition $\eqref{partition}$:
\begin{align}
\bar\kappa_{i}^{(J)}\cap \Omega_i &= \bigsqcup_{j\in J}\kappa_i^{(j)},
\\
\bar\kappa_{i}^{(J)}\cap\mu_i &= \bigsqcup_{j\in J} \mu_i^{(j)}.
\end{align}
The bound in $\eqref{bound_closure}$ translates into
\begin{equation}
 |\bar\kappa_{i}^{(J)}\cap \mu_i|\leq 
 k_\text{close} |\bar\kappa_i^{(J)}\cap \delta_i|\leq 
k_\text{close} 
|\bar\kappa_i^{(J)}\cap (\omega_{i-1}'\cup\omega_i')|.
\label{bound_closure_2}
\end{equation}
Since
\begin{equation}
\delta_i + \mu_i\in\mathcal C,
\end{equation}
we have, using~\eqref{eq:JIT_epsilon2},
\begin{equation}
\epsilon_i = \delta_i   + \overline{D_{i-1}}(\delta_i ) = \mu_i + D_{i-1}' (\mu_i),
\end{equation}
and thus the minimization condition on $D_{i-1}'$ yields
\begin{equation}
|\bar\kappa_{i}^{(J)}\cap (\epsilon_i\cup\mu_i)|\leq 
k_\text{min} |\bar\kappa_{i}^{(J)}\cap \mu_i|.
\label{bound_min_1}
\end{equation}
Since, for any $i$,
\begin{equation}
\omega'_i = \omega\cap\Phi_i+D_i(\omega\cap\Phi_i),
\end{equation}
the minimazation condition on $D_i$ and $D_{i-1}$ yields
\begin{equation}
|\bar\kappa_{i}^{(J)}\cap (\omega_{i-1}'\cup\omega_i')|\leq 
2k_\text{min} |\bar\kappa_{i}^{(J)}\cap \omega|.
\label{bound_min_2}
\end{equation}
Putting together $(\ref{bound_closure_2}, \ref{bound_min_1},\ref{bound_min_2})$, 
\begin{align}
|\bar\kappa_{i}^{(J)}|
&\leq 
|\bar\kappa_{i}^{(J)}(\epsilon_i\cup\mu_i)|
+|\bar\kappa_{i}^{(J)}\cap (\omega_i'\cup\omega_i)|
+|\bar\kappa_{i}^{(J)}\cap\omega|
\nonumber\\
&\leq
c|\bar\kappa_{i}^{(J)}\cap\omega|,
\end{align}
which is enough because, by $\eqref{eq:JIT_error}$,
\begin{equation}
\hat \omega \subseteq
\bigcup_i\bigcup_J\bar\kappa_i^{(J)}.
\qedhere
\end{equation}
\end{proof}

\subsection{Technical conditions on the lattice}

Theorem~\ref{thm} below relies on the colexes satisfying a set of conditions that ensure that the local structure of the colexes is sufficiently uniform. When the object of interest is not a single tetrahedral code, but rather a family of such encodings (\emph{e.g.} in the study of error thresholds), this uniformity condition should be satisfied by the whole family of colexes.

Let $\mathcal B$ denote the set of all balls in the dual graph $\Gamma$ of a tetrahedral colex, and $\mathcal B'$ the set of all balls in its $X$-error syndrome hypergraph. The edges of this hypergraph are the physical qubits of the code. Thus, every ball in $\mathcal B'$ can be identified with its set of qubits.

\begin{warning}
{\bf Uniformity conditions}
\begin{itemize}
\item
The number of connected subgraphs of the dual graph $\Gamma$ with $n$ edges and containing any given vertex is bounded by $\alpha^n$.
\item
There are constants $m_0, m_1$ and a map 
\begin{equation}
\mathcal M: \mathcal B\longrightarrow \mathcal B'
\end{equation}
such that 
\begin{itemize}
\item
the preimage of any ball in $\mathcal B'$ under $\mathcal M$ has at most $m_0$ elements,
\item
the radius of $\mathcal M(b)$ is $m_1 \text{r}(b)$, and
\item
for any ball $b\in \mathcal B$ and any syndrome $\phi\subseteq b$ there exists $x\in P_X$ with syndrome $\phi$ and support a subset of $\mathcal M(b)$.
\end{itemize}
\end{itemize}
\end{warning}

\noindent
These conditions are satisfied for families of tetrahedral colexes with a uniform and flat local structure, including at facets and corners, such as the one discussed in~\cite{bombin:2015:gauge}. 
The first condition trivially holds if the vertices of the dual graph $\Gamma$ have bounded valence. 
The validity of the second is rooted in the homological structure of the codes, as can be verified using the mapping in~\cite{kubica:2015:unfolding}.

%

\subsection{Residual noise}

The notation here is as in section~\ref{sec:errors}.

\begin{thm}\label{thm}
If the closure, minimization and uniformity conditions hold, and the flux configuration $\omega$ follows a local distribution with rate $p\leq p_0$, with, for $c$ as in~\eqref{eq:c},
\begin{equation}
p_0 = \left( \frac 8 {(3em_0)^4\alpha^5}\right)^c,
\end{equation}
then there exists for each $\omega$ some $x\in P_X$ with syndrome $\hat \omega$ such that $x$ follows a ball-local distribution with rate
\begin{equation}
\left(\frac p{p_0}\right)^{1/4m_1c}.
\end{equation}

\end{thm}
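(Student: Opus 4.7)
The plan is to assemble the three auxiliary lemmas (\ref{lem:naive}, \ref{lem:spherification}, \ref{lem:aggregation}) in sequence on the dual graph $\Gamma$, and then transfer the resulting ball-local structure to the qubit hypergraph using the map $\mathcal{M}$ from the uniformity conditions, producing an $x\in P_X$ whose support is ball-local.

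First, I would invoke Lemma~\ref{lem:naive}: the closure and minimization conditions imply the existence of connected flux configurations $\kappa_j$ covering $\hat\omega$ with $|\kappa_j|\leq c\,|\kappa_j\cap\omega|$ for the constant $c$ of~\eqref{eq:c}. Since $\omega$ is local with rate $p$ and the first uniformity condition bounds by $\alpha^n$ the number of connected subgraphs of $\Gamma$ of size $n$ containing a fixed vertex, the hypotheses of Lemma~\ref{lem:spherification} are in force. Applying it yields a (random) ball set $W\subseteq\mathcal B$ with $\hat\omega\subseteq\bigcup W$ and ball-local rate $(p/p_1)^{1/2c}$, where $p_1=(2\alpha)^{-c}$.

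Next I would feed $W$ into Lemma~\ref{lem:aggregation} to produce a ball set $A$ wrapping the connected components of $\bigcup W$, whose rate after composing exponents takes the form $(p/p_2)^{1/4c}$ for some $p_2$ assembled from $\alpha$, $c$ and the explicit constants of Lemma~\ref{lem:aggregation}. The key payoff of this aggregation step is geometric: each ball $b\in A$ now encloses an \emph{entire} connected component of $\bigcup W$, and hence the piece $\hat\omega\cap b$ is itself a legitimate syndrome (the monopole/Gauss-law constraints being local in $\Gamma$, they automatically split along connected components of the ambient flux configuration). Consequently the third bullet of the uniformity conditions applies to $b$: it produces a local Pauli $x_b\in P_X$ whose syndrome equals $\hat\omega\cap b$ and whose support lies inside the qubit ball $\mathcal M(b)$.

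The final step is to set $x:=\prod_{b\in A} x_b$, which is well defined because the underlying connected components are disjoint, has syndrome exactly $\hat\omega$, and is supported inside the qubit ball set $W':=\{\mathcal M(b)\,|\,b\in A\}$. To verify ball-locality of $W'$, I would bound, for any qubit ball set $B'$, the probability that $B'\subseteq W'$ by summing over preimages in $A$: the second uniformity bullet contributes at most $m_0^{|B'|}$ such configurations, each a dual-graph ball set of total radius $r(B')/m_1$ (since radii dilate by $m_1$ under $\mathcal M$). Combining with the rate bound on $A$ yields
\begin{equation*}
\mathrm{prob}(B'\subseteq W')\leq m_0^{|B'|}\,(p/p_2)^{\,r(B')/4cm_1},
\end{equation*}
and absorbing $m_0^{|B'|}\leq m_0^{r(B')}$ into the threshold gives the claimed rate $(p/p_0)^{1/4m_1 c}$, with $p_0$ obtained by matching the constants in the cascade ($(2\alpha)^{-c}$ from spherification, $(2/3e\alpha)^2$ from aggregation, and the $m_0^{4cm_1}$ factor from the preimage count) against the stated formula. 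I expect the main obstacle to be twofold: first, making rigorous that the restriction of $\hat\omega$ to each aggregated ball really is a syndrome (so that the third uniformity bullet can be invoked), and second, carrying the constants through the two-step composition cleanly so that the $m_0$ factor is absorbed only into $p_0$ without degrading the $1/4m_1 c$ exponent.
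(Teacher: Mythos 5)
Your proposal follows essentially the same route as the paper's proof: chain lemmas~\ref{lem:naive}, \ref{lem:spherification} and \ref{lem:aggregation} on the dual graph, use the fact that the mutually disconnected pieces of the syndrome $\hat\omega$ inside the aggregated balls are themselves syndromes so that the third uniformity bullet yields a Pauli $x$ supported in $\mathcal M[A]$, and absorb the $m_0$ preimage count into the threshold. The one bookkeeping point is that the bound to use is $m_0^{|B'|}\leq m_0^{\text{r}(B')/m_1}$ (each ball of $\mathcal M[B]$ has radius at least $m_1$), which is what makes $m_0$ enter $p_0$ as $(3em_0)^4$ rather than with an $m_1$-dependent power as your $m_0^{|B'|}\leq m_0^{\text{r}(B')}$ would give.
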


\begin{proof}
By lemmas~\ref{lem:naive}, \ref{lem:spherification} and \ref{lem:aggregation} there exists for each $\omega$ a set of balls 
\begin{equation}
A=\{b_i\}
\end{equation}
and a partition
\begin{equation}
\hat\omega = \bigsqcup_i \hat\omega_i
\end{equation}
such that 
\begin{equation}
\hat\omega_i\subseteq b_i,
\end{equation}
the sets of edges $\hat\omega_i$ are mutually disconnected, and
\begin{equation}
\text{prob}(B\subseteq A)\leq \left(\frac p {p_0'}\right)^{\text{r}(B)/4c},
\qquad p_0' = \left( \frac 8 {(3e)^4\alpha^5}\right)^c
\end{equation}
Since $\hat \omega$ is a syndrome so is each $\hat\omega_i$, and thus there exists $x\in P_X$ with syndrome $\hat\omega$ and support on a subset of the qubits in the balls of $A'=\mathcal M[A]$, the image of $A$ under $\mathcal M$. Each $B'\subseteq \mathcal B'$ defines  a set $m(B')$ containing those ball sets $B\subseteq \mathcal B'$ such that
\begin{itemize}
\item
$\mathcal M[B]=B'$, and
\item
for any $b, b'\in B$ such that $b\neq b'$, $\mathcal M(b)\neq\mathcal M(b')$.
\end{itemize}
The result follows observing that
\begin{equation}
|m(B')|\leq m_0^{|B'|}\leq m_0^{\text{r}(B')/m_1}
\end{equation}
and
\begin{equation}
\text{prob}(B'\subseteq A')\leq \sum_{B\in m(B')} \text{prob}(B\subseteq A).
\qedhere
\end{equation}
\end{proof}

\bibliography{refs}

\begin{thebibliography}{49}%
\makeatletter
\providecommand \@ifxundefined [1]{%
 \@ifx{#1\undefined}
}%
\providecommand \@ifnum [1]{%
 \ifnum #1\expandafter \@firstoftwo
 \else \expandafter \@secondoftwo
 \fi
}%
\providecommand \@ifx [1]{%
 \ifx #1\expandafter \@firstoftwo
 \else \expandafter \@secondoftwo
 \fi
}%
\providecommand \natexlab [1]{#1}%
\providecommand \enquote  [1]{``#1''}%
\providecommand \bibnamefont  [1]{#1}%
\providecommand \bibfnamefont [1]{#1}%
\providecommand \citenamefont [1]{#1}%
\providecommand \href@noop [0]{\@secondoftwo}%
\providecommand \href [0]{\begingroup \@sanitize@url \@href}%
\providecommand \@href[1]{\@@startlink{#1}\@@href}%
\providecommand \@@href[1]{\endgroup#1\@@endlink}%
\providecommand \@sanitize@url [0]{\catcode `\\12\catcode `\$12\catcode
  `\&12\catcode `\#12\catcode `\^12\catcode `\_12\catcode `\%12\relax}%
\providecommand \@@startlink[1]{}%
\providecommand \@@endlink[0]{}%
\providecommand \url  [0]{\begingroup\@sanitize@url \@url }%
\providecommand \@url [1]{\endgroup\@href {#1}{\urlprefix }}%
\providecommand \urlprefix  [0]{URL }%
\providecommand \Eprint [0]{\href }%
\providecommand \doibase [0]{http://dx.doi.org/}%
\providecommand \selectlanguage [0]{\@gobble}%
\providecommand \bibinfo  [0]{\@secondoftwo}%
\providecommand \bibfield  [0]{\@secondoftwo}%
\providecommand \translation [1]{[#1]}%
\providecommand \BibitemOpen [0]{}%
\providecommand \bibitemStop [0]{}%
\providecommand \bibitemNoStop [0]{.\EOS\space}%
\providecommand \EOS [0]{\spacefactor3000\relax}%
\providecommand \BibitemShut  [1]{\csname bibitem#1\endcsname}%
\let\auto@bib@innerbib\@empty
\bibitem [{\citenamefont {Brown}\ \emph
  {et~al.}(2016{\natexlab{a}})\citenamefont {Brown}, \citenamefont {Loss},
  \citenamefont {Pachos}, \citenamefont {Self},\ and\ \citenamefont
  {Wootton}}]{brown:2016:quantum}%
  \BibitemOpen
  \bibfield  {author} {\bibinfo {author} {\bibfnamefont {B.}~\bibnamefont
  {Brown}}, \bibinfo {author} {\bibfnamefont {D.}~\bibnamefont {Loss}},
  \bibinfo {author} {\bibfnamefont {J.}~\bibnamefont {Pachos}}, \bibinfo
  {author} {\bibfnamefont {C.}~\bibnamefont {Self}}, \ and\ \bibinfo {author}
  {\bibfnamefont {J.}~\bibnamefont {Wootton}},\ }\href@noop {} {\bibfield
  {journal} {\bibinfo  {journal} {Reviews of Modern Physics}\ }\textbf
  {\bibinfo {volume} {88}},\ \bibinfo {pages} {045005} (\bibinfo {year}
  {2016}{\natexlab{a}})}\BibitemShut {NoStop}%
\bibitem [{\citenamefont {Lidar}\ and\ \citenamefont
  {Brun~(editors)}(2013)}]{lidar:2013:quantum}%
  \BibitemOpen
  \bibfield  {author} {\bibinfo {author} {\bibfnamefont {D.}~\bibnamefont
  {Lidar}}\ and\ \bibinfo {author} {\bibfnamefont {T.}~\bibnamefont
  {Brun~(editors)}},\ }\href@noop {} {\emph {\bibinfo {title} {Quantum Error
  Correction}}}\ (\bibinfo  {publisher} {Cambridge University Press},\ \bibinfo
  {address} {New York},\ \bibinfo {year} {2013})\BibitemShut {NoStop}%
\bibitem [{\citenamefont {Campbell}\ \emph {et~al.}(2017)\citenamefont
  {Campbell}, \citenamefont {Terhal},\ and\ \citenamefont
  {Vuillot}}]{campbell:2017:roads}%
  \BibitemOpen
  \bibfield  {author} {\bibinfo {author} {\bibfnamefont {E.}~\bibnamefont
  {Campbell}}, \bibinfo {author} {\bibfnamefont {B.}~\bibnamefont {Terhal}}, \
  and\ \bibinfo {author} {\bibfnamefont {C.}~\bibnamefont {Vuillot}},\
  }\href@noop {} {\bibfield  {journal} {\bibinfo  {journal} {Nature}\ }\textbf
  {\bibinfo {volume} {549}},\ \bibinfo {pages} {172} (\bibinfo {year}
  {2017})}\BibitemShut {NoStop}%
\bibitem [{\citenamefont {Bravyi}\ and\ \citenamefont
  {Kitaev}(2005)}]{bravyi:2005:universal}%
  \BibitemOpen
  \bibfield  {author} {\bibinfo {author} {\bibfnamefont {S.}~\bibnamefont
  {Bravyi}}\ and\ \bibinfo {author} {\bibfnamefont {A.}~\bibnamefont
  {Kitaev}},\ }\href@noop {} {\bibfield  {journal} {\bibinfo  {journal} {Phys.
  Rev. A}\ }\textbf {\bibinfo {volume} {71}},\ \bibinfo {pages} {22316}
  (\bibinfo {year} {2005})}\BibitemShut {NoStop}%
\bibitem [{\citenamefont {Bombin}(2016)}]{bombin:2016:dimensional}%
  \BibitemOpen
  \bibfield  {author} {\bibinfo {author} {\bibfnamefont {H.}~\bibnamefont
  {Bombin}},\ }\href@noop {} {\bibfield  {journal} {\bibinfo  {journal} {New J.
  Phys.}\ }\textbf {\bibinfo {volume} {18}},\ \bibinfo {pages} {043038}
  (\bibinfo {year} {2016})}\BibitemShut {NoStop}%
\bibitem [{\citenamefont {Raussendorf}\ and\ \citenamefont
  {Briegel}(2001)}]{raussendorf:2001:one}%
  \BibitemOpen
  \bibfield  {author} {\bibinfo {author} {\bibfnamefont {R.}~\bibnamefont
  {Raussendorf}}\ and\ \bibinfo {author} {\bibfnamefont {H.~J.}\ \bibnamefont
  {Briegel}},\ }\href@noop {} {\bibfield  {journal} {\bibinfo  {journal}
  {Physical Review Letters}\ }\textbf {\bibinfo {volume} {86}},\ \bibinfo
  {pages} {5188} (\bibinfo {year} {2001})}\BibitemShut {NoStop}%
\bibitem [{\citenamefont {Rudolph}(2017)}]{rudolph:2017:optimistic}%
  \BibitemOpen
  \bibfield  {author} {\bibinfo {author} {\bibfnamefont {T.}~\bibnamefont
  {Rudolph}},\ }\href@noop {} {\bibfield  {journal} {\bibinfo  {journal} {APL
  Photonics}\ }\textbf {\bibinfo {volume} {2}},\ \bibinfo {pages} {030901}
  (\bibinfo {year} {2017})}\BibitemShut {NoStop}%
\bibitem [{\citenamefont {Kitaev}(1997)}]{kitaev:1997:quantum}%
  \BibitemOpen
  \bibfield  {author} {\bibinfo {author} {\bibfnamefont {A.}~\bibnamefont
  {Kitaev}},\ }\href@noop {} {\bibfield  {journal} {\bibinfo  {journal} {Russ.
  Math. Surv.}\ }\textbf {\bibinfo {volume} {52}},\ \bibinfo {pages} {1191}
  (\bibinfo {year} {1997})}\BibitemShut {NoStop}%
\bibitem [{\citenamefont {Gottesman}(1996)}]{gottesman:1996:stabilizer}%
  \BibitemOpen
  \bibfield  {author} {\bibinfo {author} {\bibfnamefont {D.}~\bibnamefont
  {Gottesman}},\ }\href@noop {} {\bibfield  {journal} {\bibinfo  {journal}
  {Phys. Rev. A}\ }\textbf {\bibinfo {volume} {54}},\ \bibinfo {pages} {1862}
  (\bibinfo {year} {1996})}\BibitemShut {NoStop}%
\bibitem [{\citenamefont {Dennis}\ \emph {et~al.}(2002)\citenamefont {Dennis},
  \citenamefont {Kitaev}, \citenamefont {Landahl},\ and\ \citenamefont
  {Preskill}}]{dennis:2002:tqm}%
  \BibitemOpen
  \bibfield  {author} {\bibinfo {author} {\bibfnamefont {E.}~\bibnamefont
  {Dennis}}, \bibinfo {author} {\bibfnamefont {A.}~\bibnamefont {Kitaev}},
  \bibinfo {author} {\bibfnamefont {A.}~\bibnamefont {Landahl}}, \ and\
  \bibinfo {author} {\bibfnamefont {J.}~\bibnamefont {Preskill}},\ }\href@noop
  {} {\bibfield  {journal} {\bibinfo  {journal} {J. Math. Phys.}\ }\textbf
  {\bibinfo {volume} {43}},\ \bibinfo {pages} {4452} (\bibinfo {year}
  {2002})}\BibitemShut {NoStop}%
\bibitem [{\citenamefont {Bombin}(2013)}]{bombin:2013:topological}%
  \BibitemOpen
  \bibfield  {author} {\bibinfo {author} {\bibfnamefont {H.}~\bibnamefont
  {Bombin}},\ }in\ \href@noop {} {\emph {\bibinfo {booktitle} {Quantum Error
  Correction}}},\ \bibinfo {editor} {edited by\ \bibinfo {editor}
  {\bibfnamefont {D.}~\bibnamefont {Lidar}}\ and\ \bibinfo {editor}
  {\bibfnamefont {T.}~\bibnamefont {Brun}}}\ (\bibinfo  {publisher} {Cambridge
  University Press},\ \bibinfo {address} {New York},\ \bibinfo {year}
  {2013})\BibitemShut {NoStop}%
\bibitem [{\citenamefont {Raussendorf}\ \emph {et~al.}(2007)\citenamefont
  {Raussendorf}, \citenamefont {Harrington},\ and\ \citenamefont
  {Goyal}}]{raussendorf:2007:deformation}%
  \BibitemOpen
  \bibfield  {author} {\bibinfo {author} {\bibfnamefont {R.}~\bibnamefont
  {Raussendorf}}, \bibinfo {author} {\bibfnamefont {J.}~\bibnamefont
  {Harrington}}, \ and\ \bibinfo {author} {\bibfnamefont {K.}~\bibnamefont
  {Goyal}},\ }\href@noop {} {\bibfield  {journal} {\bibinfo  {journal} {New J.
  Phys.}\ }\textbf {\bibinfo {volume} {9}},\ \bibinfo {pages} {199} (\bibinfo
  {year} {2007})}\BibitemShut {NoStop}%
\bibitem [{\citenamefont {Bombin}\ and\ \citenamefont
  {Martin-Delgado}(2009)}]{bombin:2009:deformation}%
  \BibitemOpen
  \bibfield  {author} {\bibinfo {author} {\bibfnamefont {H.}~\bibnamefont
  {Bombin}}\ and\ \bibinfo {author} {\bibfnamefont {M.}~\bibnamefont
  {Martin-Delgado}},\ }\href@noop {} {\bibfield  {journal} {\bibinfo  {journal}
  {J. Phys. A: Math. and Theor.}\ }\textbf {\bibinfo {volume} {42}},\ \bibinfo
  {pages} {95302} (\bibinfo {year} {2009})}\BibitemShut {NoStop}%
\bibitem [{\citenamefont {Bombin}(2010)}]{bombin:2010:twist}%
  \BibitemOpen
  \bibfield  {author} {\bibinfo {author} {\bibfnamefont {H.}~\bibnamefont
  {Bombin}},\ }\href@noop {} {\bibfield  {journal} {\bibinfo  {journal} {Phys.
  Rev. Lett.}\ }\textbf {\bibinfo {volume} {105}},\ \bibinfo {pages} {30403}
  (\bibinfo {year} {2010})}\BibitemShut {NoStop}%
\bibitem [{\citenamefont {Horsman}\ \emph {et~al.}(2012)\citenamefont
  {Horsman}, \citenamefont {Fowler}, \citenamefont {Devitt},\ and\
  \citenamefont {Van~Meter}}]{horsman:2012:surface}%
  \BibitemOpen
  \bibfield  {author} {\bibinfo {author} {\bibfnamefont {C.}~\bibnamefont
  {Horsman}}, \bibinfo {author} {\bibfnamefont {A.}~\bibnamefont {Fowler}},
  \bibinfo {author} {\bibfnamefont {S.}~\bibnamefont {Devitt}}, \ and\ \bibinfo
  {author} {\bibfnamefont {R.}~\bibnamefont {Van~Meter}},\ }\href@noop {}
  {\bibfield  {journal} {\bibinfo  {journal} {New J. Phys.}\ }\textbf {\bibinfo
  {volume} {14}},\ \bibinfo {pages} {123011} (\bibinfo {year}
  {2012})}\BibitemShut {NoStop}%
\bibitem [{\citenamefont {Landahl}\ and\ \citenamefont
  {Ryan-Anderson}(2014)}]{landahl:2014:quantum}%
  \BibitemOpen
  \bibfield  {author} {\bibinfo {author} {\bibfnamefont {A.}~\bibnamefont
  {Landahl}}\ and\ \bibinfo {author} {\bibfnamefont {C.}~\bibnamefont
  {Ryan-Anderson}},\ }\href@noop {} {\bibfield  {journal} {\bibinfo  {journal}
  {arXiv preprint arXiv:1407.5103}\ } (\bibinfo {year} {2014})}\BibitemShut
  {NoStop}%
\bibitem [{\citenamefont {Yoder}\ and\ \citenamefont
  {Kim}(2017)}]{yoder:2017:surface}%
  \BibitemOpen
  \bibfield  {author} {\bibinfo {author} {\bibfnamefont {T.}~\bibnamefont
  {Yoder}}\ and\ \bibinfo {author} {\bibfnamefont {I.}~\bibnamefont {Kim}},\
  }\href@noop {} {\bibfield  {journal} {\bibinfo  {journal} {Quantum}\ }\textbf
  {\bibinfo {volume} {1}},\ \bibinfo {pages} {2} (\bibinfo {year}
  {2017})}\BibitemShut {NoStop}%
\bibitem [{\citenamefont {Bravyi}\ and\ \citenamefont
  {K{\"o}nig}(2013)}]{bravyi:2013:classification}%
  \BibitemOpen
  \bibfield  {author} {\bibinfo {author} {\bibfnamefont {S.}~\bibnamefont
  {Bravyi}}\ and\ \bibinfo {author} {\bibfnamefont {R.}~\bibnamefont
  {K{\"o}nig}},\ }\href@noop {} {\bibfield  {journal} {\bibinfo  {journal}
  {Phys. Rev. Lett.}\ }\textbf {\bibinfo {volume} {110}},\ \bibinfo {pages}
  {170503} (\bibinfo {year} {2013})}\BibitemShut {NoStop}%
\bibitem [{\citenamefont {Bombin}\ \emph {et~al.}(2013)\citenamefont {Bombin},
  \citenamefont {Chhajlany}, \citenamefont {Horodecki},\ and\ \citenamefont
  {Martin-Delgado}}]{bombin:2013:self}%
  \BibitemOpen
  \bibfield  {author} {\bibinfo {author} {\bibfnamefont {H.}~\bibnamefont
  {Bombin}}, \bibinfo {author} {\bibfnamefont {R.~W.}\ \bibnamefont
  {Chhajlany}}, \bibinfo {author} {\bibfnamefont {M.}~\bibnamefont
  {Horodecki}}, \ and\ \bibinfo {author} {\bibfnamefont {M.}~\bibnamefont
  {Martin-Delgado}},\ }\href@noop {} {\bibfield  {journal} {\bibinfo  {journal}
  {New J. Phys.}\ }\textbf {\bibinfo {volume} {15}},\ \bibinfo {pages} {55023}
  (\bibinfo {year} {2013})}\BibitemShut {NoStop}%
\bibitem [{\citenamefont {Bombin}\ and\ \citenamefont
  {Martin-Delgado}(2006)}]{bombin:2006:2dcc}%
  \BibitemOpen
  \bibfield  {author} {\bibinfo {author} {\bibfnamefont {H.}~\bibnamefont
  {Bombin}}\ and\ \bibinfo {author} {\bibfnamefont {M.}~\bibnamefont
  {Martin-Delgado}},\ }\href@noop {} {\bibfield  {journal} {\bibinfo  {journal}
  {Phys. Rev. Lett.}\ }\textbf {\bibinfo {volume} {97}},\ \bibinfo {pages}
  {180501} (\bibinfo {year} {2006})}\BibitemShut {NoStop}%
\bibitem [{\citenamefont {Bombin}\ and\ \citenamefont
  {Martin-Delgado}(2007)}]{bombin:2007:3dcc}%
  \BibitemOpen
  \bibfield  {author} {\bibinfo {author} {\bibfnamefont {H.}~\bibnamefont
  {Bombin}}\ and\ \bibinfo {author} {\bibfnamefont {M.}~\bibnamefont
  {Martin-Delgado}},\ }\href@noop {} {\bibfield  {journal} {\bibinfo  {journal}
  {Phys. Rev. Lett.}\ }\textbf {\bibinfo {volume} {98}},\ \bibinfo {pages}
  {160502} (\bibinfo {year} {2007})}\BibitemShut {NoStop}%
\bibitem [{\citenamefont {Bombin}(2015{\natexlab{a}})}]{bombin:2015:gauge}%
  \BibitemOpen
  \bibfield  {author} {\bibinfo {author} {\bibfnamefont {H.}~\bibnamefont
  {Bombin}},\ }\href@noop {} {\bibfield  {journal} {\bibinfo  {journal} {New J.
  Phys.}\ }\textbf {\bibinfo {volume} {17}},\ \bibinfo {pages} {083002}
  (\bibinfo {year} {2015}{\natexlab{a}})}\BibitemShut {NoStop}%
\bibitem [{\citenamefont {Bombin}()}]{bombin:2018:transversal}%
  \BibitemOpen
  \bibfield  {author} {\bibinfo {author} {\bibfnamefont {H.}~\bibnamefont
  {Bombin}},\ }\href@noop {} {\enquote {\bibinfo {title} {Local gates in 3d
  topological codes},}\ }\bibinfo {note} {ArXiv:
  quant-ph/1810.?????}\BibitemShut {Stop}%
\bibitem [{\citenamefont
  {Bombin}(2015{\natexlab{b}})}]{bombin:2015:single-shot}%
  \BibitemOpen
  \bibfield  {author} {\bibinfo {author} {\bibfnamefont {H.}~\bibnamefont
  {Bombin}},\ }\href@noop {} {\bibfield  {journal} {\bibinfo  {journal} {Phys.
  Rev. X}\ }\textbf {\bibinfo {volume} {5}},\ \bibinfo {pages} {031043}
  (\bibinfo {year} {2015}{\natexlab{b}})}\BibitemShut {NoStop}%
\bibitem [{\citenamefont {Campbell}(2018)}]{campbell:2018:theory}%
  \BibitemOpen
  \bibfield  {author} {\bibinfo {author} {\bibfnamefont {E.}~\bibnamefont
  {Campbell}},\ }\href@noop {} {\bibfield  {journal} {\bibinfo  {journal}
  {arXiv preprint arXiv:1805.09271}\ } (\bibinfo {year} {2018})}\BibitemShut
  {NoStop}%
\bibitem [{\citenamefont {Raussendorf}\ \emph {et~al.}(2005)\citenamefont
  {Raussendorf}, \citenamefont {Bravyi},\ and\ \citenamefont
  {Harrington}}]{raussendorf:2005:long}%
  \BibitemOpen
  \bibfield  {author} {\bibinfo {author} {\bibfnamefont {R.}~\bibnamefont
  {Raussendorf}}, \bibinfo {author} {\bibfnamefont {S.}~\bibnamefont {Bravyi}},
  \ and\ \bibinfo {author} {\bibfnamefont {J.}~\bibnamefont {Harrington}},\
  }\href@noop {} {\bibfield  {journal} {\bibinfo  {journal} {Physical Review
  A}\ }\textbf {\bibinfo {volume} {71}},\ \bibinfo {pages} {062313} (\bibinfo
  {year} {2005})}\BibitemShut {NoStop}%
\bibitem [{\citenamefont {Raussendorf}\ \emph {et~al.}(2006)\citenamefont
  {Raussendorf}, \citenamefont {J.},\ and\ \citenamefont
  {K.}}]{raussendorf:2006:ftoneway}%
  \BibitemOpen
  \bibfield  {author} {\bibinfo {author} {\bibfnamefont {R.}~\bibnamefont
  {Raussendorf}}, \bibinfo {author} {\bibfnamefont {H.}~\bibnamefont {J.}}, \
  and\ \bibinfo {author} {\bibfnamefont {G.}~\bibnamefont {K.}},\ }\href@noop
  {} {\bibfield  {journal} {\bibinfo  {journal} {Ann. Phys.}\ }\textbf
  {\bibinfo {volume} {321}},\ \bibinfo {pages} {2242} (\bibinfo {year}
  {2006})}\BibitemShut {NoStop}%
\bibitem [{\citenamefont {Raussendorf}\ and\ \citenamefont
  {Harrington}(2007)}]{raussendorf:2007:fault}%
  \BibitemOpen
  \bibfield  {author} {\bibinfo {author} {\bibfnamefont {R.}~\bibnamefont
  {Raussendorf}}\ and\ \bibinfo {author} {\bibfnamefont {J.}~\bibnamefont
  {Harrington}},\ }\href@noop {} {\bibfield  {journal} {\bibinfo  {journal}
  {Physical review letters}\ }\textbf {\bibinfo {volume} {98}},\ \bibinfo
  {pages} {190504} (\bibinfo {year} {2007})}\BibitemShut {NoStop}%
\bibitem [{\citenamefont {Knill}\ \emph {et~al.}(1996)\citenamefont {Knill},
  \citenamefont {Laflamme},\ and\ \citenamefont
  {Zurek}}]{knill:1996:threshold}%
  \BibitemOpen
  \bibfield  {author} {\bibinfo {author} {\bibfnamefont {E.}~\bibnamefont
  {Knill}}, \bibinfo {author} {\bibfnamefont {R.}~\bibnamefont {Laflamme}}, \
  and\ \bibinfo {author} {\bibfnamefont {W.}~\bibnamefont {Zurek}},\
  }\href@noop {} {\bibfield  {journal} {\bibinfo  {journal} {arXiv:
  quant-ph/9610011}\ } (\bibinfo {year} {1996})}\BibitemShut {NoStop}%
\bibitem [{\citenamefont {Eastin}\ and\ \citenamefont
  {Knill}(2009)}]{eastin:2009:restrictions}%
  \BibitemOpen
  \bibfield  {author} {\bibinfo {author} {\bibfnamefont {B.}~\bibnamefont
  {Eastin}}\ and\ \bibinfo {author} {\bibfnamefont {E.}~\bibnamefont {Knill}},\
  }\href@noop {} {\bibfield  {journal} {\bibinfo  {journal} {Phys. Rev. Lett.}\
  }\textbf {\bibinfo {volume} {102}},\ \bibinfo {pages} {110502} (\bibinfo
  {year} {2009})}\BibitemShut {NoStop}%
\bibitem [{\citenamefont {Bravyi}\ and\ \citenamefont
  {Cross}(2015)}]{bravyi:2015:doubled}%
  \BibitemOpen
  \bibfield  {author} {\bibinfo {author} {\bibfnamefont {S.}~\bibnamefont
  {Bravyi}}\ and\ \bibinfo {author} {\bibfnamefont {A.}~\bibnamefont {Cross}},\
  }\href@noop {} {\bibfield  {journal} {\bibinfo  {journal} {arXiv:1509.03239}\
  } (\bibinfo {year} {2015})}\BibitemShut {NoStop}%
\bibitem [{\citenamefont {Jochym-O'Connor}\ and\ \citenamefont
  {Bartlett}(2015)}]{jochym:2015:stacked}%
  \BibitemOpen
  \bibfield  {author} {\bibinfo {author} {\bibfnamefont {T.}~\bibnamefont
  {Jochym-O'Connor}}\ and\ \bibinfo {author} {\bibfnamefont {S.~D.}\
  \bibnamefont {Bartlett}},\ }\href@noop {} {\bibfield  {journal} {\bibinfo
  {journal} {arXiv:1509.04255}\ } (\bibinfo {year} {2015})}\BibitemShut
  {NoStop}%
\bibitem [{\citenamefont {Calderbank}\ and\ \citenamefont
  {Shor}(1996)}]{calderbank:1996:good}%
  \BibitemOpen
  \bibfield  {author} {\bibinfo {author} {\bibfnamefont {A.}~\bibnamefont
  {Calderbank}}\ and\ \bibinfo {author} {\bibfnamefont {P.~W.}\ \bibnamefont
  {Shor}},\ }\href@noop {} {\bibfield  {journal} {\bibinfo  {journal} {Phys.
  Rev. A}\ }\textbf {\bibinfo {volume} {54}},\ \bibinfo {pages} {1098}
  (\bibinfo {year} {1996})}\BibitemShut {NoStop}%
\bibitem [{\citenamefont {Steane}(1996)}]{steane:1996:multiple}%
  \BibitemOpen
  \bibfield  {author} {\bibinfo {author} {\bibfnamefont {A.}~\bibnamefont
  {Steane}},\ }\href@noop {} {\bibfield  {journal} {\bibinfo  {journal} {Proc.
  R. Soc. London, Ser. A}\ }\textbf {\bibinfo {volume} {452}},\ \bibinfo
  {pages} {2551} (\bibinfo {year} {1996})}\BibitemShut {NoStop}%
\bibitem [{\citenamefont {Gross}\ and\ \citenamefont
  {Eisert}(2007)}]{gross:2007:novel}%
  \BibitemOpen
  \bibfield  {author} {\bibinfo {author} {\bibfnamefont {D.}~\bibnamefont
  {Gross}}\ and\ \bibinfo {author} {\bibfnamefont {J.}~\bibnamefont {Eisert}},\
  }\href@noop {} {\bibfield  {journal} {\bibinfo  {journal} {Physical review
  letters}\ }\textbf {\bibinfo {volume} {98}},\ \bibinfo {pages} {220503}
  (\bibinfo {year} {2007})}\BibitemShut {NoStop}%
\bibitem [{\citenamefont {Chen}\ \emph {et~al.}(2010)\citenamefont {Chen},
  \citenamefont {Gu},\ and\ \citenamefont {Wen}}]{chen:2010:local}%
  \BibitemOpen
  \bibfield  {author} {\bibinfo {author} {\bibfnamefont {X.}~\bibnamefont
  {Chen}}, \bibinfo {author} {\bibfnamefont {Z.-C.}\ \bibnamefont {Gu}}, \ and\
  \bibinfo {author} {\bibfnamefont {X.-G.}\ \bibnamefont {Wen}},\ }\href@noop
  {} {\bibfield  {journal} {\bibinfo  {journal} {Physical review b}\ }\textbf
  {\bibinfo {volume} {82}},\ \bibinfo {pages} {155138} (\bibinfo {year}
  {2010})}\BibitemShut {NoStop}%
\bibitem [{\citenamefont {Gottesman}\ and\ \citenamefont
  {Chuang}(1999)}]{gottesman:1999:demonstrating}%
  \BibitemOpen
  \bibfield  {author} {\bibinfo {author} {\bibfnamefont {D.}~\bibnamefont
  {Gottesman}}\ and\ \bibinfo {author} {\bibfnamefont {I.}~\bibnamefont
  {Chuang}},\ }\href@noop {} {\bibfield  {journal} {\bibinfo  {journal}
  {Nature}\ }\textbf {\bibinfo {volume} {402}},\ \bibinfo {pages} {390}
  (\bibinfo {year} {1999})}\BibitemShut {NoStop}%
\bibitem [{\citenamefont {Duclos-Cianci}\ and\ \citenamefont
  {Poulin}(2010)}]{duclos:2010:fast}%
  \BibitemOpen
  \bibfield  {author} {\bibinfo {author} {\bibfnamefont {G.}~\bibnamefont
  {Duclos-Cianci}}\ and\ \bibinfo {author} {\bibfnamefont {D.}~\bibnamefont
  {Poulin}},\ }\href@noop {} {\bibfield  {journal} {\bibinfo  {journal} {Phys.
  Rev. Lett.}\ }\textbf {\bibinfo {volume} {104}},\ \bibinfo {pages} {50504}
  (\bibinfo {year} {2010})}\BibitemShut {NoStop}%
\bibitem [{\citenamefont {Sarvepalli}\ and\ \citenamefont
  {Raussendorf}(2012)}]{sarvepalli:2012:efficient}%
  \BibitemOpen
  \bibfield  {author} {\bibinfo {author} {\bibfnamefont {P.}~\bibnamefont
  {Sarvepalli}}\ and\ \bibinfo {author} {\bibfnamefont {R.}~\bibnamefont
  {Raussendorf}},\ }\href@noop {} {\bibfield  {journal} {\bibinfo  {journal}
  {Physical Review A}\ }\textbf {\bibinfo {volume} {85}},\ \bibinfo {pages}
  {022317} (\bibinfo {year} {2012})}\BibitemShut {NoStop}%
\bibitem [{\citenamefont {Duclos-Cianci}\ and\ \citenamefont
  {Poulin}(2013)}]{duclos:2013:fault}%
  \BibitemOpen
  \bibfield  {author} {\bibinfo {author} {\bibfnamefont {G.}~\bibnamefont
  {Duclos-Cianci}}\ and\ \bibinfo {author} {\bibfnamefont {D.}~\bibnamefont
  {Poulin}},\ }\href@noop {} {\bibfield  {journal} {\bibinfo  {journal} {arXiv
  preprint arXiv:1304.6100}\ } (\bibinfo {year} {2013})}\BibitemShut {NoStop}%
\bibitem [{\citenamefont {Herold}\ \emph {et~al.}(2015)\citenamefont {Herold},
  \citenamefont {Campbell}, \citenamefont {Eisert},\ and\ \citenamefont
  {Kastoryano}}]{herold:2015:cellular}%
  \BibitemOpen
  \bibfield  {author} {\bibinfo {author} {\bibfnamefont {M.}~\bibnamefont
  {Herold}}, \bibinfo {author} {\bibfnamefont {E.}~\bibnamefont {Campbell}},
  \bibinfo {author} {\bibfnamefont {J.}~\bibnamefont {Eisert}}, \ and\ \bibinfo
  {author} {\bibfnamefont {M.}~\bibnamefont {Kastoryano}},\ }\href@noop {}
  {\bibfield  {journal} {\bibinfo  {journal} {npj Quantum Information}\
  }\textbf {\bibinfo {volume} {1}},\ \bibinfo {pages} {15010} (\bibinfo {year}
  {2015})}\BibitemShut {NoStop}%
\bibitem [{\citenamefont {Herold}\ \emph {et~al.}(2017)\citenamefont {Herold},
  \citenamefont {Kastoryano}, \citenamefont {Campbell},\ and\ \citenamefont
  {Eisert}}]{herold:2017:cellular}%
  \BibitemOpen
  \bibfield  {author} {\bibinfo {author} {\bibfnamefont {M.}~\bibnamefont
  {Herold}}, \bibinfo {author} {\bibfnamefont {M.}~\bibnamefont {Kastoryano}},
  \bibinfo {author} {\bibfnamefont {E.}~\bibnamefont {Campbell}}, \ and\
  \bibinfo {author} {\bibfnamefont {J.}~\bibnamefont {Eisert}},\ }\href@noop {}
  {\bibfield  {journal} {\bibinfo  {journal} {New Journal of Physics}\ }\textbf
  {\bibinfo {volume} {19}},\ \bibinfo {pages} {063012} (\bibinfo {year}
  {2017})}\BibitemShut {NoStop}%
\bibitem [{\citenamefont {Dauphinais}\ and\ \citenamefont
  {Poulin}(2017)}]{dauphinais:2017:fault}%
  \BibitemOpen
  \bibfield  {author} {\bibinfo {author} {\bibfnamefont {G.}~\bibnamefont
  {Dauphinais}}\ and\ \bibinfo {author} {\bibfnamefont {D.}~\bibnamefont
  {Poulin}},\ }\href@noop {} {\bibfield  {journal} {\bibinfo  {journal}
  {Communications in Mathematical Physics}\ }\textbf {\bibinfo {volume}
  {355}},\ \bibinfo {pages} {519} (\bibinfo {year} {2017})}\BibitemShut
  {NoStop}%
\bibitem [{\citenamefont {Kubica}\ and\ \citenamefont
  {Preskill}(2010)}]{kubica:2018:cellular}%
  \BibitemOpen
  \bibfield  {author} {\bibinfo {author} {\bibfnamefont {A.}~\bibnamefont
  {Kubica}}\ and\ \bibinfo {author} {\bibfnamefont {J.}~\bibnamefont
  {Preskill}},\ }\href@noop {} {\bibfield  {journal} {\bibinfo  {journal}
  {arXiv:1002.4930}\ } (\bibinfo {year} {2010})}\BibitemShut {NoStop}%
\bibitem [{\citenamefont {Kubica}\ \emph {et~al.}(2015)\citenamefont {Kubica},
  \citenamefont {Yoshida},\ and\ \citenamefont
  {Pastawski}}]{kubica:2015:unfolding}%
  \BibitemOpen
  \bibfield  {author} {\bibinfo {author} {\bibfnamefont {A.}~\bibnamefont
  {Kubica}}, \bibinfo {author} {\bibfnamefont {B.}~\bibnamefont {Yoshida}}, \
  and\ \bibinfo {author} {\bibfnamefont {F.}~\bibnamefont {Pastawski}},\
  }\href@noop {} {\bibfield  {journal} {\bibinfo  {journal} {New Journal of
  Physics}\ }\textbf {\bibinfo {volume} {17}},\ \bibinfo {pages} {083026}
  (\bibinfo {year} {2015})}\BibitemShut {NoStop}%
\bibitem [{\citenamefont {Kubica}\ \emph {et~al.}(2018)\citenamefont {Kubica},
  \citenamefont {Beverland}, \citenamefont {Brand{\~a}o}, \citenamefont
  {Preskill},\ and\ \citenamefont {Svore}}]{kubica:2018:three}%
  \BibitemOpen
  \bibfield  {author} {\bibinfo {author} {\bibfnamefont {A.}~\bibnamefont
  {Kubica}}, \bibinfo {author} {\bibfnamefont {M.}~\bibnamefont {Beverland}},
  \bibinfo {author} {\bibfnamefont {F.}~\bibnamefont {Brand{\~a}o}}, \bibinfo
  {author} {\bibfnamefont {J.}~\bibnamefont {Preskill}}, \ and\ \bibinfo
  {author} {\bibfnamefont {K.}~\bibnamefont {Svore}},\ }\href@noop {}
  {\bibfield  {journal} {\bibinfo  {journal} {Physical review letters}\
  }\textbf {\bibinfo {volume} {120}},\ \bibinfo {pages} {180501} (\bibinfo
  {year} {2018})}\BibitemShut {NoStop}%
\bibitem [{\citenamefont {Brown}\ \emph
  {et~al.}(2016{\natexlab{b}})\citenamefont {Brown}, \citenamefont
  {Nickerson},\ and\ \citenamefont {Browne}}]{brown:2016:fault}%
  \BibitemOpen
  \bibfield  {author} {\bibinfo {author} {\bibfnamefont {B.}~\bibnamefont
  {Brown}}, \bibinfo {author} {\bibfnamefont {N.}~\bibnamefont {Nickerson}}, \
  and\ \bibinfo {author} {\bibfnamefont {D.}~\bibnamefont {Browne}},\
  }\href@noop {} {\bibfield  {journal} {\bibinfo  {journal} {Nat. Comm.}\
  }\textbf {\bibinfo {volume} {7}},\ \bibinfo {pages} {12302} (\bibinfo {year}
  {2016}{\natexlab{b}})}\BibitemShut {NoStop}%
\bibitem [{\citenamefont {Bolt}\ \emph {et~al.}(2016)\citenamefont {Bolt},
  \citenamefont {Duclos-Cianci}, \citenamefont {Poulin},\ and\ \citenamefont
  {Stace}}]{bolt:2016:foliated}%
  \BibitemOpen
  \bibfield  {author} {\bibinfo {author} {\bibfnamefont {A.}~\bibnamefont
  {Bolt}}, \bibinfo {author} {\bibfnamefont {G.}~\bibnamefont {Duclos-Cianci}},
  \bibinfo {author} {\bibfnamefont {D.}~\bibnamefont {Poulin}}, \ and\ \bibinfo
  {author} {\bibfnamefont {T.}~\bibnamefont {Stace}},\ }\href@noop {}
  {\bibfield  {journal} {\bibinfo  {journal} {Physical review letters}\
  }\textbf {\bibinfo {volume} {117}},\ \bibinfo {pages} {070501} (\bibinfo
  {year} {2016})}\BibitemShut {NoStop}%
\bibitem [{\citenamefont {Nickerson}\ and\ \citenamefont
  {Bombin}()}]{nickerson:2018:measurement}%
  \BibitemOpen
  \bibfield  {author} {\bibinfo {author} {\bibfnamefont {N.}~\bibnamefont
  {Nickerson}}\ and\ \bibinfo {author} {\bibfnamefont {H.}~\bibnamefont
  {Bombin}},\ }\href@noop {} {\enquote {\bibinfo {title} {Measurement based
  fault tolerance beyond foliation},}\ }\bibinfo {note} {ArXiv:
  quant-ph/1810.?????}\BibitemShut {Stop}%
\end{thebibliography}%

\end{document}